\DeclareFontFamily{OT1}{pzc}{}
\DeclareFontShape{OT1}{pzc}{m}{it}{<-> s * [1.35] pzcmi7t}{}
\DeclareMathAlphabet{\mathpzc}{OT1}{pzc}{m}{it}
\algnewcommand\REQUIRED{\item[\textbf{Required:}]}%
\algnewcommand\INPUT{\item[\textbf{Input:}]}%
\algnewcommand\OUTPUT{\item[\textbf{Output:}]}%
\newtheorem{definition}{Definition}
\def\bX{{\bf X}}
\def\bg{{\bf g}}
\def\bX{{\bf X}}
\def\bn{{\bf n}}
\def\bz{{\bf z}}
\def\by{{\bf y}}
\def\bY{{\bf Y}}
\def\bX{{\bf X}}
\def\bh{{\bf h}}
\def\bb{{\bf b}}
\def\bS{{\bf S}}
\def\bv{{\bf v}}
\def\bB{{\bf B}}
\def\bI{{\bf I}}
\def\bv{{\bf v}}
\def\bB{{\bf B}}
\def\ba{{\bf a}}
\def\bs{{\bf s}}
\def\bx{{\bf x}}
\def\balpha{{\boldsymbol \alpha}}
\def\btheta{{\boldsymbol \theta}}
\def\brho{{\boldsymbol \rho}}
\def\b0{{\bf 0}}
\DeclareMathAlphabet\mathbfcal{OMS}{cmsy}{b}{n}
\DeclareMathAlphabet\mathbfbb{OMS}{cmsy}{b}{n}
\DeclareMathAlphabet\mathbfcal{OMS}{cmsy}{b}{n}
\newcommand{\refappendix}[1]{\hyperref[#1]{Appendix~\ref*{#1}}}
\newtheorem{lemma}{Lemma}
\title{\LARGE Deep Generative Models for Downlink Channel Estimation in FDD Massive MIMO Systems}
\author{ Javad Mirzaei$^{\star}$, Shahram ShahbazPanahi$^{\star\dagger}$, Raviraj Adve$^{\star}$, and Navaneetha Gopal$^{\star}$\\

 {\small  $^{\star }$Department of Electrical and Computer Engineering, University of Toronto, Canada\\

 $^{\star\dagger}$Department of Electrical and Computer Engineering, Ontario Tech University, Canada}
}
\begin{document}

\maketitle
\begin{abstract}\label{Abs}
It is well accepted that acquiring downlink channel state information in frequency division duplexing (FDD) massive multiple-input multiple-output (MIMO) systems is challenging because of the large overhead in training and feedback. In this paper, we propose a deep generative model (DGM)-based technique to address this challenge. Exploiting the partial reciprocity of uplink and downlink channels, we first estimate the frequency-independent underlying channel parameters, i.e., the magnitudes of path gains, delays, angles-of-arrivals (AoAs) and angles-of-departures (AoDs), via uplink training, since these parameters are common in both uplink and downlink. Then, the frequency-specific underlying channel parameters, specifically, the phase of each propagation path, are estimated via downlink training using a very short training signal. In the first step, we incorporate the underlying distribution of the channel parameters as a prior into our channel estimation algorithm. We use DGMs to learn this distribution. Simulation results indicate that our proposed DGM-based channel estimation technique outperforms, by a large gap, the conventional channel estimation techniques in practical ranges of signal-to-noise ratio (SNR). In addition, a near-optimal performance is achieved using only few downlink pilot measurements.

\end{abstract}

\section{Introduction}
\subsection{Motivation}
Massive multiple-input multiple-output (MIMO) is a key technology in helping to meet the demands to be made of the next (fifth) generation of wireless technologies \cite{Marzetta_noncoop, Marzetta_scaleUp}. This technology can be deployed in time-division duplex (TDD) mode, where the uplink and downlink communication occur in the same frequency band but at different time slots, and frequency-division duplex (FDD), where the uplink and downlink operate simultaneously on different frequency bands. To unlock the full potential of massive MIMO in both FDD and TDD, we require an accurate channel state information (CSI) between the base station (BS) and the user equipment (UE).

In TDD massive MIMO systems, CSI acquisition relies on the assumption of channel reciprocity between the uplink and downlink. However, due to calibration errors between the uplink and downlink radio frequency (RF) chains, such a channel reciprocity may not hold \cite{1404883}. Additionally, pilot contamination, caused by the use of non-orthogonal pilot signals in neighboring cells, is another performance-limiting factor in TDD-based systems  \cite{pilot_contam_Jose}. Importantly, realizing the backward compatibility of FDD massive MIMO communication (e.g., with Long-Term Evolution (LTE)) has drawn considerable attention toward FDD from both academia and industry \cite{Larsson6736761}.

In FDD, due to different band of frequencies in the uplink and downlink, the downlink channel is neither the same, nor can it be inferred from the uplink channel without any downlink training. Traditionally, a UE estimates its own downlink channel from the received pilot symbols transmitted from the BS and feeds the estimated downlink channel information back to the BS for the subsequent signal transmission and resource allocation. This approach is practical in current (prior to 5G) generation of networks, where only a few antennas are used at the BS, allowing for orthogonal pilots and a small feedback overhead. However, in massive MIMO systems, due to the large overhead with the use of orthogonal pilots in massive arrays, as well as the huge required feedback, this approach may not be applicable. Therefore, it is an urgent requirement in FDD massive MIMO systems to reduce the amount of pilot symbols needed for downlink channel estimation.

\subsection{Related Work} \label{related_works}

The previous attempts in FDD channel estimating mainly focused on adapting the downlink arrays using the second-order statistics of the uplink measurements\cite{905894, 609909}. In recent years and in the context of massive MIMO, there are attempts aiming at reducing the feedback overhead by exploiting the sparsity in the underlying channel parameters. In particular, considering a parametric channel model, where the  channel is characterized by the parameters of only a few dominant paths, such as gains, direction-of-arrival (DoA), and direction-of-departure (DoD)\cite{7400949}. Given such a model, one may consider array processing techniques (such as SAGE \cite{753729}) in estimating the channel parameters. However, these techniques are mainly based on the EM algorithm which requires the user to know the likelihood function of pilot observations and the channel parameters, which may not be available. Also, none of the array processing techniques \cite{17564,32276} has been adopted in the context of FDD downlink channel estimation. The reason is that these methods usually work under certain assumptions, (for example the number of multipaths to be smaller than the number of transmit and the number of receive antennas) which may not be satisfied in FDD systems~\cite{8481590}. Even employing DoA estimation techniques is challenging. Most
DoA estimation techniques rely on the covariance matrix of the received signal. Constructing (and then
inverting) such covariance matrices takes a lot of computation effort which limits the applicability of these techniques in fast fading scenarios. These techniques also usually require the number of paths be lower than the number of antennas.

Considering millimeter wave (mmWave) frequencies, the channel tends to exhibit sparsity in the angular domain \cite{5454399, AlkhateebHBest}. Leveraging this sparsity, the channel can be reformulated using the channel gains, DoAs, and DoDs. One can recover these underlying channel parameters using compressive sensing (CS) techniques.

In a CS framework, it is assumed that the UE compresses and feeds back the received pilot signal to the BS, where different CS-based techniques can be employed to recover the sparse channel parameters \cite{5454399, 5621984, 7458188, 6816089, 8437163, 8697125, 8648511}. In a multi-user massive MIMO setting, the authors of  \cite{6816089} develop a joint orthogonal matching pursuit (OMP) algorithm to recover the channel parameters at the BS. To further reduce the pilot overhead, block OMP is developed in \cite{7564736} based on a user grouping technique, which relies  on the assumption that the users in  each group have the same channel correlation matrix. The authors of \cite{8298537} proposed a sparse Bayesian learning technique for sparse channel recovery followed by an off-grid refinement. Note that in a practical setting, and due to the limited scattering in propagation environment, different user links tend to share some common scatterers. While each UE channel matrix is sparse in angular domain, they may exhibit a common sparsity pattern. This observation motivated the authors of  \cite{8038934} to propose a variational Bayesian inference based approach for channel estimation, where Gaussian mixture model is exploited to capture the individual sparsity in each channel matrix.  There are works that attempt to improve the performance of CS-based techniques by incorporating a priori information into the recovery algorithm \cite{7390062, 7893736, 7564429}. It is worth mentioning that the CS-based techniques still suffer from some limitations. In particular, they require strong channel sparsity in DFT-basis which is not strictly held in some cases. In fact, we do not even know the basis that yields the most sparse representation. While they also require a large number of pilots, they are often iterative and computationally intensive during decoding, which may lead to long delays.

As an alternative to CS-based techniques, there are techniques relying on the spatial reciprocity between the uplink and downlink channels \cite{8648511, 8697125, 7524027}, operating on close-by carrier frequencies have been proposed. Given the fact that the uplink and downlink communication occur in the same propagation environment, the uplink channel estimates can be used in the estimation of downlink channel. In this way, difficulties in downlink channel estimation is shifted to the BS.

Uplink-downlink reciprocity can be considered in two ways. In one way, one can assume full-reciprocity where the multi-path components of channel (including phase, amplitude, delay, angle of arrival, departure, etc.) are the same in both uplink and downlink \cite{8778685, 9013411, DinaR2F2}. Based on this assumption, the authors of  \cite{DinaR2F2}, proposed to completely eliminate downlink training and feedback in LTE systems. However, there is not enough evidence to confirm such a full-reciprocity assumption in real world data.
Indeed, it is shown via measurements and theoretical investigations that not all channel parameters are reciprocal in uplink and downlink. In particular, there is no reciprocity between the phases of different multi-path components for FDD uplink and downlink channels  \cite{9322570}. Intuitively, this is expected given the sensitivity of the phase to operating frequency.  Alternatively, uplink and downlink channels are only \emph{partially} reciprocal. This implies that downlink training and feedback is inevitable. Partial reciprocity has been leveraged differently in the literature. In \cite{9014542}, the authors consider a spatial domain representation of channel\footnote{The channel is expressed by a small number of multi-path component such as AoA/AoDs and path gains}, and exploit the uplink-downlink angle reciprocity \cite{7727995}, where the AoA/AoD in the uplink and downlink are assumed to be the same. Then, the downlink channel estimation boils down to path gain estimation which can be done within a much lower pilot overhead.

As an alternative approach, deep learning (DL)-based techniques have been considered in recent studies published on channel estimation for FDD-based massive MIMO. Deep learning is a powerful tool to explore the underlying complex structure of data. This type of learning has been widely applied in various wireless communication problems, such as data detection \cite{8052521, 8680715}, channel estimation \cite{8672767}, beamforming \cite{9129762}, and hybrid precoding \cite{9048966, 9121328}, and have shown promising performance compared to conventional techniques. In the context of downlink channel estimation in FDD systems, in \cite{8647328}, the whole downlink CSI is estimated using only the CSI obtained over a small set of BS antennas via linear regression and support vector machines (SVM). This is based on the assumption that the channel among different BS antennas is correlated. In the same line of the work, the authors of \cite{9048929}, proposed a deep neural network (DNN)-based technique to map the channel both in space and frequency. This allows for the prediction of the channel of a set of BS antennas and a frequency band from the observation of different set of antennas and different frequency band. Inspired by the position-to-channel mapping investigated in \cite{9048929}, the authors of \cite{8795533} proposed a complex-valued neural network (named SCNet) to directly map the uplink channel to its downlink counterpart, without requiring any downlink pilot transmission. A similar technique with reduced complexity has been proposed in \cite{9057648}.

\subsection{Contribution and Methodology}
In this paper, we consider a single cell where a BS with massive number of antennas communicates with a UE in FDD mode. Using the fact that the channel matrix over each subcarrier is a function of a smaller set of parameters, namely, the number of propagation paths, the path gains, phases, delays, as well as AoAs and AoDs \cite{7400949}, we estimate these parameters instead of the channel matrix directly. These parameters depend on the physical properties of the propagation environment and on the operating frequencies, and importantly, they are independent of the number of antennas at the BS as well as the number of subcarriers \cite{6515173, 6834753, 8798669, 9034179}. Unlike the conventional techniques, where a long training sequence is transmitted over all antennas and over all subcarriers, we estimate these underlying channel parameters using a short training signal over a much smaller set of antennas and subcarriers.

Motivated by the partial reciprocity of uplink and downlink channels \cite{9322570, 9014542, 7136154}, we use the following steps to estimate the downlink channel: I) we estimate the frequency-independent underlying channel parameters, namely, the magnitudes of path gains, delays, AoAs and AoDs during the uplink training; and II) the frequency-specific underlying channel parameters, i.e., the phase of each propagation path, are estimated via downlink training. Using this strategy, we shift the burden in FDD downlink channel estimation to the BS, which is, anyway, responsible for uplink channel estimation. In the first step, we use the least squares (LS) estimation approach to estimate the frequency-independent parameters. The optimization problem in this step is difficult to solve analytically, mainly due to non-linear and non-convex structure of its objective function. To address this problem, we use \emph{deep generative models (DGMs)}\footnote{In particular, we use the generative adversarial network (GAN)}, to capture the distribution of the underlying parameters, and then use it as a prior to simplify the optimization problem. In the second step, we use the frequency-independent parameters, estimated in the first step, to estimate the frequency-specific parameters via an LS technique. In both steps, the optimization problem is carried out numerically using the gradient descent algorithm. The contributions of this paper are itemized below:
\begin{itemize}
  \item \textbf{Learning the distribution of channel parameters}: The unknown underlying distribution of the channel parameters is some function of the propagation environment, and therefore, it is complex and difficult to obtain analytically. We explain how to use DGMs to learn this distribution. In particular, we use the GAN structure to find a deterministic mapping function (i.e., a generator) that is capable of drawing samples from the underlying distribution of channel parameters, by feeding it with samples from a low-dimensional standard Gaussian distribution.

  \item \textbf{FDD downlink channel estimation}: We present a technique to show  how to exploit the uplink-downlink partial reciprocity, thereby reducing the pilot and feedback overhead in downlink. In addition, the sparsity assumption in channel parameters is relaxed. Instead, using a generator, obtained from a GAN, we incorporate the learned structure of channel parameters as a \emph{prior} into our channel estimation procedure. By doing so, the optimization problem operates in a low-dimensional subspace, whose dimensionality is defined by the generator and, importantly, it is independent of number of received pilots. Therefore, we achieve a significant reduction in computational complexity as well as CSI feedback overhead.
  \item \textbf{Convergence analysis}: We analytically prove the convergence of our estimation by showing that the gradient of the estimation objective function is Lipschitz-continuous, and therefore, the convergence of steepest descent algorithm is guaranteed.

\end{itemize}

The simulation results indicate that our proposed DGM-based channel estimation outperforms the conventional channel estimation technique in practical ranges of signal-to-noise ratio (SNR). This is mainly due to the capabilities of the generator in representing the underlying distribution of the channel parameters. Incorporating this prior knowledge into our channel estimation significantly improves the performance even at low SNR. This indicates how the proposed technique is resilient to the noise level. Additionally, for fixed SNR, we show that the proposed technique yields a near-optimal performance using only few pilot measurements. This can significantly reduce the pilot overhead in FDD massive MIMO systems.

Our work in this paper differs from the CS-based techniques in the sense that we do not assume any sparsity in the underlying channel parameters. Instead, we relax this constraint and consider that the channel parameters have a particular structure which is not necessarily sparse. We capture this structure using a generator, and then, incorporate it as a prior into our channel estimation process to improve the accuracy and reduce the pilot overhead. \label{comp_rev}The DNN-based techniques in~\cite{9048929, 8795533, arnold2019enabling, 9057648} mainly rely on direct channel mapping between the uplink and downlink, without requiring any downlink training. The performance of these techniques is highly affected by the uplink-downlink frequency separation, hardware impairments, shadow fading, and SNR. While we have not shown numerically, \emph{in theory}, our proposed technique can address such deficiencies by incorporating a DGM-learned prior into channel estimation process. Unlike the work in \cite{9154297}, in this paper, we study the channel estimation problem in FDD massive MIMO systems. Furthermore, instead of the channel distribution, we learn the underlying distribution of channel parameters.

\emph{Organization}:
 We introduce the system model, including the received signal model, channel model, and FDD partial reciprocity in Section \ref{Sec:sys_model}. Section \ref{Sec:ch_est} introduces the channel estimation problem, where we formulate the problem followed by our estimation technique. For the sake of the paper being self-contained, DGMs are briefly introduced in Section \ref{Sec:DGMs}. In Section \ref{complexity_and_ident}, we analytically study the convergence, complexity and identifiability of our estimation algorithm. Section \ref{Sec:Sims} presents implementation details and simulation results to illustrate the efficacy of the proposed technique. Section \ref{Sec:conclusions} provides conclusions. Finally, Appendix \ref{AppA} provides some preliminary definitions and lemmas used to prove Lemma \ref{J_lipschitz} in Appendix \ref{lip}.

\emph{Notation}:
We use bold upper and lower-case letters to denote matrices and vectors, respectively. $\mathbb{E}_{\bx}\left[\cdot\right]$ denotes statistical expectation over the random variable $\bx$. $\bz \backsim \mathbb{P}_z(\bz)$ denotes that the random vector $\bz$ follows the
probability distribution (pdf) of $\mathbb{P}_z(\bz)$
. The transpose operation is represented as $(\cdot)^T$; $\bI_N$ denotes an $N \times N$ identity matrix; $\mathbb{R}_+$ denotes the set of non-negative real numbers; $\|\cdot\|$ denotes the norm-2 operation; $|x|$ represents the absolute value of $x$; $\bX\preccurlyeq \bY$ means $\|\bX\| \leq \|\bY\|$; $|\mathcal{I}|$ represents the cardinality of set $\mathcal{I}$; and ${\mathcal{{I}}^c}$ the compliment set of ${\mathcal{{I}}}$; ${\mathcal{\bar{I}}}$ represents a sorted array whose entries are in the set $\mathcal{I}$. We use $\Big[ \bx_i\Big]_{i \in \mathcal{I}}$ to denote a vector whose $j$-th block entry, for $j=1,2, \cdots, |\mathcal{I}|$, is given by $\bx_{j}$, for $j \in {\mathcal{\bar{I}}}$. In a similar way, $\Big[ \bX_i\Big]_{i \in \mathcal{I}}$ denotes a matrix whose $j$-th block entry, for $j=1,2, \cdots, |\mathcal{I}|$, is given by $\bX_{j}$, for $j \in {\mathcal{\bar{I}}}$.

\section{System Model} \label{Sec:sys_model}
We consider a single-cell single-user communication system. The BS is equipped with a uniform linear array (ULA) with $M\gg 1$ antenna elements, while the UE is single-antenna. The communication between the BS and UE is performed in FDD mode. In the uplink, the UE communicates with the BS at frequency $f_{\rm up}$, while in the downlink, the BS communicates with the UE at frequency $f_{\rm dl}$. The frequency difference between $f_{\rm up}$ and $f_{\rm dl}$ is assumed to be relatively small. Both uplink and downlink frequency bands are of bandwidth $B$.
\subsection{Received Signal Model}
We assume that orthogonal frequency division duplex (OFDM) technology is used in both uplink and downlink commutation with $K$ subcarriers. Let $\mathcal{K}_{\rm up}$  denote the set of subcarrier indices used for uplink training. During the uplink training at the $k$-th subcarrier, the UE transmits training symbol $s_k$, $k \in \mathcal{K}_{\rm up}$, where $|s_k|^2=P_T$, and $P_T$ is the transmit power. The received signal at BS over the $k$-th subcarrier is given by
\begin{align} \label{UL_received signal}
\by^{\rm up}_k\triangleq \bh_k^{\rm up}s_k +\bn^{\rm up}_k, \,\,\, k \in \mathcal{K}_{\rm up},
\end{align}
where $\bh_k^{\rm up}$ is an $M \times 1$ uplink channel vector between the BS and the UE and $\bn^{\rm up}_k$ is an $M \times 1$ noise vector at the $k$-th subcarrier that is drawn independently and identically from a complex Gaussian distribution with zero mean and variance $\sigma_{\rm n}^2$.

In the downlink, let $\mathcal{K}_{\rm dl}$ and $\mathcal{M}_{\rm dl}$ denote, respectively, the set of subcarrier indices and the set of antenna elements dedicated for training. We assume that $p$ training symbols are transmitted over the $m$-th antenna ($m \in \mathcal{M}_{\rm dl}$) over the $k$-th subcarrier ($ k \in \mathcal{K}_{\rm dl}$). The received signal at the UE over the $k$-th subcarrier at the $i$-th time slot is given by
\begin{align} \label{DL_received signal_ith}
y^{(i),\rm dl}_k\triangleq \bs^{(i),\rm dl}_k{\bh_k^{\rm dl}} +n^{(i),\rm dl}_k, \,\,\, k \in \mathcal{K}_{\rm dl}, \; i=1,2, \cdots, p,
\end{align}
where $\bh_k^{\rm dl}$ is an $|\mathcal{M}_{\rm dl}| \times 1$ downlink channel vector at the $k$-th subcarrier between the BS and the UE, $\bs^{(i),\rm dl}_k \triangleq \left[ s^{(i),\rm dl}_{k,1}\;\;  s^{(i),\rm dl}_{k,2} \;\; \cdots \;\; s^{(i),\rm dl}_{k,|\mathcal{M}_{\rm dl}|}\right]$ is the $1 \times |\mathcal{M}_{\rm dl}|$ vector of downlink training symbols transmitted at the $i$-th time slot over the $k$-th subcarrier across all antenna elements in the set $\mathcal{M}_{\rm dl}$. We assume that $\|\bs^{(i),\rm dl}_k \|^2= P_T$. $n^{(i),\rm dl}_k$ denotes the noise term at the $i$-th time slot over the $k$-th subcarrier, drawn independently and identically from a complex Gaussian distribution with zero mean and variance $\sigma_{\rm n}^2$. Note that, in general $|\mathcal{M}_{\rm dl}|\leq M$. For the case when $|\mathcal{M}_{\rm dl}|< M$, we assume that the antenna elements in the set  $\mathcal{M}_{\rm dl}^c$ do not transmit during the downlink training.

Collecting the received signal across all $p$ training time slots over the $k$-th subcarrier, we can write
\begin{align} \label{DL_received signal}
\by^{\rm dl}_k\triangleq \bS^{\rm dl}_k{\bh_k^{\rm dl}} +\bn^{\rm dl}_k, \,\,\, k \in \mathcal{K}_{\rm dl},
\end{align}
where $\bS^{\rm dl}_k $ is a $p \times |\mathcal{M}_{\rm dl}|$ matrix of downlink training symbols with $\bs^{(i),\rm dl}_k$ on its $i$-th row, $i=1, 2, \cdots, p$, $\by^{\rm dl}_k \triangleq \left[ y^{(1),\rm dl}_k \;\; y^{(2),\rm dl}_k \;\; \cdots \;\; y^{(p),\rm dl}_k\right]^T$, and $\bn^{\rm dl}_k \triangleq \left[ n^{(1),\rm dl}_k\;\; n^{(2),\rm dl}_k\;\; \cdots\;\; n^{(p),\rm dl}_k \right]^T$. Next, we present our channel model.

\subsection{Channel Model}
To characterize the wireless channel between the BS antenna array and the UE, we consider the following geometric channel model. We assume that the propagation channel between the BS and the UE in the uplink consists of $L^{\rm up}$ paths. Through the $l$-th path, the signal travels the distance $d_l$ between the UE and the BS. Also, let $\alpha_l^{\rm up} \in \mathbb{R}_+$, $\phi_l^{\rm up}\in [0, 2\pi]$, $\theta_l^{\rm up}\in [0, 2\pi]$, and $\tau_l^{\rm up} \in \mathbb{R}_+$, for $l=1,2, \ldots, L^{\rm up}$, denote the random path gain, the random phase change, the random azimuth angle of the signal received, and the random delay corresponding to the $l$-th path in the uplink, respectively. Using this notation, the channel response between the UE and the BS at the $k$-th subcarrier is given by \cite{8697125,DeepMIMO1}
\begin{align} \label{UL_ch_model_def}
\bh_k^{\rm up}=\sum_{l=1}^{L^{\rm up}}\alpha_l^{\rm up}e^{j\left(\phi_l^{\rm up} + \frac{2\pi k}{K}\tau_l^{\rm up} B\right)}\ba(\theta_l^{\rm up}, \lambda_k^{\rm up}),\,\,\, k \in \mathcal{K}_{\rm up},
\end{align}
where $\lambda_k^{\rm up}\triangleq \frac{c}{f_c^{\rm up}+kB/K}$ is the wavelength of the $k$-th subcarrier in the uplink, and $c$ is the speed of light \footnote{Note that the carrier phase shift $2 \pi f_c^{\rm up} \tau_l $ is absorbed in the random phase $\phi_l^{\rm up}$.}. Since $kB/K$ is very small compared to $f_c^{\rm up}$, we ignore the subcarrier index in the array response $\ba(\theta_l, \lambda)$.
Denoting $\bar{d}$ as the antenna spacing in the ULA, the array response in \eqref{UL_ch_model_def} is given by
\begin{align} \label{array_resp_def}
\ba(\theta_l, \lambda)\triangleq \left[1\;\; e^{j\frac{2\pi}{\lambda}\bar{d}\sin{\theta_l} }\;\; \cdots\;\; e^{j\frac{2\pi }{\lambda}\bar{d}(M-1)\sin{\theta_l} } \right]^T.
\end{align}

Similarly, the downlink communication channel at the $k$-th subcarrier is given by
\begin{align} \label{DL_ch_model_def}
\bh_k^{\rm dl}=\sum_{l=1}^{L^{\rm dl}} \alpha_l^{\rm dl} e^{j\left(\phi_l^{\rm dl} + \frac{2\pi k}{K}\tau_l^{\rm dl} B\right)}\bb(\theta_l^{\rm dl}, \lambda^{\rm dl}),\,\,\,\, k \in \mathcal{K}_{\rm dl},
\end{align}
where $L^{\rm dl}$ is number of path in the downlink, $\lambda^{\rm dl}\triangleq \frac{\nu}{f_c^{\rm dl}}$ is the wavelength of the downlink carrier frequency, $\alpha_l^{\rm dl} \in \mathbb{R}_+$, $\phi_l^{\rm dl}\in [0, 2\pi]$, $\theta_l^{\rm dl}\in [0, 2\pi]$, and $\tau_l^{\rm dl} \in \mathbb{R}_+$, for $l=1,2, \ldots, L^{\rm dl}$, respectively denote the random path gain, the random phase change, the random azimuth angle of the received signal and the random delay corresponding to the $l$th path in the downlink. $\bb(\theta_l, \lambda)$ is a subvector of $\ba(\theta_l, \lambda)$ where its $i$th entry is $e^{j\frac{2\pi}{\lambda}\bar{d} (\mathcal{M}_{\rm dl}^i -1)\sin{\theta_l} }$, with $\mathcal{M}_{\rm dl}^i$ being the $i$th smallest member of set $\mathcal{M}_{\rm dl}$.

\subsection{FDD Partial Reciprocity}\label{FDD_partial_reciprocity}
In FDD communication, since uplink and downlink communication between the BS and UE occurs over different frequency bands, reciprocity between $\bh_k^{\rm up}$ and $\bh_k^{\rm dl}$ does not hold in general. However, since uplink and downlink channels share the same propagation environment, it has been shown that partial reciprocity exists between uplink and downlink channels \cite{8697125}.

It is observed via measurements, and verified using theoretical analysis that a portion of uplink and downlink channel parameters are frequency-independent \cite{9322570}. Specifically, since the signal of each propagation path travels the same distance at the same speed in both uplink and downlink communication link, the delay of each propagation path is the same in both uplink and downlink, i.e., $\tau_l^{\rm dl}= \tau_l^{\rm up}\triangleq \tau_l$ \cite{9322570}. Furthermore, it is shown, via both the measurement and ray tracing simulations, that the directional angle of each communication path are the same in both uplink and downlink, i.e., $L^{\rm dl}=L^{\rm up}\triangleq L$, $\theta_l^{\rm dl}= \theta_l^{\rm up}= \theta_l$\cite{7136154, 9322570}. It is also shown that, while the gain of each communication path is not exactly the same in both uplink and downlink in general, the downlink path gains are very close to that of uplink. According to Figs. 5 and 6 of \cite{9322570}, the average power delay profile of the uplink and downlink is shown to be approximately the same. Our analysis on the DeepMIMO dataset \cite{DeepMIMO1} also indicates that the \emph{relative} error between $\alpha_l^{\rm dl}$ and $\alpha_l^{\rm up}$ is only $0.8\%$\label{8_percent}. That is why, in this paper, we use the following approximation $\alpha_l^{\rm dl}\approx \alpha_l^{\rm up}= \alpha_l$. On the other hand, the existing measurements and analysis does not provide enough evidence that $\phi_l^{\rm dl}$ and $\phi_l^{\rm up}$ to be the same. This implies that the channel translation requires downlink training.
\section{Channel Estimation} \label{Sec:ch_est}
Leveraging the partial reciprocity of channel parameters in FDD, we develop a strategy for downlink channel estimation as explained in the sequel. We aim to estimate the frequency-independent parameters during the uplink training, while the frequency-specific channel parameters are being estimated via downlink training. To better characterize the channel estimation process, let us define $\balpha\triangleq[\alpha_1\;\; \alpha_2\;\; \ldots\;\; \alpha_L]^T$, $\boldsymbol{\phi}^{\rm up}\triangleq [\phi_1^{\rm up}\;\; \phi_2^{\rm up}\;\; \ldots \;\; \phi_L^{\rm up}]^T$, $\boldsymbol{\phi}^{\rm dl}\triangleq [\phi_1^{\rm dl}\;\; \phi_2^{\rm dl}\;\; \ldots\;\; \phi_L^{\rm dl}]^T$, $\boldsymbol{\tau} \triangleq [\tau_1\;\; \tau_2\;\; \ldots\;\; \tau_L]^T$, and $\btheta=[\theta_1\;\; \theta_2\;\; \ldots\;\; \theta_L]^T$, while the tuples
\begin{align} \label{x_def}
\bx^{\rm up}&\triangleq \left(\bx, {\boldsymbol{\phi}^{\rm up}}\right)\triangleq \left(\balpha, \boldsymbol{\tau}, \btheta, {\boldsymbol{\phi}^{\rm up}}\right),\nonumber\\
\bx^{\rm dl}&\triangleq \left(\bx, {\boldsymbol{\phi}^{\rm dl}}\right)\triangleq \left(\balpha, \boldsymbol{\tau}, \btheta, {\boldsymbol{\phi}^{\rm dl}}\right),
\end{align}
 capture the uplink and downlink channel parameters, respectively. In order to estimate $\bh_k^{\rm dl}$, for $k=1,2,\cdots,K$, we estimate frequency-independent channel parameters, namely $\balpha$, $\boldsymbol{\tau}$, and $\btheta$ using uplink training as these parameters are common in both uplink and downlink channels, while ${\boldsymbol{\phi}}^{\rm dl}$ is estimated during downlink training with a much less training overhead.
 The details are presented in the next two subsections.
\subsection{Uplink Training} \label{uplink_training}
 Here, we aim to estimate $\bx^{\rm up}$ using the uplink training. To do so, we stack the observations across all subcarriers as $\by^{\rm up}\triangleq \Big[ \by^{\rm up}_k\Big]_{k \in \mathcal{K}_{\rm up}}$, and defining $\mathcal{A}(\bx^{\rm up})\triangleq \Big[ \bh_k^{\rm up} s_k\Big]_{k \in \mathcal{K}_{\rm up}}$ as well as $\bn^{\rm up} \triangleq \Big[ \bn^{\rm up}_k\Big]_{k \in \mathcal{K}_{\rm up}}$, the collected received signal in the uplink is expressed as
\begin{align} \label{UL_received signal_non_linear}
\by^{\rm up} = \mathcal{A}(\bx^{\rm up}) +\bn^{\rm up},
\end{align}
where $\mathcal{A}(\bx^{\rm up})$ is a non-linear function of $\bx^{\rm up}$. Using the LS estimation approach, we obtain the estimate of $\bx^{\rm up}$ by solving the following minimization problem:
\begin{align} \label{LS_opt_x_up}
\min_{\balpha, \boldsymbol{\tau}, \btheta, {\boldsymbol{\phi}^{\rm up}}}&J_{\rm up}\left(\balpha, \boldsymbol{\tau}, \btheta, {\boldsymbol{\phi}^{\rm up}}\right)\nonumber\\
{\rm s.t.}&\,\,\,\, \balpha\succcurlyeq 0, \,\,\,\, \boldsymbol{\tau} \succcurlyeq 0\nonumber\\
& \,\,\,\, 0\preccurlyeq \btheta \preccurlyeq 2\pi, \,\,\,\, 0\preccurlyeq {\boldsymbol{\phi}^{\rm up}} \preccurlyeq 2\pi,
\end{align}
where $J_{\rm up}\left(\balpha, \boldsymbol{\tau}, \btheta, {\boldsymbol{\phi}^{\rm up}}\right) \triangleq \|\by^{\rm up}- \mathcal{A}\left(\balpha, \boldsymbol{\tau}, \btheta, {\boldsymbol{\phi}^{\rm up}}\right)\|_2^2$. The optimization problem \eqref{LS_opt_x_up} is difficult to solve analytically due to non-linear structure of the objective function. Even solving \eqref{LS_opt_x_up} numerically (using for example coordinate descent or gradient projection \cite{Kelley_book}) is challenging mainly because of the high-dimensionality of the search space as well as the fact that the objective function is not convex. Therefore, any numerical approach will suffer from the convergence issues. To alleviate such difficulties, instead of solving \eqref{LS_opt_x_up} directly, we use an approach which is based on DGMs. Variational auto-encoders (VAEs) and GANs are well-known examples of DGMs. In this technique, we replace the tuple $\left(\balpha, \boldsymbol{\tau}, \btheta\right)$ by $G(\bz)$, which is a mapping from $\bz$ (a latent variable) to the domain of $\left(\balpha, \boldsymbol{\tau}, \btheta\right)$. \textit{In doing so, we implicitly assume that the tuple $\left(\balpha, \boldsymbol{\tau}, \btheta\right)$ is described in terms of a low-dimensional $\bz$ through $G(\cdot)$ - a vector valued function of a vector valued variable.
}That is, the tuple $\left(\balpha, \boldsymbol{\tau}, \btheta\right)$ has a structure dictated by $G(\bz)$. \label{z_random}Here, we assume that the entries of vector $\bz$ are random variables from a Gaussian distribution (i.e., $\bz \sim \mathcal{N}(\b0, \bI_d)$, where $d$ is the length of vector $\bz$). We will elaborate later on how to find $G(\bz)$ using a GAN architecture.

Given $G(\bz)$, we rewrite our LS\ problem as\begin{align} \label{LS_opt_x_up_expressed_in_z}
\min_{\bz , {\boldsymbol{\phi}^{\rm up}}}&J_{\rm up}\left(G(\bz), {\boldsymbol{\phi}^{\rm up}}\right)\nonumber\\
{\rm s.t.}&\,\,\,\, 0\preccurlyeq {\boldsymbol{\phi}^{\rm up}} \preccurlyeq 2\pi.
\end{align}
Note that \eqref{LS_opt_x_up} and \eqref{LS_opt_x_up_expressed_in_z} are not equivalent, since solving \eqref{LS_opt_x_up_expressed_in_z} results in tuples $\left(\balpha, \boldsymbol{\tau}, \btheta\right)$ which belong to the range of $G(\bz)$, whereas in \eqref{LS_opt_x_up} there is no such restriction. However, given the above assumption, theoretically from \eqref{LS_opt_x_up} to  \eqref{LS_opt_x_up_expressed_in_z}, there is no optimality loss, as the optimal $\left(\balpha, \boldsymbol{\tau}, \btheta\right)$ in \eqref{LS_opt_x_up} is on the manifold that $G(\bz)$ generates samples on. Therefore, the optimal solution to \eqref{LS_opt_x_up} is in the range of $G(\bz)$. Note that the objective function in \eqref{LS_opt_x_up_expressed_in_z} is a periodic function with respect to ${\boldsymbol{\phi}^{\rm up}}$, i.e., $J_{\rm up}\left(G(\bz), {\boldsymbol{\phi}^{\rm up}}\right)= J_{\rm up}\left(G(\bz), {\boldsymbol{\phi}^{\rm up} + 2i\pi}\right)$ for integer $i$. This implies that if ${\boldsymbol{\phi}^{\rm up}}^*$ is an optimal solution  out of the interval $\left[0,2\pi\right]$, without loss of optimality, we can project it back to $\left[0,2\pi\right]$ by removing multiple integers of $2\pi$. Therefore, we can ignore the constraint in \eqref{LS_opt_x_up_expressed_in_z} and solve the following unconstrained optimization problem:
\begin{align} \label{LS_opt_x_up_expressed_in_z_unconst}
\min_{\bz , {\boldsymbol{\phi}^{\rm up}}} J_{\rm up}\left(G(\bz), {\boldsymbol{\phi}^{\rm up}}\right).
\end{align}

To efficiently solve \eqref{LS_opt_x_up_expressed_in_z_unconst}, we jointly and iteratively update $\bz$ and ${\boldsymbol{\phi}^{\rm up}}$ using the gradient descent algorithm.
To do so, we need to know the mapping $G(\cdot)$. Later we show how this mapping can be determined
using the GAN architecture.

\textbf{Remark 1}: \label{remark} Despite the non-linearity and possibly the non-convex structure of the optimization problem in \eqref{LS_opt_x_up}, solving \eqref{LS_opt_x_up_expressed_in_z_unconst} is easier. This can be explained as follows: 1) The tuple $\left(\balpha, \boldsymbol{\tau}, \btheta\right)$ is described in terms of a low-dimensional $\bz$ via a vector valued function $G(\cdot)$, and therefore, the search space in \eqref{LS_opt_x_up_expressed_in_z_unconst} is in $\bz$ and $\boldsymbol{\phi}^{\rm up}$ which has a much smaller dimension than the optimization variables in \eqref{LS_opt_x_up}. 2) We will show in Lemma \ref{J_lipschitz} in Section \ref{Sec:conv_analysis} that the objective function of \eqref{LS_opt_x_up_expressed_in_z_unconst} is a Lipschitz-continuous function of its variables. Therefore, the convergence of steepest descent algorithm in solving  \eqref{LS_opt_x_up_expressed_in_z_unconst} is guaranteed. This fact does not necessarily hold true for the objective function in \eqref{LS_opt_x_up} due to a larger search space as well as the box constraint over each dimension. 3) Unlike \eqref{LS_opt_x_up}, in \eqref{LS_opt_x_up_expressed_in_z_unconst}, we incorporate the underlying distribution of channel parameters to act as a prior. This not only eliminates the box constraints in \eqref{LS_opt_x_up}, but also provides a better estimation of channel parameters even with highly noisy observations. In our simulations, and for the sake of comparison, we solve \eqref{LS_opt_x_up} based on the R2F2 algorithm \cite{R2F2}. We will show how incorporating $G(\bz)$ improves the performance.

\textbf{Remark 2}: \label{remark2} The distribution of channel parameters $(\boldsymbol{\alpha}, \boldsymbol{\tau}, \boldsymbol{\theta})$ depends on physical properties of the propagation environment (such as geometry and material of surrounding objects), operating frequencies, user densities as well as data traffic. Importantly, they are independent of the number of antennas at the BS as well as the number of subcarriers \cite{6515173, 6834753, 8798669, 9034179}. Here, \emph{we use $G(\bz)$ to model the statistical distribution of $(\boldsymbol{\alpha}, \boldsymbol{\tau}, \boldsymbol{\theta})$}. Note that each BS uses only one $G(\bz)$. For a BS in a different propagation environment, $G(\bz)$ needs to be retrained.

\subsection{Downlink Training}
Relying on the partial reciprocity of channel parameters in FDD, we use the frequency-independent channel parameters (i.e., $\left(\balpha, \boldsymbol{\tau}, \btheta\right)$ that is  estimated during the uplink training), and estimate ${\boldsymbol{\phi}^{\rm dl}}$ using a fewer training symbols in downlink. To estimate ${\boldsymbol{\phi}^{\rm dl}}$, we use the LS technique, similar to what presented in Subsection \ref{uplink_training}. Given $\left(\balpha, \boldsymbol{\tau}, \btheta\right)$, we define $\bg({\boldsymbol{\phi}^{\rm dl}}) \triangleq \left[e^{j {\phi}_1^{\rm dl}} \;\; e^{j {\phi}_2^{\rm dl}} \;\; \cdots\;\; e^{j {\phi}_L^{\rm dl}} \right]^T$. We also define $\gamma_l^{k}\triangleq \alpha_l e^{j\frac{2\pi k}{K}\tau_{l}B} $ and the following $p\times L$ matrix
\begin{align} \label{B_k_uplink_def}
\bB_k\left(\balpha, \boldsymbol{\tau}, \btheta\right) \triangleq \bS^{\rm dl}_k\left[ \gamma_1^{k}\bb(\theta_1,  \lambda)\;\;  \gamma_2^{k}\bb(\theta_2,  \lambda)\;\; \cdots \;\; \gamma_L^{k}\bb(\theta_L, \lambda) \right].
\end{align}
Then, we can express \eqref{DL_received signal}  in terms of ${\boldsymbol{\phi}^{\rm dl}}$ as
\begin{align} \label{DL_received signal_expressed_gdl}
\by^{\rm dl}_k\triangleq \bB_k\left(\balpha, \boldsymbol{\tau}, \btheta\right)\bg({\boldsymbol{\phi}^{\rm dl}}) +\bn^{\rm dl}_k, \,\,\, k \in \mathcal{K}_{\rm dl}.
\end{align}
Now, stacking the observations across all subcarriers in the set $\mathcal{K}_{\rm dl}$, and defining $\by^{\rm dl}\triangleq \Big[ \by^{\rm dl}_k\Big]_{k \in \mathcal{K}_{\rm dl}}$, $\mathcal{B}\triangleq \Big[ \bB_k\left(\balpha, \boldsymbol{\tau}, \btheta\right) \Big]_{k \in \mathcal{K}_{\rm dl}}$ , and $\bn^{\rm dl} \triangleq \Big[ \bn^{\rm dl}_k\Big]_{k \in \mathcal{K}_{\rm dl}}$, the collected received signal in the downlink is given by
\begin{align} \label{DL_received signal_non_linear}
\by^{\rm dl} = \mathcal{B}\bg({\boldsymbol{\phi}^{\rm dl}}) +\bn^{\rm dl}.
\end{align}
Defining $J_{\rm dl}\left(\boldsymbol{\phi}^{\rm dl}\right)\triangleq \|\by^{\rm dl} - \mathcal{B}\bg({\boldsymbol{\phi}^{\rm dl}})\|_2^2$, we now solve
the following LS problem\begin{align} \label{LS_opt_dl}
\min_{\boldsymbol{\phi}^{\rm dl}}& \,\,\,\,  J_{\rm dl}\left(\boldsymbol{\phi}^{\rm dl}\right),
\end{align}
to find the optimal ${\boldsymbol{\phi}^{\rm dl}}$. Note that in the optimization problem \eqref{LS_opt_dl}, as in \eqref{LS_opt_x_up_expressed_in_z_unconst}, we have ignored the constraint $0\preccurlyeq\boldsymbol{\phi}^{\rm dl}\preccurlyeq 2\pi$ due to the fact that the objective function is a periodic function of $\boldsymbol{\phi}^{\rm dl}$, i.e., $J_{\rm dl}\left(\boldsymbol{\phi}^{\rm dl}\right) = J_{\rm dl}\left(\boldsymbol{\phi}^{\rm dl}+2i\pi\right)$, for integer $i$. As in \eqref{LS_opt_x_up_expressed_in_z_unconst}, the optimization problem \eqref{LS_opt_dl} is an unconstrained least squares problem and can be solved using the gradient descent algorithm. Once $\boldsymbol{\phi}^{\rm dl}$ is estimated it is fed back to BS for the subsequent data transmission\footnote{\label{quantize_feedback}In this paper, we do not consider the quantization of $\boldsymbol{\phi}^{\rm dl}$ obtained in \eqref{LS_opt_dl}.}.

\section{Deep Generative Model}\label{Sec:DGMs}
In the previous section, we described an LS-based procedure to obtain the parameters that defines the downlink channel. the crucial step is the mapping $G(\bz)$ that leads to $\bx^{\rm up}$. In this section, we explain how to find $G(\bz)$ using DGMs. Before we get into $G(\bz)$, and for the sake of the paper being self-contained, we briefly introduce DGMs and explain how we can use them to find $G(\bz)$.

The core idea of DGMs is to represent a high-dimensional and complex distribution of data $\bx$ (in this case $\bx = \left(\balpha, \boldsymbol{\tau}, \btheta\right)$) using a deterministic mapping over a low-dimensional random vector $\bz$ which has a well behaved pdf (e.g., uniform or Gaussian). Specifically, a DGM is a function $G(\bz)$ that maps a low-dimensional random  vector $\bz \in \mathbb{R}^d$, typically drawn independently from Gaussian or uniform distribution, to a high-dimensional vector $\bx_g \triangleq G(\bz) \in \mathbb{R}^{n}$ where $n\geq d$ (in practice, due to the structure of $\bx$ we can have $n \gg d$). The mapping $G(\cdot)$ is determined such that the distribution of $\bx_g$,  generated by $G(\cdot)$, matches the distribution of the real world data vector $\bx$. In other words, using  the transformation $G(\cdot)$ from a simple and low-dimensional distribution, we generate samples that belong to the same manifold as $\bx$ does. This implies that any generated sample from $G(\cdot)$ already satisfies the constraints in \eqref{LS_opt_x_up}\footnote{ \label{foot_const}To show this, let us define $\mathcal{C} \triangleq \{ \bx = (\balpha, \boldsymbol{\tau}, \btheta)| \balpha\succcurlyeq 0, \boldsymbol{\tau} \succcurlyeq 0,  0\preccurlyeq \btheta \preccurlyeq 2\pi,\}$ the feasible set for the objective function~\eqref{LS_opt_x_up}. Let $\mathcal{R}_x$ denote the domain of the true sample $\bx$, and $\mathcal{R}_g$ denote the range of $G(\bz)$. We know that $\mathcal{R}_x \subset \mathcal{C}$. Theoretically, $G(\bz)$ is trained such that it generates realistic samples $\bx_g$ such that $D(\bx)$ cannot distinguish $\bx_g$ among the true samples $\bx$. This implies that any $\bx_g \in \mathcal{R}_g$  belongs to the same manifold as $\bx$ does. This further implies that $\bx_g \in \mathcal{R}_x$. Therefore, we conclude that $\bx_g \in \mathcal{C}$, i.e., any sample generated from $G(\bz)$ satisfies the constraints in~\eqref{LS_opt_x_up}.}.
The function $G(\cdot)$ is parameterized by a deep neural network, which is trained in an unsupervised way as explained below. One well-known example of  DGM is the family of GANs. In this paper, we use the GAN architecture to find $G(\cdot)$. Next, we briefly introduce GANs and explain how we use GANs to obtain $G(\cdot)$.

\subsection{GANs}
GANs are among the most powerful DGMs that are  used to capture the distribution of data ~\cite{NIPS2014_5ca3e9b1}. As shown in Fig. \ref{Fig_GAN}, a GAN consists of two fully-connected feed-forward neural networks, namely a generator network $G_{{\cal W}_g}(\bz): \mathbb{R}^d \rightarrow \mathbb{R}^n$, and a discriminator network $D_{{\cal W}_d} (\bx):\mathbb{R}^n \rightarrow [0,1]$, where ${\cal W}_g$ and ${\cal W}_d$ represent, respectively, the sets of weights of the generator and discriminator networks. The generator network $G_{{\cal W}_g}(\bz)$ maps the input random vector  $\bz \backsim \mathbb{P}_z(\bz)$, into the data space $\bx_g \backsim \mathbb{P}_g(\bx)$. Here, $\mathbb{P}_z(\bz)$ represents the pdf of the  input random vector $\bz$ and  is usually chosen to be $\mathbb{P}_z(\bz)= \mathcal{N}(\b0, \bI_d)$, and $\mathbb{P}_g(\bx)$ represents the pdf of the generated samples. The discriminator network, $D_{{\cal W}_d}(\cdot)$, receives the two sets of inputs: one set consists of  the samples  $ \bx_g$ generated by $G_{{\cal W}_g}(\bz)$ and the other set consists of  the true samples $ \bx$. The discriminator network $D_{{\cal W}_d}(\cdot)$ is meant to correctly distinguish between the fake samples $\bx_g$ and the true samples $\bx$. Effectively, the goal of the generator network $G_{{\cal W}_g}(\cdot)$ is to generate fake samples such that the discriminator network $D_{{\cal W}_d}(\bx)$ cannot distinguish them from the true samples. Meanwhile, the goal of the discriminator network $D_{{\cal W}_d}(\cdot)$ is to correctly distinguish between $\bx_g$ and $\bx$. To do so, $D_{{\cal W}_d}(\bx)$ provides the probability that $\bx$ to be a real sample. If $\mathbb{P}_r(\bx)$ represents the distribution of true samples, the goal is to have $\mathbb{P}_g(\bx)= \mathbb{P}_r(\bx)$, by optimally adjusting ${\cal W}_g$ and ${\cal W}_d$. Below, we explain how to adjust the weights.

\subsubsection{Training of GANs}\label{GAN_train}

$G_{{\cal W}_g}(\cdot)$ and $D_{{\cal W}_d}(\cdot)$ are trained simultaneously and iteratively via the following two-player min-max game \cite{NIPS2014_5ca3e9b1}:
\begin{align} \label{min_max_gan}
\min_{{\cal W}_g} \max_{{\cal W}_d}\,\,\, \tilde L(G_{{\cal W}_g}(\cdot),D_{{\cal W}_d}(\cdot)),
\end{align}
where $\tilde L(G_{{\cal W}_g}(\cdot),D_{{\cal W}_d}(\cdot)),$ is the loss function and defined as
\begin{align} \label{LGD_def}
\tilde L(G_{{\cal W}_g}(\cdot),D_{{\cal W}_d}(\cdot))&\triangleq \mathbb{E}_{\bx}\left[\log D_{{\cal W}_d}(\bx)\right] \nonumber\\
&+ \mathbb{E}_{\bz }\left[\log\left( 1- D_{{\cal W}_d}(G_{{\cal W}_g}(\bz))\right)\right].
\end{align}

\begin{figure}[!t]
    \centering
\resizebox{!}{6cm}{\includegraphics{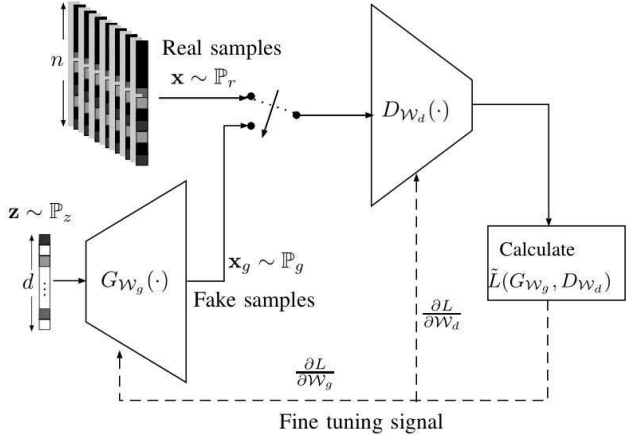}}
\caption{The GAN structure. }
\label{Fig_GAN}
\end{figure}

Note that the objective of training  $G_{{\cal W}_g}(\cdot)$ is to fool $D_{{\cal W}_d} (\cdot)$ by generating the \emph{realistic} samples $\bx_g$ such that $D_{{\cal W}_d} (\cdot)$ assigns a high probability to $\bx_g$ being true samples. This can be done by maximizing  $D_{{\cal W}_d} (\bx_g)$ (or equivalently, minimizing $\log\left( 1- D_{{\cal W}_d}(G_{{\cal W}_g}(\bz))\right)$ over ${\cal W}_g$, as given in the second term in \eqref{LGD_def}). In the meantime, $D_{{\cal W}_d} (\cdot)$ aims to correctly distinguish $\bx_g$, generated by $G_{{\cal W}_g}(\cdot)$, from the true samples $\bx$. In other words, $D_{{\cal W}_d} (\bx_g)$ is meant to be close to zero while $D_{{\cal W}_d} (\bx)$ has to be close to one. Therefore, ${\cal W}_d$ is chosen such that $\log D_{{\cal W}_d}(\bx) + \log\left( 1- D_{{\cal W}_d}(G_{{\cal W}_g}(\bz))\right)$ is maximized. It has been shown that such a competitive interplay between  $G_{{\cal W}_g}(\bz)$ and $D_{{\cal W}_d} (\bx)$ converges to an equilibrium where $P_g(\bx)= P_r (\bx)$  \cite{NIPS2014_5ca3e9b1} . At this point, the generator produces realistic $\bx_g$ such that the discriminator is unable to differentiate between $\bx$ and $\bx_g$, i.e.,  $D_{{\cal W}_d} (\bx)= D_{{\cal W}_d} (\bx_g)=\frac{1}{2}$.

\subsubsection{Mode collapse issue and its solution}\label{mode_collapes}

Despite the success of GANs in learning the underlying distribution of data, training of GANs is challenging due to the instability in the training and sensitivity to the hyper-parameters. In \cite{10.5555/3157096.3157346}, the authors propose different solutions to improve the training stability and robustness. On the other hand, \emph{mode collapse,} an issue that can hinder the training of GANs, refers to collapsing of large volumes of probability mass into a few modes. This means that although the generator produces meaningful samples, these samples belong to \emph{only} few modes of the data distribution; therefore, the samples produced by the generator do not fully represent the underlying distribution of real data.

Different solutions has been proposed to address the mode collapse issue; in this paper, we use a regularized GAN (Reg-GAN) proposed in \cite{DBLP:conf/iclr/CheLJBL17}. Compared to the original form of GANs, a Reg-GAN uses a regularizer term that meant to penalize the missing modes. To do so, together with the generator and discriminator, an encoder network $E_{{\cal W}_e}(\bx): \bx \rightarrow \bz$ is trained to help the generator to avoid the missing modes, where ${\cal W}_e$ is the set of weights of the encoder network.

To perform this training, two regularizing terms are considered in training of the $E_{{\cal W}_e}(\cdot)$ and $G_{{\cal W}_g}(\cdot)$ networks. One regularizer is based on the fact that if $G_{{\cal W}_g}\left(E_{{\cal W}_e}(\cdot)\right)$ is a good \textit{auto-encoder}, then, for any $\bx_{\rm o} \in {\cal M}_{\rm o}$, where ${\cal M}_{\rm o}$ is the set of missing modes, we obtain $\bx_{\rm o} \approx G_{{\cal W}_g}\left(E_{{\cal W}_e}(\bx_o)\right)$. Therefore, $\mathbb{E}_{\bx}\left[\|\bx - G_{{\cal W}_g}\left(E_{{\cal W}_e}(\bx)\right)\|^2\right]$ is the loss function considered in the training of  $E_{{\cal W}_e}(\cdot)$ and $G_{{\cal W}_g}(\cdot)$ networks, as a regularizer to penalize the generator for any missing samples including the samples of minor modes. A second regularizer, $\mathbb{E}_{\bx}\left[\log D_{{\cal W}_d}\left( G_{{\cal W}_g}\left( E_{{\cal W}_e}(\bx) \right )\right)\right]$ is used to encourage $G_{{\cal W}_g}(E_{{\cal W}_e}(\cdot))$ to generate realistic samples such that $ D_{{\cal W}_d}(\cdot)$ assigns, to these samples, a high probability of being a true sample. Therefore, we can achieve a fair probability distribution across different modes. The regularized loss functions for the generator, the encoder and discriminator are respectively given in \eqref{T_G}, \eqref{T_E}, and \eqref{T_D}, where $\lambda_1$ and $\lambda_2$ are the regularizer's coefficients.

 \begin{figure*}[t]
\begin{align}
T_G\left({\cal W}_g, {\cal W}_d, {\cal W}_e\right)&= - \mathbb{E}_{\bz }\left[\log D_{{\cal W}_d}\left( G_{{\cal W}_g}(\bz)\right) \right]+ \mathbb{E}_{\bx }\left[\lambda_1\|\bx - G_{{\cal W}_g}\left(E_{{\cal W}_e}(\bx)\right)\|^2+\lambda_2\log D_{{\cal W}_d}\left( G_{{\cal W}_g}\left(E_{{\cal W}_e}(\bx)\right)\right)\right] \label{T_G} \\
T_E\left({\cal W}_g, {\cal W}_d, {\cal W}_e\right)&= \mathbb{E}_{\bx }\left[\lambda_1\|\bx - G_{{\cal W}_g}\left(E_{{\cal W}_e}(\bx)\right)\|^2+\lambda_2\log D_{{\cal W}_d}\left( G_{{\cal W}_g}\left(E_{{\cal W}_e}(\bx)\right)\right)\right]\label{T_E}\\
T_D\left({\cal W}_g, {\cal W}_d, {\cal W}_e\right)&= \mathbb{E}_{\bx }\left[\log D_{{\cal W}_d}(\bx)\right] + \mathbb{E}_{\bz }\left[\log\left( 1- D_{{\cal W}_d}\left(G_{{\cal W}_g}(\bz)\right)\right)\right].\label{T_D}
\end{align}
    \end{figure*}

\emph{Training procedure}: During the training, we aim to find ${\cal W}_d$, ${\cal W}_g$ and ${\cal W}_e$. This can be done by jointly and iteratively maximizing $T_D$, and minimizing $T_G$ and $T_E$. Before we start the training process, ${\cal W}_d$, ${\cal W}_g$ and ${\cal W}_e$ are randomly initialized. In each of the subsequent iterations, we sample a mini-batch of size $m$ from both the training set $\left(\left\{\bx^{(i)}\right\}_{i=1}^{m}\right)$ and noise samples $\left(\left\{\bz^{(i)}\right\}_{i=1}^{m}\right)$. For fixed ${\cal W}_g$ and ${\cal W}_e$, we first update ${\cal W}_d$ by ascending in the direction of the gradient of $T_D$. Then, while fixing ${\cal W}_d$ and ${\cal W}_e$, we update ${\cal W}_g$ by descending in the opposite  direction of the gradient of $T_G$. Similarly, ${\cal W}_e$ is updated by descending in the opposite direction of the gradient of $T_E$ for fixed ${\cal W}_g$ and ${\cal W}_d$.

The Reg-GAN training procedure is summarized in Algorithm \ref{Algo_MRGAN}.
\begin{algorithm*}
    \caption{Reg-GAN Training Algorithm }
    \textbf{Inputs}: Data set, number of epochs, mini-batch size $m$, $\lambda_1$, and $\lambda_2$.\\
    \textbf{Outputs}: Trained $G(\bz)$ .
    \begin{algorithmic}[1]
        \State \textbf{for each} epoch \textbf{do}:
        \State \hskip1.0em  Sample mini-batches of $m$ noise samples $\{\bz^{(1)}, \bz^{(2)}, \cdots, \bz^{(m)}\}$ and $m$ data samples $\{\bx^{(1)}, \bx^{(2)}, \cdots, \bx^{(m)}\}$ from $P_z(\bz)$ and $P_r(\bx)$, respectively.

        \State \hskip1.0em Update discriminator $D_{{\cal W}_d}(\cdot)$ by ascending in the direction of its stochastic gradient:
\begin{align}
\nabla_{{{\cal W}_d}}\frac{1}{m}\sum_{i=1}^{m} \Big[\log D_{{\cal W}_d}(\bx^{(i)}) + \log\left( 1- D_{{\cal W}_d}\left(G_{{\cal W}_g}(\bz^{(i)})\right)\right)\Big]\nonumber
.
\end{align}

\State \hskip1.0em  Sample mini-batches of $m$ noise samples $\{\bz^{(1)}, \bz^{(2)}, \cdots, \bz^{(m)}\}$ and $m$ data samples $\{\bx^{(1)}, \bx^{(2)}, \cdots, \bx^{(m)}\}$  from $P_z(\bz)$ and $P_r(\bx)$, respectively.

        \State \hskip1.0em Update generator $G_{{\cal W}_g}(\cdot)$ by descending in the opposite direction of its stochastic gradient:
\begin{align}
\nabla_{{{\cal W}_g}}\frac{1}{m}\sum_{i=1}^{m} \Big[-\log D_{{\cal W}_d}\left( G_{{\cal W}_g}(\bz^{(i)})\right)
+\lambda_1\|\bx^{(i)} - G_{{\cal W}_g}\left(E_{{\cal W}_e}(\bx^{(i)})\right)\|^2+\lambda_2\log D_{{\cal W}_d}\left( G_{{\cal W}_g}\left(E_{{\cal W}_e}(\bx^{(i)})\right)\right)\Big]\nonumber
\end{align}

\State \hskip1.0em  Sample mini-batches of $m$ noise samples $\{\bz^{(1)}, \bz^{(2)}, \cdots, \bz^{(m)}\}$ and $m$ data samples $\{\bx^{(1)}, \bx^{(2)}, \cdots, \bx^{(m)}\}$  from $P_z(\bz)$ and $P_r(\bx)$, respectively.

        \State \hskip1.0em Update encoder $E_{{\cal W}_e}(\cdot)$ by descending in the opposite direction of its stochastic gradient:
\begin{align}
\nabla_{{{\cal W}_e}}\frac{1}{m}\sum_{i=1}^{m} \Big[\lambda_1\|\bx^{(i)} - G_{{\cal W}_g}\left(E_{{\cal W}_e}(\bx^{(i)})\right)\|^2+\lambda_2\log D_{{\cal W}_d}\left( G_{{\cal W}_g}\left(E_{{\cal W}_e}(\bx^{(i)})\right)\right)\Big]\nonumber
\end{align}

        \State \textbf{end}
    \end{algorithmic}
    \label{Algo_MRGAN}
\end{algorithm*}
Note that, in Algorithm \ref{Algo_MRGAN}, the gradient-based update can be implemented using any standard gradient-based algorithm. In this paper, to speed up the convergence, we use a momentum-based gradient update.

\section{Convergence, complexity, and identifiability} \label{complexity_and_ident}
\subsection{Convergence} \label{Sec:conv_analysis}
In this subsection, we analytically study the convergence of our estimation algorithm provided in Subsection \ref{uplink_training}. Here, we aim to show that the objective function decreases from one iteration to the next. The following lemma provides an insight into the convergence.
\begin{lemma} \label{J_lipschitz}
The gradient of $J_{\rm up}\left(G_{{\cal W}_g}(\bz), {\boldsymbol{\phi}^{\rm up}}\right)$ is a Lipschitz-continuous function.
\end{lemma}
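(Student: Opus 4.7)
\emph{Proof strategy.} The plan is to establish Lipschitz continuity of $\nabla J_{\rm up}$ by showing that $J_{\rm up}$, viewed as a function of $(\bz,\boldsymbol{\phi}^{\rm up})$, is twice continuously differentiable with a uniformly bounded Hessian on the effective search region. Writing $J_{\rm up}(\bz,\boldsymbol{\phi}^{\rm up})=\|\by^{\rm up}-\mathcal{F}(\bz,\boldsymbol{\phi}^{\rm up})\|_2^{\,2}$ with $\mathcal{F}(\bz,\boldsymbol{\phi}^{\rm up})\triangleq\mathcal{A}\bigl(G_{{\cal W}_g}(\bz),\boldsymbol{\phi}^{\rm up}\bigr)$, the chain rule yields
\begin{align*}
\nabla J_{\rm up}=-2\,\mathrm{Re}\!\left[\bJ_{\mathcal{F}}^{H}\!\bigl(\by^{\rm up}-\mathcal{F}(\bz,\boldsymbol{\phi}^{\rm up})\bigr)\right],
\end{align*}
where $\bJ_{\mathcal{F}}$ is the Jacobian of $\mathcal{F}$. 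Hence Lipschitzness of $\nabla J_{\rm up}$ reduces to (i) uniformly bounding $\bJ_{\mathcal{F}}$ and its derivative, and (ii) controlling the linear-in-residual factor $(\by^{\rm up}-\mathcal{F})$, both of which will be consequences of local boundedness of $\mathcal{F}$ together with the composition/product lemmas collected in Appendix~\ref{AppA}.

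\emph{Bounding each factor.} First, $G_{{\cal W}_g}(\bz)$ is a fully-connected feed-forward network with fixed, already-trained weights and smooth activations; each layer is an affine map composed with a Lipschitz activation whose first and second derivatives are bounded. By the composition lemma in Appendix~\ref{AppA}, $\nabla_\bz G_{{\cal W}_g}$ and $\nabla_\bz^{\,2}G_{{\cal W}_g}$ are uniformly bounded on every compact subset of $\bz$. Second, every entry of $\mathcal{A}(\cdot)$ is a finite sum of terms of the form
\begin{align*}
\alpha_l\, s_k\, e^{j\bigl(\phi_l^{\rm up}+\frac{2\pi k}{K}\tau_l B\bigr)}\, e^{j\frac{2\pi}{\lambda_k^{\rm up}}\bar d\,(m-1)\sin\theta_l},
\end{align*}
which is $C^{\infty}$ in $(\alpha_l,\tau_l,\theta_l,\phi_l^{\rm up})$. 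Since $\tau_l,\theta_l,\phi_l^{\rm up}$ enter only through complex exponentials and $\sin(\cdot)$, all derivatives in these variables are bounded in magnitude by a constant depending only on $k$, $K$, $B$ and $\lambda_k^{\rm up}$; the dependence on $\alpha_l$ is linear, so on any compact interval $[0,\alpha_{\max}]$ (dictated by the physical parameter range enforced through the range of $G_{{\cal W}_g}$) the first and second partials of $\mathcal{A}$ are also uniformly bounded.

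\emph{Assembly and main obstacle.} By the product and composition lemmas in Appendix~\ref{AppA}, $\bJ_{\mathcal{F}}=\bJ_{\mathcal{A}}\,\mathrm{diag}\bigl(\nabla_\bz G_{{\cal W}_g},\bI\bigr)$ and its derivative are uniformly bounded, and the quadratic structure of $J_{\rm up}$ then gives a uniform operator-norm bound on $\nabla^2 J_{\rm up}$; a standard mean-value argument converts this into the claimed Lipschitz bound on $\nabla J_{\rm up}$. The main obstacle is that $\bz\sim\mathcal{N}(\b0,\bI_d)$ has unbounded support, so strictly speaking the bounds above only hold on compacta. This is resolved by observing that the gradient-descent iterates for \eqref{LS_opt_x_up_expressed_in_z_unconst} stay inside the sublevel set $\{(\bz,\boldsymbol{\phi}^{\rm up}):J_{\rm up}\le J_{\rm up}(\bz^{(0)},\boldsymbol{\phi}^{\rm up,(0)})\}$, which is compact because $\mathcal{F}$ has bounded range (the outputs of $G_{{\cal W}_g}$ lie in the compact physical parameter set) and $J_{\rm up}$ is coercive in the residual; a Lipschitz constant valid on this sublevel set is all that is required for the convergence of steepest descent invoked in Remark~\ref{remark}. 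Tracking the explicit constants through the neural-network layers is tedious but amounts to multiplying the per-layer bounds supplied by the auxiliary lemmas in Appendix~\ref{AppA}.
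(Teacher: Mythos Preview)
Your approach is sound in spirit but takes a genuinely different route from the paper. The paper's proof in Appendix~\ref{lip} is an explicit, coordinate-by-coordinate computation: it splits $\|P(\tilde\bz^{(1)})-P(\tilde\bz^{(2)})\|$ into an $I_{\bz}$ piece and an $I_{\boldsymbol\phi}$ piece via the triangle inequality~\eqref{fact_sum_upperbound}, then further decomposes each antenna's contribution $f_m$ into separate $T_\alpha$, $T_\tau$, $T_\theta$ terms (and an analogous $T_\phi$ for $I_{\boldsymbol\phi}$), bounding each one by hand using the elementary inequalities of Appendix~\ref{AppA} (QM--AM, the subvector-norm inequality, and integral estimates of the type $|e^{ja}-e^{jb}|\le|a-b|$). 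Your argument instead bounds the Hessian operator norm and invokes the mean-value theorem. Yours is shorter and conceptually cleaner; the paper's route, by contrast, yields explicit Lipschitz constants in terms of $M$, $L$, $|s_k|$, $|y^{\rm up}_m|$, $\beta$, $\gamma_m$ and the bounded generator output, which is what the step-size choice $\eta=1/C_\ell$ in~\eqref{update_z_stepsize} ultimately needs.

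One point does need fixing. Your sublevel-set compactness argument is incorrect: since $\mathcal{F}$ has bounded range, $J_{\rm up}$ is itself bounded above, and since $J_{\rm up}$ is $2\pi$-periodic in each coordinate of $\boldsymbol{\phi}^{\rm up}$, the sublevel sets are never compact in $(\bz,\boldsymbol\phi^{\rm up})$; ``coercive in the residual'' does not translate into coercivity in the optimization variables. Fortunately the repair is unnecessary rather than hard: the generator's output layer is $\tanh$, so $G_{{\cal W}_g}(\bz)$ (and hence every $\alpha_l$) lies in a fixed compact set for \emph{all} $\bz$, and the Leaky~ReLU/$\tanh$ layers have globally bounded first and second derivatives. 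Combined with the fact that $\tau_l,\theta_l,\phi_l$ enter $\mathcal{A}$ only through bounded trigonometric functions, the Hessian bound you need is actually global, not merely local on compacta. The paper makes the same move when it writes ``$\{\alpha_l\}_{l=1}^L$ are being generated from $G(\bz)$ which are bounded'' and cites the neural-network Lipschitz-constant literature for the global Lipschitzness of $G_{{\cal W}_g}$. Replace your sublevel-set paragraph with this observation and the argument goes through.
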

\emph{Proof}: See Appendix \ref{lip}.

Let us define ${\tilde{\bz}}\triangleq \left(\bz, \boldsymbol{\phi}\right)$ and $\ell(\tilde{\bz})\triangleq J_{\rm up}\left(G_{{\cal W}_g}(\bz), {\boldsymbol{\phi}}\right)$. According to  Lemma \ref{J_lipschitz}, we can write $\| \nabla \ell(\tilde{\bz}^{(i)}) - \nabla \ell(\tilde{\bz}^{(j)}) \| \leq  C_\ell \| \tilde{\bz}^{(i)} - \tilde{\bz}^{(j)} \| $, where $C_\ell$ is the Lipschitz constant. This now means that $\nabla^2\ell(\tilde{\bz})\preccurlyeq C_\ell\bI$, which further implies that
\begin{align} \label{f_quad}
\bv^T\nabla^2\ell(\tilde{\bz})\bv\leq C_\ell\|\bv\|^2
\end{align}
holds true for any vector $\bv$. On the other hand, if we consider the gradient descent step $\eta = 1/C_\ell$, then the update from the $i$th to the $(i+1)$th iteration is written as
\begin{align} \label{update_z_stepsize}
\tilde{\bz}^{(i+1)}= \tilde{\bz}^{(i)} - \frac{1}{C_\ell} \bigtriangledown \ell(\tilde{\bz}^{(i)}).
\end{align}
Let us use the Taylor expansion and expand $\ell(\tilde \bz^{(i+1)})$ as
\begin{align} \label{Taylor expansion}
&\ell(\tilde{\bz}^{(i+1)}) \nonumber\\
 &=\ell(\tilde{\bz}^{(i)}) + \bigtriangledown \ell(\tilde{\bz}^{(i)}) \left(\tilde{\bz}^{(i+1)} - \tilde{\bz}^{(i)}\right)\nonumber\\
&+ \frac{1}{2} \left(\tilde{\bz}^{(i+1)} - \tilde{\bz}^{(i)}\right)^T \bigtriangledown^2 \ell(\tilde{\bz}^{(i)})\left(\tilde{\bz}^{(i+1)} - \tilde{\bz}^{(i)}\right)\nonumber\\
&\overset{(a)}{\leq} \ell(\tilde{\bz}^{(i)}) - \frac{1}{C_\ell} \bigtriangledown \ell(\tilde{\bz}^{(i)})^T \bigtriangledown \ell(\tilde{\bz}^{(i)}) + \frac{C_\ell}{2}\|\tilde{\bz}^{(i+1)} - \tilde{\bz}^{(i)}\|^2
\nonumber\\
&= \ell(\tilde{\bz}^{(i)}) - \frac{1}{C_\ell} \|\bigtriangledown \ell(\tilde{\bz}^{(i)})\|^2 + \frac{1}{2C_\ell} \|\bigtriangledown^2 \ell(\tilde{\bz}^{(i)})\|^2
\nonumber\\
&= \ell(\tilde{\bz}^{(i)}) - \frac{1}{2C_\ell} \|\bigtriangledown^2 \ell(\tilde{\bz}^{(i)})\|^2,
\end{align}
where $(a)$ follows directly from \eqref{f_quad} and \eqref{update_z_stepsize}. From the last equality in \eqref{Taylor expansion}, we observe that in the $i$th step $\ell(\tilde{\bz}^{(i+1)}) \leq \ell(\tilde{\bz}^{(i)}), $ or equivalently,
\begin{align} \label{J_reduce_step1}
J_{\rm up}\left(G_{{\cal W}_g}(\bz^{(i+1)}), {\boldsymbol{\phi}^{{\rm up},(i+1)}}\right)\leq J_{\rm up}\left(G_{{\cal W}_g}(\bz^{(i)}), {\boldsymbol{\phi}^{{\rm up},(i)}}\right).
\end{align}
The inequality in \eqref{J_reduce_step1} holds true for any $\eta \leq 1/C_\ell$. The inequality of \eqref{J_reduce_step1} shows that the LS cost function decreases from one iteration to the next. Since the cost function has a lower bound (of zero), the algorithm converges.

\subsection{Computational complexity} \label{comp_complex}
To  solve~\eqref{LS_opt_x_up_expressed_in_z} (or equivalently~\eqref{LS_opt_x_up_expressed_in_z_unconst}), we use a gradient descent algorithm to jointly and iteratively update $\bz$ and ${\boldsymbol{\phi}^{\rm up}}$. The gradient descent requires $\mathcal{O}\left(\log(1/\varepsilon)\right)$ iterations, where $\varepsilon$ is the chosen error tolerance. Each iteration of gradient descent has the following two steps: we first fix ${\boldsymbol{\phi}^{\rm up}}$ and update $\bz$, then for a fixed $\bz$, we update ${\boldsymbol{\phi}^{\rm up}}$. In the first step, for a fixed ${\boldsymbol{\phi}^{\rm up}}$, the computational complexity of updating $\bz$ is dominated by the complexity of calculating $\frac{\partial}{\partial \bz}J_{\rm up} = \frac{\partial J_{\rm up}}{\partial \bx} \frac{\partial \bx}{\partial \bz} $. The complexity of the first term is given by $\mathcal{O}(M|\mathcal{K}_{\rm up}|L^3)$. The second term, however, is related to the gradient of the $G(\bz)$ with respect to $\bz$, which is characterized by a neural network. The computational complexity of this term is given by $\mathcal{O}\left(dnq\right)$, where $n$ is the number of network layers and $q$ is the maximum number of nodes in each layer. In the second step, the computational complexity of updating ${\boldsymbol{\phi}^{\rm up}}$ is given by $\mathcal{O}(M|\mathcal{K}_{\rm up}|L^3)$. Therefore, the overall complexity of solving ~\eqref{LS_opt_x_up_expressed_in_z} is $\mathcal{O}(dnqM|\mathcal{K}_{\rm up}|L^3\log(1/\varepsilon))$. Similarly, we use a gradient descent algorithm to solve~\eqref{LS_opt_dl}, which incurs $\mathcal{O}\left(\log(1/\varepsilon)\right)$ iterations. The complexity of updating ${\boldsymbol{\phi}^{\rm dl}}$ in each iteration is given by $\mathcal{O}(p|\mathcal{K}_{\rm dl}|L^3)$. Therefore, the overall complexity of optimization problem~\eqref{LS_opt_dl} is given by $\mathcal{O}\left(p|\mathcal{K}_{\rm dl}|L^3\log(1/\varepsilon)\right)$.

\subsection{Identifiability}\label{ident}
During the downlink training, we assume that $p$ training symbols are transmitted over the $k$-th subcarrier ($k \in \mathcal{K}_{\rm dl}$) and the $m$th antenna ($m \in \mathcal{M}_{\rm dl}$). The value of $p$ depends on the $|\mathcal{M}_{\rm dl}|$. The reason is that the $p|\mathcal{K}_{\rm dl}|\times L$ matrix $\mathcal{B}$ in \eqref{DL_received signal_non_linear} is of rank $\min\left(|\mathcal{M}_{\rm dl}|, L\right)$. If $|\mathcal{M}_{\rm dl}|< L$, then ${\rm rank} (\mathcal{B}) = |\mathcal{M}_{\rm dl}|$, leading to an identifiability issue in recovering $\{\phi_l^{\rm dl}\}_{l=1}^L$ in \eqref{DL_received signal_non_linear}. If $|\mathcal{M}_{\rm dl}|> L$, the matrix $\mathcal{B}$ becomes full column-rank, and therefore estimating $\{\phi_l^{\rm dl}\}_{l=1}^L$ requires $p|\mathcal{K}_{\rm dl}|\geq L$, or equivalently, $p\geq L/|\mathcal{K}_{\rm dl}|$.

\section{Implementation and simulation results}\label{Sec:Sims}
In this section, we explain a detailed implementation of our estimation technique. Simulation results are also provided to evaluate the performance of the proposed technique.

\subsection{Experiment setup and dataset}
\subsubsection{Dataset Generation} Throughout the simulations, we consider an indoor massive MIMO scenario. An example of such scenario is the "l1" scenario provided by DeepMIMO dataset \cite{DeepMIMO1} which is generated by the 3D ray-tracing simulator Wireless InSite \cite{Remcom}. The "l1" scenario comprises a $10 \times 10 \times 2.5 $ meters room with $2$ tables inside the room. There are $M=64$ antennas mounted on the ceiling. The users are spread inside the room across the $x$-$y$ plane with each of them being $1$ meter above the floor. The communication between the BS antennas and each UE is in FDD mode and uses $K$ OFDM subcarriers \footnote{\label{freq_select_foornote}According to the DeepMIMO dataset~\cite{DeepMIMO1}, the maximum channel delay spread is $\tau_{\rm max} = 94.5$ ns. Given the bandwidth $B= 20$ $({\rm MHz})$, the sample time is $T_s \simeq \frac{1}{B} =  50$ $({\rm ns})$. Since $\tau_{\rm max} > T_s$, the channels in this experiment suffers from frequency selectivity.}. The uplink and downlink operating frequencies are respectively $2.4$ GHz and $2.5$ GHz. The DeepMIMO dataset parameters are given in Table \ref{tab:DeepMIMO_params}.

\begin{table}[t]
\caption {DeepMIMO dataset parameters} \label{tab:DeepMIMO_params}
\centering
\begin{tabular}{ c||c  }

 \hline
  Scenario & l1-2.4 GHz and 11-2.5 GHz \\
 \hline
 Number of BS & 1\\
 \hline
Active BS  & [32]  \\
 \hline
 Number of BS antenna in (x,y,z) & (64,1,1) \\
\hline
Antenna Spacing & 0.5$\lambda$ \\
\hline
Active users row [first, last] & [1,500] \\
\hline
Number of propagation path & 5 \\
\hline
$K$ & 16 \\
\hline
OFDM sampling factor & 1 \\
\hline
OFDM limit & 16 \\
\hline
 $B$ & 20 MHz \\
\hline
Noise Power & $-174$ dBm/Hz \\
\hline
 \end{tabular}
\end{table}
 Using the above ray-tracing scenario as well as the parameters given in Table \ref{tab:DeepMIMO_params}, DeepMIMO generates the dataset. The generated dataset comprising the channel vectors between each UE and the BS across all $K$ subcarriers at both uplink and downlink frequencies, i.e., $2.4$ GHz and $2.5$ GHz, and the underlying channel parameters for each propagation path, namely $\balpha$, $\boldsymbol{\tau}$, $\btheta$, ${\boldsymbol{\phi}^{\rm up}}$, and ${\boldsymbol{\phi}^{\rm dl}}$, and the UE $x$-$y$ coordinates.
\subsubsection{Preprocessing} \label{pre_porocess}To avoid any permutation ambiguity, the paths are sorted from the lowest delay to the highest delay. To prepare the data to train our GAN, we note that the ranges of $\balpha$, $\boldsymbol{\tau}$, $\btheta$, ${\boldsymbol{\phi}^{\rm up}}$, and ${\boldsymbol{\phi}^{\rm dl}}$ are all in different scales. We use the following transformations to have them in a reasonable range: vector $\balpha$ is transformed by taking $\log_{10}$ of each of its entries, i.e., $\tilde{\balpha} \triangleq \log_{10} \balpha$; $\boldsymbol{\tau}$ is scaled by $T_s\triangleq 1/B$ as $\tilde{\boldsymbol{\tau}} \triangleq \boldsymbol{\tau}/T_s$; $\btheta$, ${\boldsymbol{\phi}^{\rm up}}$, and ${\boldsymbol{\phi}^{\rm dl}}$ are all expressed in terms of radians. After scaling/transforming these feature vectors, we concatenate them to form the overall feature vectors as
\begin{align} \label{overall_feature_vec}
\tilde{\bx}^{\rm up} &\triangleq \left[ \tilde{\balpha} , \tilde{\boldsymbol{\tau}}, \btheta, {\boldsymbol{\phi}^{\rm up}}\right]\\
\tilde{\bx}^{\rm dl} &\triangleq \left[ \tilde{\balpha} , \tilde{\boldsymbol{\tau}}, \btheta, {\boldsymbol{\phi}^{\rm dl}}\right].
\end{align}
Before we start training, we divide  $\tilde{\bx}^{\rm up}$ and $\tilde{\bx}^{\rm dl}$ by their maximum value, thereby normalizing their entries into the range of $[-1,1]$. The maximum value here refers to the global maximum value across all entries of sample vectors. To form the training and testing dataset, after scaling and normalizing the data, we first shuffle and then we split the data such that $80\%$ is dedicated for training and $20\%$ for testing.

\subsection{Network architectures and training}
\subsubsection{Network architecture} We use the Reg-GAN structure given in Section \ref{mode_collapes}, implemented in Pytorch\footnote{\label{pytorch}Simulations are based on Pytorch v1.3.1 available at https://pytorch.org/docs/1.3.1/.}. The generator $G_{{\cal W}_g}(\cdot): \bz \rightarrow \bx$ takes an input sample $\bz \sim \mathcal{N}(\b0, \bI_d)$, of size $d=8$, and passes it through a fully-connected neural network which has $3$ layers with $[10, 12, 14]$ neurons in the respective layers. It then generates samples $\bx \in \mathbb{R}^{n}$, with $n=15$. As for the encoder network $E_{{\cal W}_e}(\cdot): \bx \rightarrow \bz$, the input and output size are $15$ and $8$, respectively. There are also $3$ hidden layers with $[14, 12, 10]$  neurons in the corresponding layers. Likewise, the discriminator is implemented using a fully-connected neural network with the input size $15$, output size $1$, and $4$ hidden layers with $[12, 8, 4, 2]$ neurons in the corresponding layers.

To improve the stability and convergence during the training, in all the networks, we use Leaky ReLU activation function with slope $0.2$. Additionally, the generator uses the hyperbolic tangent (tanh) activation function in the output layer. We employ $20\%$ dropout in each hidden layer to further improve the generalization capability of the generator.

\subsubsection{Training}
The outcome of the training process is the generator $G_{{\cal W}_g}(\cdot)$ that finds the optimal $\bz$, denoted by $\bz^*$, for each realization in the test set. The training process follows directly from  Algorithm \ref{Algo_MRGAN}. We randomly initialize ${\cal W}_d$, ${\cal W}_g$, and ${\cal W}_e$. In each epoch, we sample a mini-batch of size $m$ from both the training set $\left(\left\{\bx^{(i)}\right\}_{i=1}^{m}\right)$ and the set of noise samples $\left(\left\{\bz^{(i)}\right\}_{i=1}^{m}\right)$. We first update ${\cal W}_d$ by ascending in the direction of the gradient of $T_D$ given in \eqref{T_D} assuming ${\cal W}_g$ and ${\cal W}_e$ are fixed. Then, we update ${\cal W}_g$ by descending in the opposite direction of the  gradient of $T_G$ given in \eqref{T_G} assuming ${\cal W}_d$ and ${\cal W}_e$ are fixed. Similarly, ${\cal W}_e$ is updated by descending in the opposite direction of  the gradient of $T_E$ given in \eqref{T_E}. Note that, in these steps, the sample mean is used instead of mathematical expectation. Furthermore, thanks to \verb"autograd", the gradient descent/ascend in each epoch is carried out by the differentiation capability implemented in Pytorch. The training parameters are specified in the Table \ref{tab:train_params}.

\subsubsection{Hyper-parameter selection}\label{hyper_param}
During the GAN training, hyper-parameters are the mini-batch size, learning rate, number of epochs, generator and discriminator optimizer, activation function, number of layers and the number of nodes in each layer. These hyper-parameters are chosen based on a grid search. Since the size of the grid is huge, we choose only $20$ randomly-selected grid points. For each grid point, we train the network and assess the the generated samples. The assessment is based on the following two criteria. 1) Generalizablity: The capability of $G(\bz)$ to generate realistic-like samples. This is measured by the rate at which discriminator can correctly distinguish the true samples among the fake samples. Ideally, the rate has to be as close as to $0.5$. 2) Mode collapse: this criterion is used to make sure that the generator is capable of generating different samples. This can be measured by calculating the distance (norm-2) between the generated samples for many \emph{different} random vectors $\bz$ as input.

\subsection{Simulation Scenarios}\label{sim_scenarios}
 We consider the following simulation scenarios. \label{L_known}Note that in all simulation scenarios, we assume that $L$ is known (or at least a maximum value of $L$ is chosen).
\subsubsection{UP-LMMSE} In this scenario, LMMSE-based channel estimation is used for uplink training only. We assume that all $64$ antennas as well as all $16$ subcarriers are used for training.

\subsubsection{UP-GAN} \label{GAN_UP_scenario} This is our DGM-based channel estimation technique in the uplink. Similar to the LMMSE-based technique we use all $64$ antennas as well as all $16$ subcarriers in the uplink\label{up_GAN_pg}. In this scenario, we solve \eqref{LS_opt_x_up_expressed_in_z_unconst} to find $\bz$ (or $\left(\balpha, \boldsymbol{\tau}, \btheta \right)$) and $\boldsymbol{\phi}^{\rm up}$ followed by channel reconstruction using \eqref{UL_ch_model_def}.


\subsubsection{DL-GAN} \label{GAN_DL_scenario} In this scenario, we aim to obtain $\bh_k^{\rm dl}$ using the estimates of $\left(\balpha, \boldsymbol{\tau}, \btheta \right)$, obtained using UP-GAN, and $\boldsymbol{\phi}^{\rm dl}$. Given $\left(\balpha, \boldsymbol{\tau}, \btheta \right)$, we find $\boldsymbol{\phi}^{\rm dl}$ using \eqref{LS_opt_dl}.

\begin{table}
\caption {GAN Training Parameters} \label{tab:train_params}
\centering
\begin{tabular}{ c||c  }

 \hline
 Training dataset size & $120000$ \\
 \hline
 Test dataset size & $30000$\\
\hline
Mini batch size & $512$\\
\hline
Epochs & $20000$\\
\hline
Optimizer & Adam\\
\hline
Learning rate, $\beta$ & $10^{-3}$, $(0.9, 0.999)$\\
\hline
$\lambda_1$, $\lambda_2$ & $10^{-2}$, $10^{-2}$\\
\hline
 \end{tabular}
\end{table}

\subsubsection{DL-Full-Reciprocity} In this scenario, similar to DL-GAN (explained in \ref{GAN_DL_scenario}), we aim to obtain $\bh_k^{\rm dl}$. Here, there is no downlink training and it only serves as a benchmark in our simulations. There are two cases in this scenario. In the first case, we assume that full reciprocity exists between the uplink and downlink. Specifically, we assume $\boldsymbol{\phi}^{\rm dl}= \boldsymbol{\phi}^{\rm up}$. In the second case, we assume $\boldsymbol{\phi}^{\rm dl}= 2 \pi f_{\rm dl}\boldsymbol{\tau}$. In both cases, we reconstruct $\bh_k^{\rm dl}$ using $\left(\balpha, \boldsymbol{\tau}, \btheta, \boldsymbol{\phi}^{\rm dl} \right)$.

\subsubsection{DL-LS} \label{DL_LS_def}
In this scenario, we ignore that $\balpha^{\rm up} \approx \balpha^{\rm dl}$. We first estimate $\left(\balpha^{\rm up}, \boldsymbol{\tau}, \btheta \right)$ during the uplink training. Then, we reuse $\left(\boldsymbol{\tau}, \btheta \right)$ in order to estimate $\balpha^{\rm dl}$ and $\boldsymbol{\phi}^{\rm dl}$ using the same pilots budget used in DL-GAN. Defining $\brho \triangleq \left[\rho_1\,\, \rho_2\, \cdots \, \rho_L\right]^T$ $\rho_l \triangleq \alpha_l^{\rm dl}e^{j{\phi}_l^{\rm dl}}$, we can rewrite \eqref{DL_received signal_non_linear} as $\by^{\rm dl} = \tilde{\mathcal{B}}\brho +\bn^{\rm dl}$, for some matrix $\tilde{\mathcal{B}}$. We then use least squares to find $\brho$.

\subsubsection{DL-Modified-R2F2}
 In this scenario, we ignore the mapping function $G_{{\cal W}_g}(\cdot)$ and directly solve \eqref{LS_opt_x_up}. The constrained optimization problem in \eqref{LS_opt_x_up} is solved \label{r2f2_numerical}numerically using an approach based on the R2F2 algorithm proposed in \cite{DinaR2F2}. Note that the R2F2 algorithm, in its original form, is based on the assumption that the downlink channel parameters are the same as its uplink counterparts. As explained in Section \ref{FDD_partial_reciprocity}, this assumption is found to be invalid due to the frequency dependency of $\phi_l^{\rm dl}$ and $\phi_l^{\rm up}$. To account for the disparity in phases, we first solve \eqref{LS_opt_x_up} using a similar technique provided in \cite{DinaR2F2}. Then, using the so-obtained $\left(\balpha, \boldsymbol{\tau} ,\btheta\right)$, we use \eqref{LS_opt_dl} to find $\phi_l^{\rm dl}$. Throughout the simulations, we refer to this technique as \emph{DL-Modified-R2F2}. In this scenario, we assume that $|\mathcal{M}_{\rm dl}|=64$.

To implement DL-Modified-R2F2, the optimization problem \eqref{LS_opt_x_up} is solved using coordinate descent. This approach involves the division of the parameters of the optimization problem into smaller sets for which the constraints are separable. The optimization is then carried out over each of these sets iteratively while treating the variables in the other sets to be constants, thereby reducing the computation complexity. Conceptually, in the feasible set, the algorithm iteratively converges to a minimum by taking strides along directions parallel to the parameter-set axes. In this case, the separability of constraints is obtained by taking $\balpha$, $\boldsymbol{\tau}$, $\btheta$, and $\boldsymbol{\phi}^{\rm up}$ as four parameter sets all of which have box constraints. Since the objective function is non-convex, the global optimality of this technique is not guaranteed. To avoid local minima, we initiate the optimization from $10$ randomly-chosen initial points and choose the solution with the least value of the objective function.

Note that the above R2F2-based technique is highly sensitive to gradient descent step that we choose for each parameter. This technique is also very slow since it requires multiple random initializations as well as a large number of iterations to converge.

\begin{figure*}[!h]
    \centering
    \begin{minipage}[t]{.47\textwidth}
\psfrag{NMSE}[c][c]{{\small NMSE (dB)}}
\psfrag{SNR}[c][c]{ {\small SNR (dB)}}
\psfrag{downlink, Modified R2-F2}{ {\small DL-Modified-R2F2}}
\psfrag{uplink, GAN, K16, M64}{ {\small UP-GAN, $|\mathcal{M}_{\rm up}|=64$}}
\psfrag{uplink, LMMSE, K16, M64,123456}{ {\small UP-LMMSE, $|\mathcal{M}_{\rm up}|=64$}}
\psfrag{downlink, MMSE, K16, M64}{ {\small DL-LMMSE, $|\mathcal{M}_{\rm up}|=64$}}
\psfrag{Full reciprocity}{ {\small DL-Full-Reciprocity, $\boldsymbol{\phi}^{\rm dl}= \boldsymbol{\phi}^{\rm up}$}}
\psfrag{Full reciprocity, phase adjust}{ {\small DL-Full-Reciprocity, $\boldsymbol{\phi}^{\rm dl}= 2 \pi f_{\rm dl}\boldsymbol{\tau}$}}
\psfrag{downlink, GAN, K16, M64}{ {\small DL-GAN, $|\mathcal{M}_{\rm dl}|=64$}}
\psfrag{downlink, GAN, K16, M32}{ {\small DL-GAN, $|\mathcal{M}_{\rm dl}|=32$}}
\psfrag{downlink, GAN, K16, M16}{ {\small DL-GAN, $|\mathcal{M}_{\rm dl}|=16$}}
\psfrag{downlink, GAN, K16, M4}{ {\small DL-GAN, $|\mathcal{M}_{\rm dl}|=4$}}
\psfrag{downlink, Least-squares}{ {\small DL-LS}}
\resizebox{!}{6.6cm}{\includegraphics{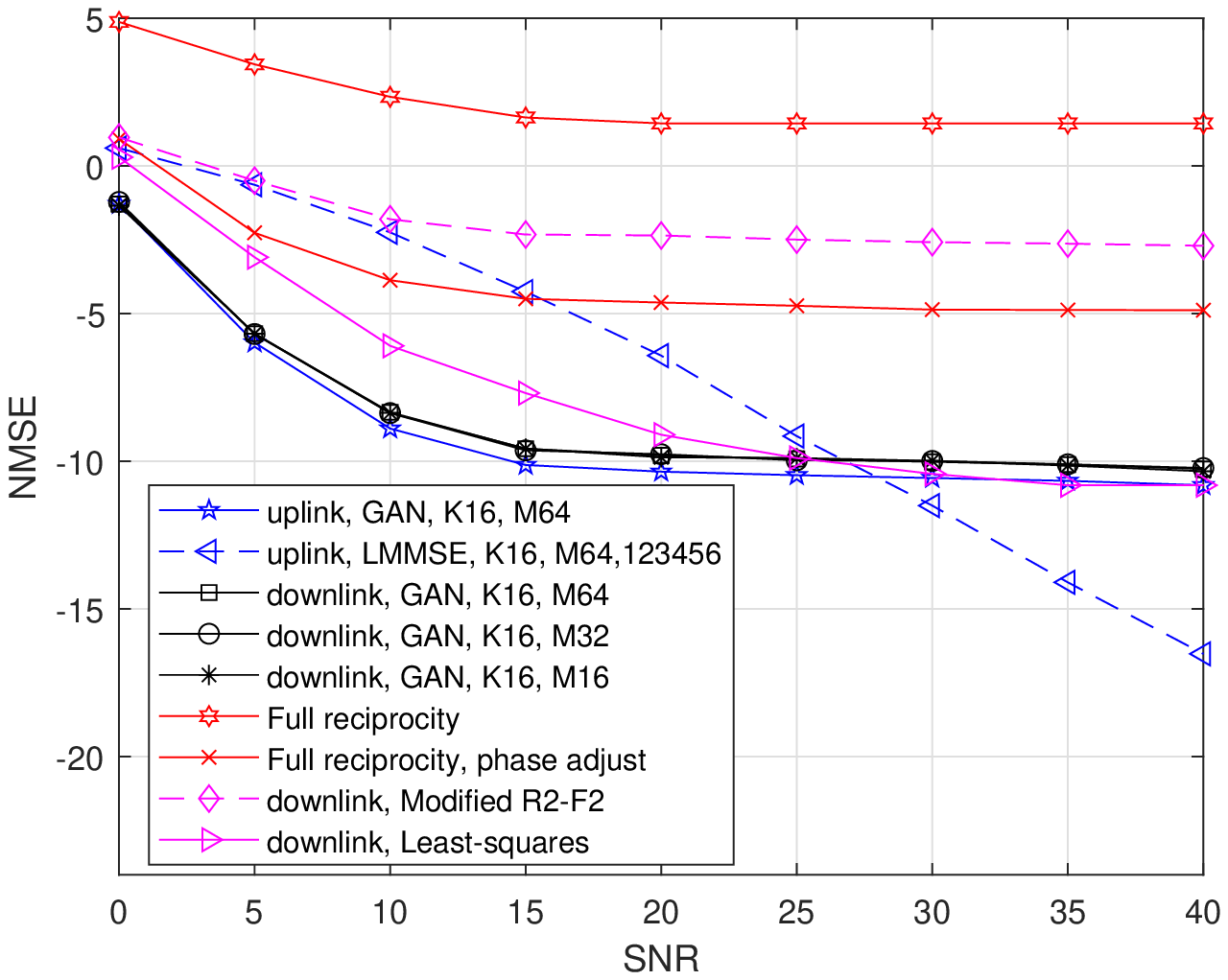}}
\caption{NMSE vs. SNR, $|\mathcal{K}_{\rm up}|=|\mathcal{K}_{\rm dl}|=K=16$, and $p=|\mathcal{K}_{\rm dl}|$.}
\label{NMSE_1}
    \end{minipage}%
    \hfill
    \begin{minipage}[t]{0.47\textwidth}
       \centering{
\psfrag{NMSE}[c][c]{{\small  NMSE (dB)}}
\psfrag{sub}[c][c]{ { $p$}}
\psfrag{downlink, Modified R2-F2, M64}{ {\small DL-Modified-R2F2}}
\psfrag{uplink, GAN, K16, M64}{ {\small UP-GAN, $|\mathcal{M}_{\rm up}|=64$}}
\psfrag{uplink, MMSE, K16, M4}{ {\small UP-LMMSE, $M=4$}}
\psfrag{downlink, GAN, PT70, M64}{ {\small DL-GAN, $|\mathcal{M}_{\rm dl}|=64$}}
\psfrag{downlink, GAN, PT70, M16}{ {\small DL-GAN, $|\mathcal{M}_{\rm dl}|=16$}}
\psfrag{downlink, GAN, PT70, M32}{ {\small DL-GAN, $|\mathcal{M}_{\rm dl}|=32$}}
\resizebox{!}{6.6cm}{\includegraphics{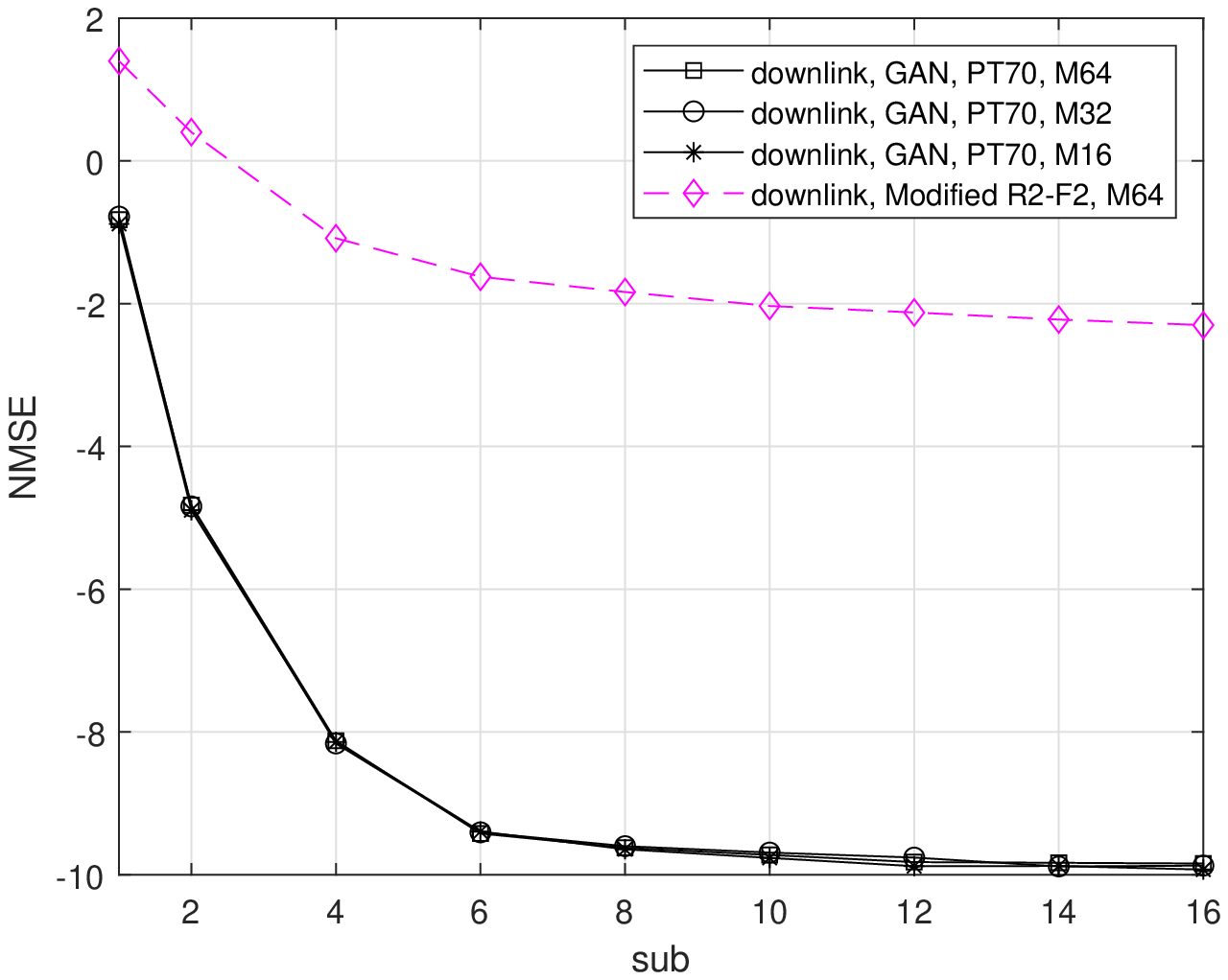}}
\caption{NMSE vs. $p$, with $|\mathcal{K}_{\rm dl}|=p$, $|\mathcal{K}_{\rm up}|=K=16$, and SNR= 20 (dB).}
\label{NMSE_2}}
    \end{minipage}
\end{figure*}

\subsection{Simulation Results}
\subsubsection{Normalized MSE} We first use the normalized mean squared error (NMSE) averaged across all subcarriers and is given by
\begin{align}
{\rm NMSE} = \frac{1}{K} \sum_{k=1}^{K}\mathbb{E}\left[ \frac{\|\bh_k - \hat{\bh}_k\|^2}{\|\bh_k \|^2}\right],
\end{align}
where $\bh_k$ and  $\hat{\bh}_k$ are, respectively, the true and the estimated channel vector at the $k$-th subcarrier. We later expand our simulations to calculate the corresponding rate and symbol error rate (SER). Throughout the simulations, we assume that $p= |\mathcal{K}_{\rm dl}|$, and $\varepsilon = 0.01$.

Fig. \ref{NMSE_1} depicts NMSE vs. SNR. In the uplink, we compare the performance of our DGM-based technique (i.e., UP-GAN) with the UP-LMMSE assuming that all $64$ antennas as well as all $16$ subcarriers are utilized for training. As shown in this figure, our DGM-based technique outperforms, by a large gap, the UP-LMMSE approach in all practical ranges of SNR. Such a performance gap stems from the fact that UP-LMMSE cannot perform well in this range of SNR, since the pilot measurements
received are of poor quality. In contrast, by exploiting the prior knowledge of the underlying distribution of channel parameters, captured using DGMs, UP-GAN outperforms the LMMSE-based technique by a large margin. In other words, since $G_{{\cal W}_g}(\cdot)$ is trained to generate a realistic channel parameters from the low-dimensional $\bz$, the UP-GAN obtains an estimate of $\left(\balpha, \boldsymbol{\tau}, \btheta\right)$ leading to an accurate estimate of the channels even at  very low SNR.

As we increase the SNR, beyond $15$ dB, we observe that there is no further improvement in the performance of DGM-based technique. Such an error floor is mainly due to the limitations in representation capability of $G_{{\cal W}_g}(\cdot)$. The generator $G_{{\cal W}_g}(\cdot)$ cannot generate the exact channel parameters $\left(\balpha, \boldsymbol{\tau}, \btheta \right)$; it can only generate realistic samples. That is, the DGM-based technique cannot always outperform the LMMSE-based technique, but it can do much better in practical ranges of SNR.

In the downlink, we plotted the performance of DL-GAN for different values of $|\mathcal{M}_{\rm dl}|$. As shown, DL-GAN behaves similar to UP-GAN. This is expected because, in both scenarios, we use the same $G_{{\cal W}_g}(\cdot)$, and their performance is mainly related to the representation capability of $G_{{\cal W}_g}(\cdot)$.
Compared to the DL-Full-Reciprocity scenario, DL-GAN performs significantly  better. This indicates that the full-reciprocity assumption between the uplink and downlink does not hold. Indeed, in DL-GAN, we used a short training sequence over fewer antennas to estimate  $\boldsymbol{\phi}^{\rm dl}$ (the frequency-specific component of the channel), which yields a large performance gain in downlink channel estimation.

\label{DL_LS_explain}Additionally, DL-LS \emph{marginally} outperforms the GAN technique at very high ranges of SNR. In particular, for this range of SNR, DL-LS provides a better estimate for $(\balpha^{\rm dl}, \boldsymbol{\phi^{\rm dl}})$ compared to what DL-GAN does. For low to medium ranges of SNR, the proposed DL-GAN technique provides a significantly better estimate of the downlink channel. This performance gap stems from the fact that DL-LS cannot perform well in this range of SNR, since the pilot measurements are of poor quality. In contrast, by exploiting the prior knowledge of the underlying distribution of channel parameters, captured using DGMs, DL-GAN provides a better estimate of channel parameters $(\balpha^{\rm dl}, \boldsymbol{\phi^{\rm dl}})$ at a low-to-medium range of SNR.

Note that, if we ignore the mapping function $G_{{\cal W}_g}(\cdot)$ (i.e., directly solving \eqref{LS_opt_x_up}, which is exactly what DL-Modified-R2F2 does), the steepest descent would suffer from convergence issues due to a larger search space as well as the box constraint over each dimension. Furthermore, the solution in this case is noise-sensitive, meaning that, at low SNRs, although we might be able to minimize the objective function to some extent, the solution is far from its true value. Instead, in our DGM-based technique, we address these issues by incorporating $G_{{\cal W}_g}(\cdot)$ as a sort of  prior. The search space is limited to the domain of $\bz$, which has a much lower dimension compared to $\bx=\left(\balpha, \boldsymbol{\tau}, \btheta \right)$.
\begin{figure*}[h]
    \centering
    \begin{minipage}[t]{.47\textwidth}
        \centering{
\psfrag{NMSE}[c][c]{{\small  NMSE (dB)}}
\psfrag{M}[c][c]{ {\small $|\mathcal{M}_{\rm dl}|$}}
\psfrag{downlink, GAN, K16}{ {\small DL-GAN, $p=16$}}
\psfrag{downlink, GAN, K8}{ {\small DL-GAN, $p=8$}}
\psfrag{downlink, GAN, K4}{ {\small DL-GAN, $p=4$}}
\resizebox{!}{6.6cm}{\includegraphics{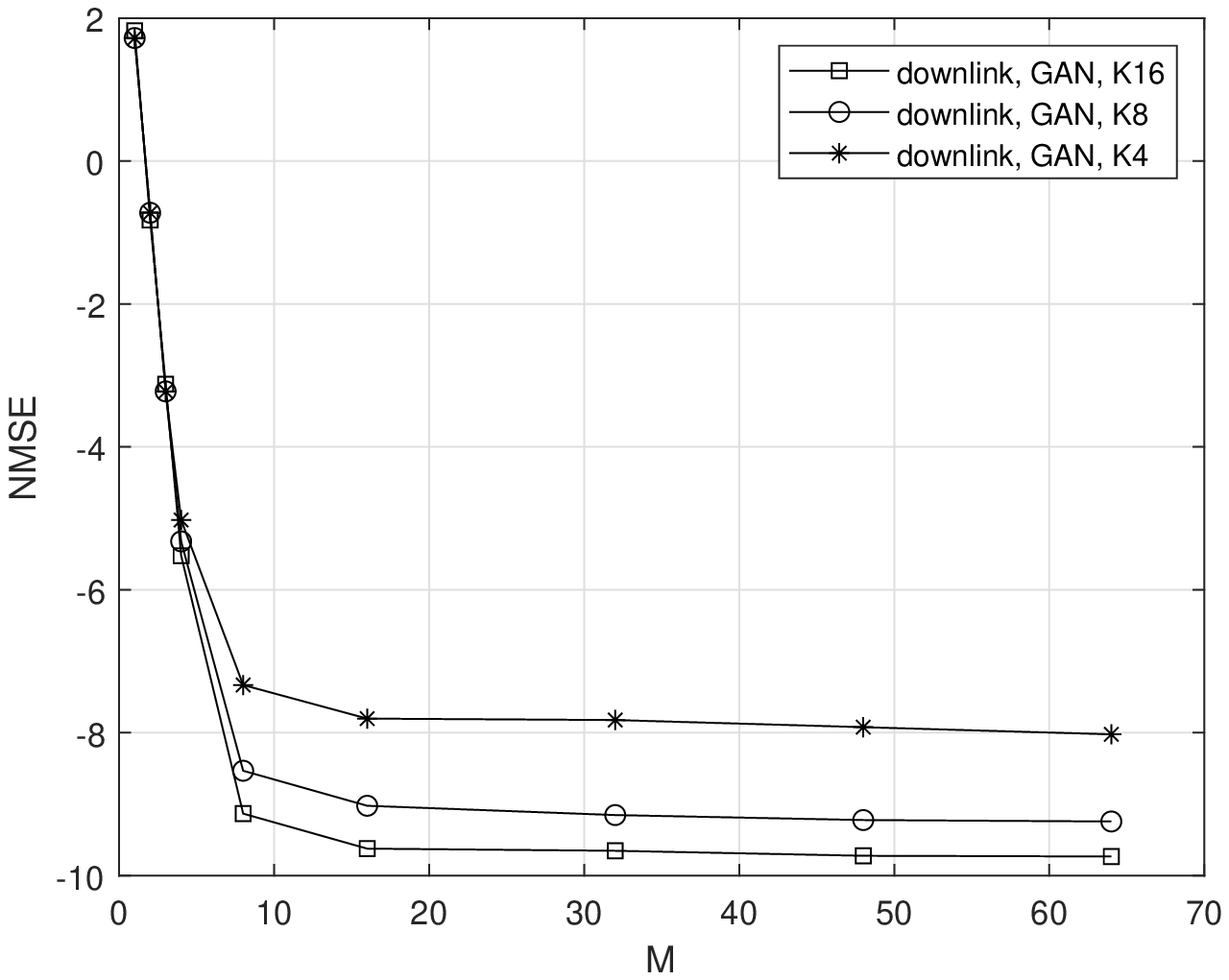}}
\caption{NMSE vs. $|\mathcal{M}_{\rm dl}|$, with $|\mathcal{K}_{\rm up}|=K=16$, $|\mathcal{K}_{\rm dl}|=p$, and SNR= 20 (dB).}
\label{NMSE_3}}
    \end{minipage}%
    \hfill
    \begin{minipage}[t]{0.47\textwidth}
       \centering{
\psfrag{Rate}[c][c]{{\small Rate (bps/Hz)}}
\psfrag{SNR}[c][c]{ {\small SNR (dB)}}
\psfrag{uplink, known}{ {\small UP, Perfect CSI}}
\psfrag{downlink, known}{ {\small DL, Perfect CSI}}
\psfrag{uplink, MMSE, M64, K16,12345678}{ {\small UP-LMMSE, $|\mathcal{M}_{\rm up}|=64$}}
\psfrag{uplink, GAN, M64, K16}{ {\small UP-GAN, $|\mathcal{M}_{\rm up}|=64$}}
\psfrag{downlink, MMSE, M64, K16}{ {\small DL-LMMSE, $|\mathcal{M}_{\rm up}|=64$}}
\psfrag{downlink, GAN, M64, K16}{ {\small DL-GAN, $|\mathcal{M}_{\rm dl}|=64$}}
\psfrag{downlink, GAN, M32, K16}{ {\small DL-GAN, $|\mathcal{M}_{\rm dl}|=32$}}
\psfrag{downlink, GAN, M16, K16}{ {\small DL-GAN, $|\mathcal{M}_{\rm dl}|=16$}}
\psfrag{downlink, Modified R2F2}{ {\small DL-Modified-R2F2}}
\psfrag{Full reciprocity}{ {\small DL-Full-Reciprocity, $\boldsymbol{\phi}^{\rm dl}= \boldsymbol{\phi}^{\rm up}$}}
\psfrag{Full reciprocity, phase adjust}{ {\small DL-Full-Reciprocity, $\boldsymbol{\phi}^{\rm dl}= 2 \pi f_{\rm dl}\boldsymbol{\tau}$}}
\resizebox{!}{6.6cm}{\includegraphics{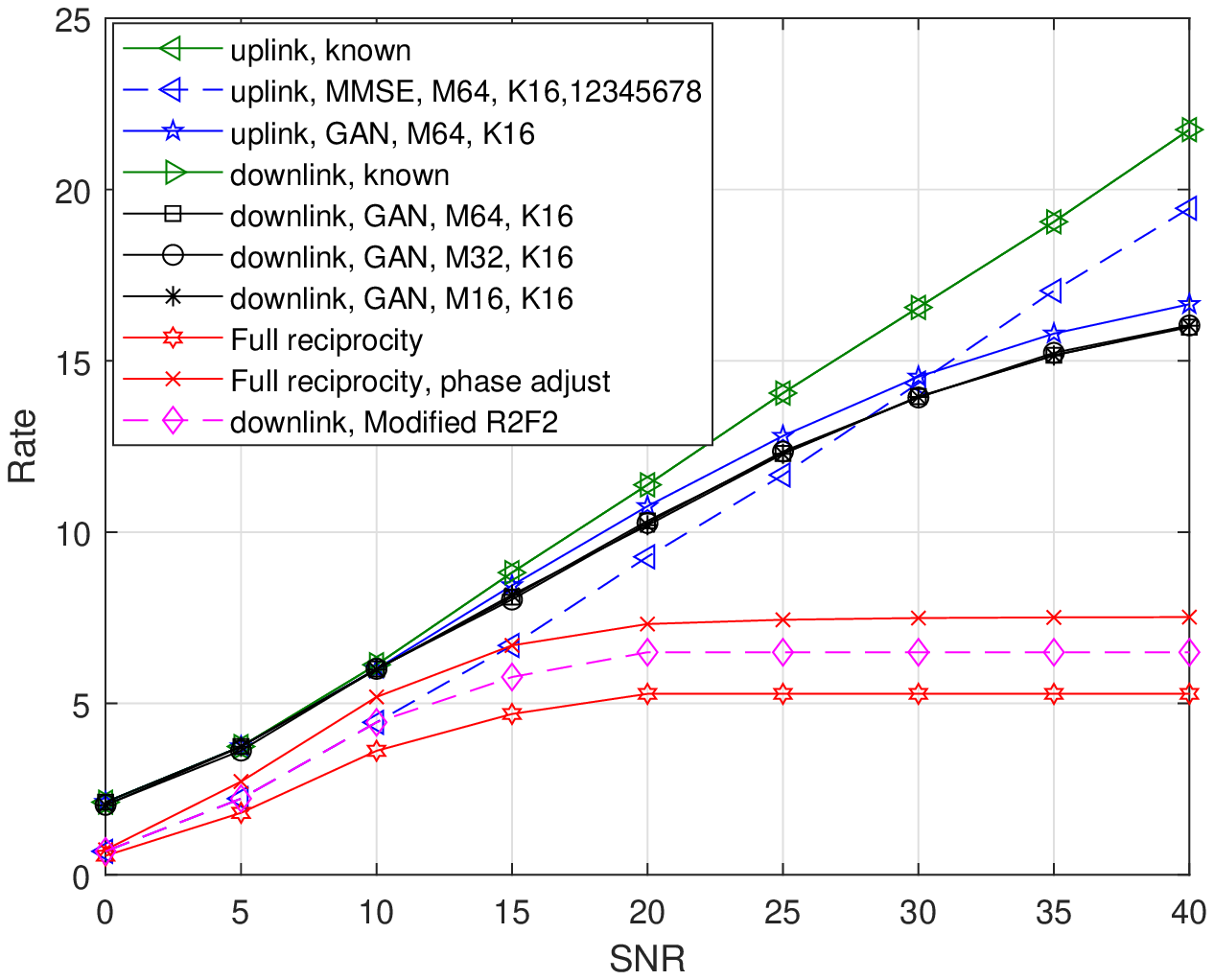}}
\caption{Rate per subcarrier vs. SNR, $|\mathcal{K}_{\rm up}|=|\mathcal{K}_{\rm dl}|=K=16$, and $p=|\mathcal{K}_{\rm dl}|$.}
\label{Rate_1}}
    \end{minipage}
\end{figure*}

In addition, the weights ${\cal W}_g$ of the generator $G_{{\cal W}_g}(\cdot)$ encode a probability distribution over the space of $\left(\balpha, \boldsymbol{\tau}, \btheta \right)$, such that we can draw samples from that distribution from a low-dimensional standard Gaussian distribution. This implies that any sample generated by $G_{{\cal W}_g}(\cdot)$ satisfies the constraints in \eqref{LS_opt_x_up}(as shown in footnote \ref{foot_const}). Therefore, our DGM-based technique performs well even with low-dimensional noisy pilot measurements. This explains the robustness of our proposed technique to the noise level.

It is worth mentioning that, if we have enough pilot measurements, the performance of the proposed DL-GAN is independent of $|\mathcal{M}_{\rm dl}|$ as long as $|\mathcal{M}_{\rm dl}|\geq L$. The reason is that the training power $P_T$ in each time slot is distributed among the pilot symbols transmitted across the $|\mathcal{M}_{\rm dl}|$ antennas in the downlink. Indeed, adding more antennas for downlink training does not affect the received SNR at the UE. However, when $|\mathcal{M}_{\rm dl}|< L$, the $p|\mathcal{K}_{\rm dl}|\times L$ matrix $\mathcal{B}$ in \eqref{DL_received signal_non_linear} is of rank $|\mathcal{M}_{\rm dl}|$, leading to an identifiability issue in recovering $\{\phi_l^{\rm dl}\}_{l=1}^L$ from \eqref{DL_received signal_non_linear}. This becomes more evident in Fig. \ref{NMSE_3}, where we plot NMSE vs. $|\mathcal{M}_{\rm dl}|$. Note that the requirement that $|\mathcal{M}_{\rm dl}|\geq L$ can be relaxed if train the channel in the time domain\cite{8490886, 8647635}. However, the time domain approach is beyond the scope of this paper.

Fig. \ref{NMSE_2} plots NMSE vs. $p$, where we compare the performance of DL-GAN (for different $|\mathcal{M}_{\rm dl}|$) with DL-Modified-R2F2 at SNR = $20$ (dB). The performance gap between the DL-Modified-R2F2 and our DGM-based technique still exists even by increasing $p$ (adding more observations). This indicates the lack of convergence in DL-Modified-R2F2 scenario. In the DL-GAN scenario, adding more pilot measurements, either by increasing  $p$ or $|\mathcal{K}_{\rm dl}|$, will improve the channel estimation performance. \label{K_increase}The reason is that the number of underlying channel parameters is independent of the number subcarriers or the length of pilots. Therefore, by increasing either of these quantities, we collect more observation that in turn, yields a better estimate of the same number of channel parameters. However, as shown, such an improvement is not consistent, i.e., DL-GAN does not always gain by increasing the amount of pilot measurements. This is mainly limited by the accuracy of the estimation of $\left(\balpha, \boldsymbol{\tau}, \btheta \right)$ in the uplink, which is dominated by the representation capability of $G_{{\cal W}_g}(\cdot)$.

Similar to what we observed in Fig. \ref{NMSE_1}, when $|\mathcal{M}_{\rm dl}|\geq L$, adding more antennas for downlink training does not improve the performance. For very small $p$ (i.e., $p=1$ and $2$), the DL-GAN yields poor performance even with large $|\mathcal{M}_{\rm dl}|$. This limited performance is due to the fact that the matrix $\mathcal{B}$ in \eqref{DL_received signal_non_linear} is not full-column rank in this range of $p$, and imposes an identifiability issue in recovery of $\{\phi_l^{\rm dl}\}_{l=1}^L$ from \eqref{DL_received signal_non_linear}.

To further assess the performance of our DGM-based technique,  Fig. \ref{NMSE_3} plots the NMSE vs. $|\mathcal{M}_{\rm dl}|$, the number of antennas used for downlink training. It is shown that, for fixed $p$ (or $|\mathcal{K}_{\rm dl}|$), more pilot measurements does not always benefit the proposed DGM-based technique. Similar to what we observed in Fig. \ref{NMSE_2}, this observation can be attributed to the accuracy of the estimate of $\left(\balpha, \boldsymbol{\tau}, \btheta \right)$ obtained via the uplink training, which, as mentioned, is dominated by the representation capability of $G_{{\cal W}_g}(\cdot)$ .

\begin{figure*}[t]
    \centering
    \begin{minipage}[t]{.47\textwidth}
        \centering{
\psfrag{Rate}[c][c]{{\small Rate (bps/Hz)}}
\psfrag{sub}[c][c]{ {\small $p$}}
\psfrag{downlink, known}{ {\small DL, Perfect CSI}}
\psfrag{downlink, Modified R2F2}{ {\small DL-Modified-R2F2}}
\psfrag{downlink, GAN, M64, PT70}{ {\small DL, GAN, $|\mathcal{M}_{\rm dl}|=64$}}
\psfrag{downlink, GAN, M8, PT70}{ {\small DL-GAN, $|\mathcal{M}_{\rm dl}|=16$}}
\psfrag{downlink, GAN, M32, PT70}{ {\small DL-GAN, $|\mathcal{M}_{\rm dl}|=32$}}
\resizebox{!}{6.6cm}{\includegraphics{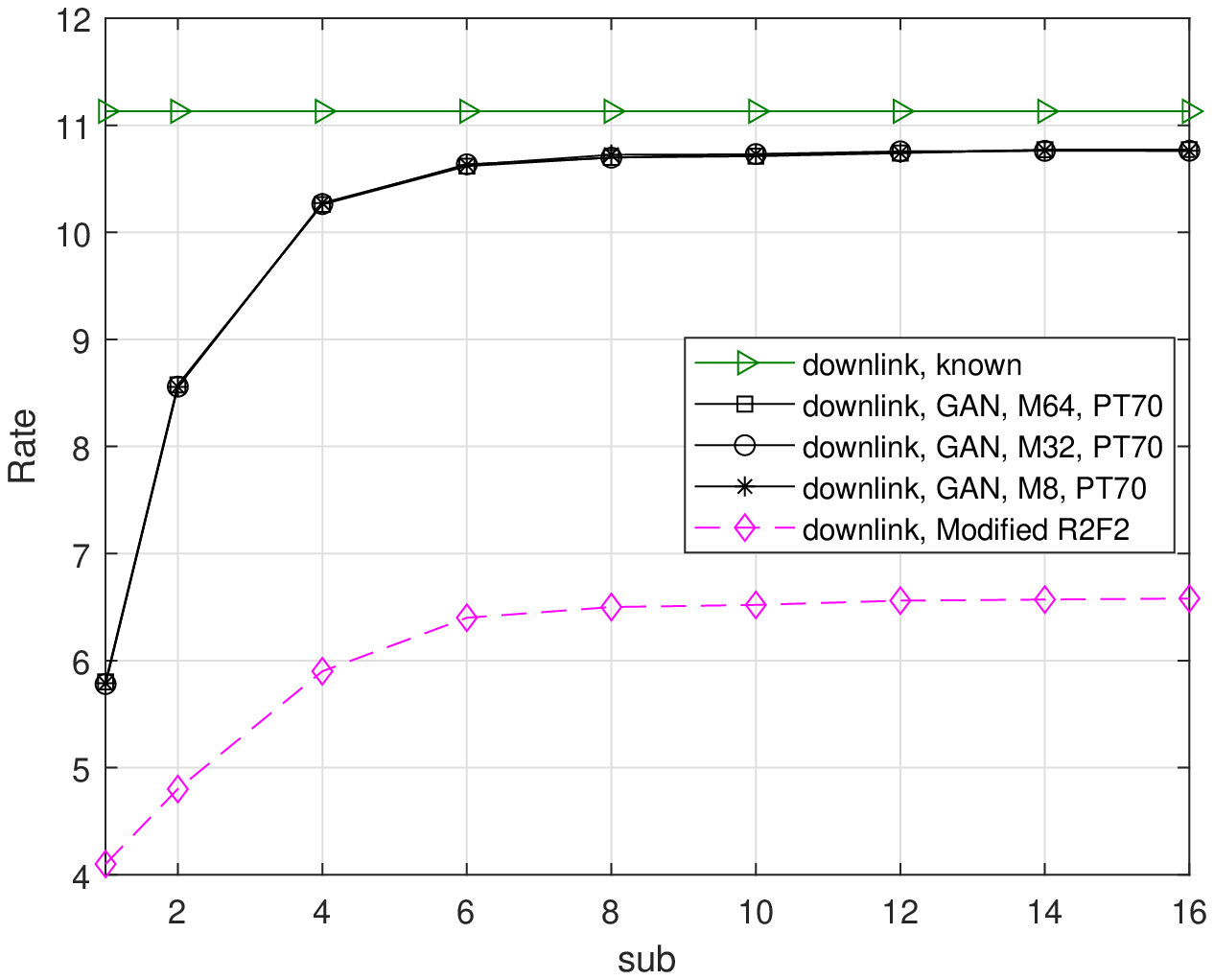}}
\caption{Rate per subcarrier vs. $p$, with $|\mathcal{K}_{\rm dl}|=p$, $|\mathcal{K}_{\rm up}|=K=16$, and SNR= 20 (dB).}
\label{Rate_2}}
    \end{minipage}%
    \hfill
    \begin{minipage}[t]{0.47\textwidth}
       \centering{
\psfrag{Rate}[c][c]{{\small Rate (bps/Hz)}}
\psfrag{M}[c][c]{ {\small $|\mathcal{M}_{\rm dl}|$}}
\psfrag{downlink, known}{ {\small DL, Perfect CSI}}
\psfrag{downlink, GAN, K16}{ {\small DL-GAN, $p=16$}}
\psfrag{downlink, GAN, K8}{ {\small DL-GAN, $p=8$}}
\psfrag{downlink, GAN, K4}{ {\small DL-GAN, $p=4$}}
\resizebox{!}{6.6cm}{\includegraphics{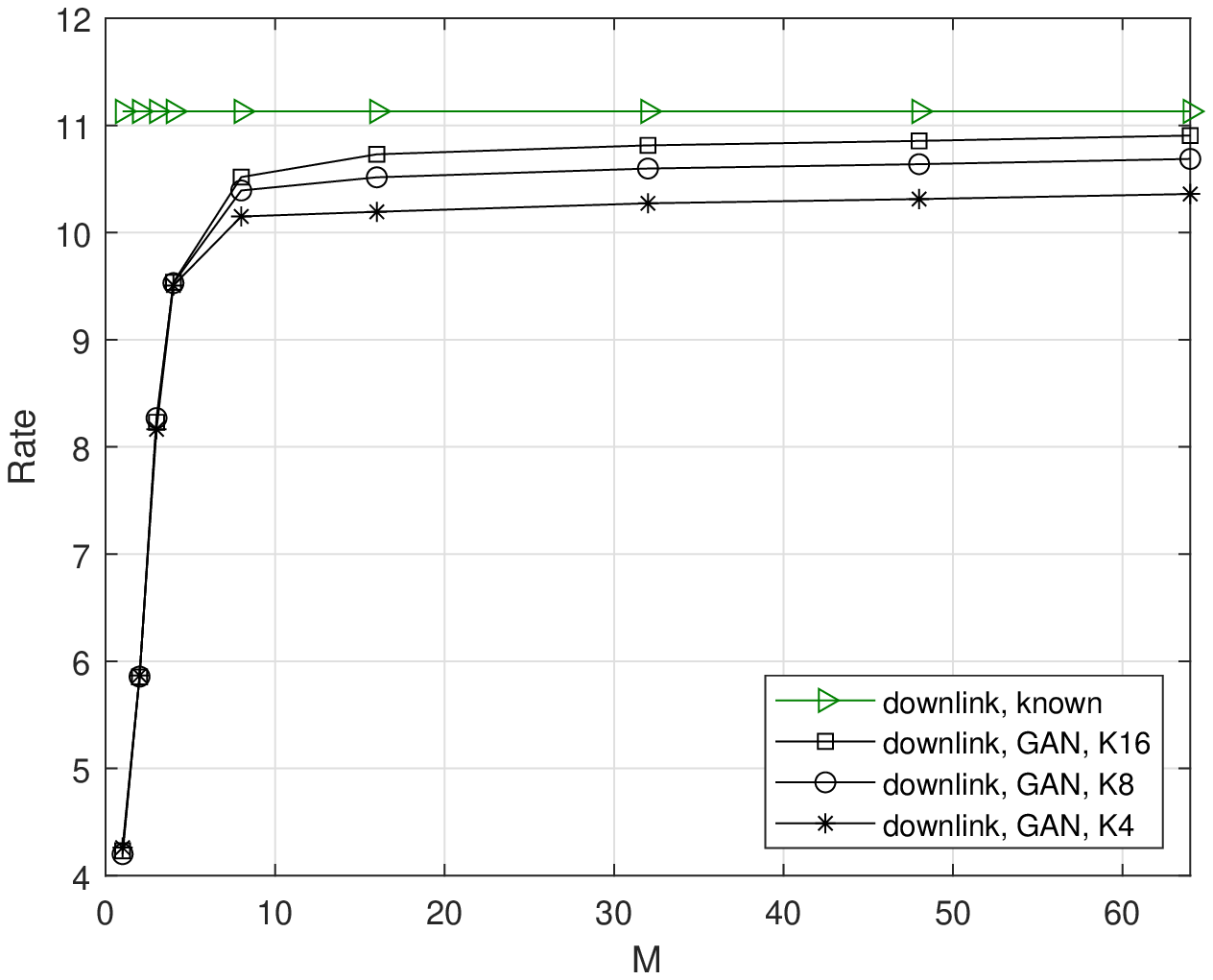}}
\caption{Rate per subcarrier vs. $|\mathcal{M}_{\rm dl}|$, with $|\mathcal{K}_{\rm up}|=K=16$, $|\mathcal{M}_{\rm up}|=M=64$, $|\mathcal{K}_{\rm dl}| =p$ and SNR= 20 (dB).}
\label{Rate_3}}
    \end{minipage}
\end{figure*}

\subsubsection{Rate}
In this subsection, we explore the impact of our proposed technique on the achievable rate. The plots in Figs. \ref{Rate_1}, \ref{Rate_2}, and \ref{Rate_3} correspond to the NMSE curves we studied in the previous subsection. Fig. \ref{Rate_1} plots the rate vs. SNR. In the uplink, the UP-GAN outperforms the UP-LMMSE technique in practical ranges of SNR as it does in Fig. \ref{NMSE_1}. This is due to a better estimate of channel matrix in this range of SNR. This, by itself, is attributed to the rich prior stored in the weights of $G_{{\cal W}_g}(\cdot)$ network. For the same reason, the DL-GAN yields a much better rate performance compared to what  DL-Full-Reciprocity and DL-Modified-R2F2 do. Note that the saturation in rate at high SNR is related to the error floor in channel estimation, which comes from the limits in representation capability of $G_{{\cal W}_g}(\bz)$.

In Figs. \ref{Rate_2} and \ref{Rate_3}, assuming SNR= $20$ (dB), we assess the rate performance of our DGM-based technique vs. $p$ and $|\mathcal{M}_{\rm dl}|$, respectively. Here, we assume that the frequency-independent features $\left(\balpha, \boldsymbol{\tau}, \btheta \right)$ are estimated in the uplink as in the UP-GAN scenario. The goal here is to show how the number of downlink pilot observations will affect the rate performance of the proposed technique. As shown in these figures, the rate quickly saturates by increasing $p$ or $|\mathcal{M}_{\rm dl}|$. In other words, adding more training observations does not always improve the rate performance. This can be explained as follows. Due to the partial reciprocity between the uplink and downlink channels, the proposed DGM-based technique will use the estimate of $\left(\balpha, \boldsymbol{\tau}, \btheta \right)$ obtained via the uplink training. Since, these features are not perfectly estimated, mainly because of the representation capability of $G_{{\cal W}_g}(\cdot)$, they will affect the estimation of $\boldsymbol{\phi}^{\rm dl}$ in downlink in a way that adding more training observation does not improve the estimation of $\boldsymbol{\phi}^{\rm dl}$, and therefore, the estimate of channel. The same argument holds true when we increase the training power as shown in Fig.~\ref{Rate_1}.

\begin{figure}
\centering{
\psfrag{SER}[c][c]{{\small SER}}
\psfrag{SNR}[c][c]{ {\small SNR (dB)}}
\psfrag{uplink, known, 1234567890123456}{ {\small UP, Perfect CSI}}
\psfrag{downlink, known}{ {\small DL, Perfect CSI}}
\psfrag{uplink, MMSE, M64, K16}{ {\small UP-LMMSE, $|\mathcal{M}_{\rm up}|=64$}}
\psfrag{downlink, MMSE, M64, K16}{ {\small DL-LMMSE, $|\mathcal{M}_{\rm up}|=64$}}
\psfrag{uplink, GAN, M64, K16}{ {\small UP-GAN, $|\mathcal{M}_{\rm up}|=64$}}
\psfrag{downlink, GAN, M64, K16}{ {\small DL-GAN, $|\mathcal{M}_{\rm dl}|=64$}}
\psfrag{downlink, GAN, M8, K16}{ {\small DL-GAN, $|\mathcal{M}_{\rm dl}|=16$}}
\psfrag{downlink, GAN, M32, K16}{ {\small DL-GAN, $|\mathcal{M}_{\rm dl}|=32$}}
\psfrag{downlink, Modified R2-F2}{ {\small DL-Modified-R2F2}}
\psfrag{Full reciprocity}{ {\small DL-Full-Reciprocity, $\boldsymbol{\phi}^{\rm dl}= \boldsymbol{\phi}^{\rm up}$}}
\psfrag{Full reciprocity, phase adjust}{ {\small DL-Full-Reciprocity, $\boldsymbol{\phi}^{\rm dl}= 2 \pi f_{\rm dl}\boldsymbol{\tau}$}}
\resizebox{!}{6.6cm}{\includegraphics{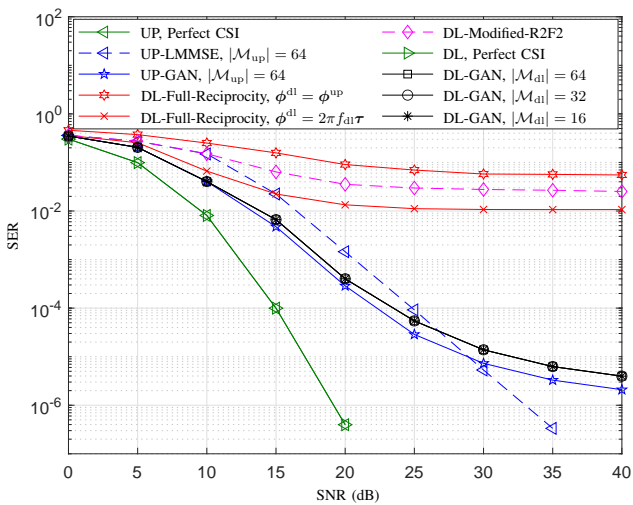}}
\caption{SER per subcarrier vs. SNR, $|\mathcal{K}_{\rm up}|=|\mathcal{K}_{\rm dl}|=K=16$ and $p=|\mathcal{K}_{\rm dl}|$.}
\label{SER_1}}
\end{figure}

\subsubsection{Symbol Error Rate}
The SER vs. SNR plot is given in Fig.~\ref{SER_1}, where we use QPSK modulation. As can be seen from this figure, in practical ranges of SNR, the UP-GAN yields a better SER performance compared to our benchmark UP-LMMSE. This observation is expected since the UP-GAN provides a better estimate of the channel for this range of SNR. As stated before, due to the limits in representation capability of $G_{{\cal W}_g}(\cdot)$, which leads to an error floor in channel estimation (see Fig. \ref{NMSE_1}), the SER does not always improve by increasing the SNR. On the contrary, the SER performance of UP-LMMSE improves as we increase SNR.

In the downlink, the SER of DL-GAN hits a floor limit as we increase SNR. This is due to the fact that the increase in SNR does not always improve the channel estimate in DL-GAN (see Fig. \ref{NMSE_1}). This observation is mainly derived by the limits imposed by $G_{{\cal W}_g}(\cdot)$. Ignoring the role of $G_{{\cal W}_g}(\bz)$, as to what DL-Modified-R2F2 does, yields a huge SER loss. Overall, our DGM-based technique yields a performance gain at low to mid range of SNR.

\begin{figure}
\centering{
\psfrag{NMSE}[c][c]{{\small NMSE (dB)}}
\psfrag{sigma}[c][c]{ {$\sigma_{\phi}$ (degrees)}}
\psfrag{downlink, GAN, SNR10,123456}{ {\small DL-GAN, SNR = $10$ (dB)}}
\psfrag{downlink, GAN, SNR20}{ {\small DL-GAN, SNR = $20$ (dB)}}

\resizebox{!}{6.6cm}{\includegraphics{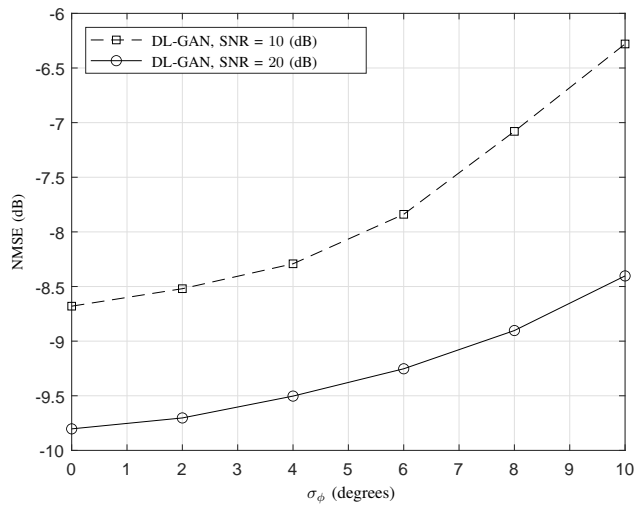}}
\caption{NMSE vs $\sigma_{\phi}$, $|\mathcal{K}_{\rm up}|=|\mathcal{K}_{\rm dl}|=K=16$, $|\mathcal{M}_{\rm dl}|=64$, and $p=|\mathcal{K}_{\rm dl}|$.}
\label{NMSE_vs_feed_1}}
\end{figure}

\subsubsection{Feedback error}\label{feedback_err}
So far we assumed that there is no error when the estimates of $\boldsymbol{\phi}^{\rm dl}$ are fed back to the BS. However, to show the effect of feedback error, we assume that the BS receives $\boldsymbol{\tilde{\phi}}^{\rm dl}$, a noisy version of the $\boldsymbol{\phi}^{\rm dl}$, i.e., $\boldsymbol{\tilde{\phi}}^{\rm dl} = \boldsymbol{\phi}^{\rm dl} + \varepsilon_{\phi}$, where $\varepsilon_{\phi}$ is a Gaussian random variable with zero mean and standard deviation $\sigma_{\phi}$. In Fig. \ref{NMSE_vs_feed_1}, we plot the downlink channel NMSE versus $\sigma_{\phi}$,  for ${\rm SNR} = 10 ,\, 20 \,{\rm (dB)}$. As expected, considering the feedback error will degrade the downlink channel NMSE. However, this degradation is not the same for different SNRs. As we increase $\sigma_{\phi}$, the NMSE is considerably less affected at ${\rm SNR} = 20$ dB compared to that in ${\rm SNR} = 10$ dB. This is mainly because, at high SNR, the estimates of channel parameters have a better quality compared to  low SNR regimes.

\textbf{Remark 3}:\label{remark3} The proposed channel estimation technique in this paper is not limited to ULAs, and it can be extended to 2D arrays. As long as the model relating the angles to received signals is known (i.e., the ``steering vector'') our approach will remain valid. We would have to estimate more parameters such as the elevation and azimuth angles.

\section{Conclusions}\label{Sec:conclusions}
In this paper, we proposed a deep generative model (DGM)-based technique for FDD massive MIMO downlink channel estimation. The proposed technique relies on the partial reciprocity between the uplink and downlink channels, meaning that a portion of the underlying channel parameters are frequency-independent, and they are shared in both uplink and downlink channels. These parameters are estimated via the uplink training. Then, the frequency-specific channel parameters are estimated via downlink training using a very short training signal. To do so, we assume that the frequency-independent channel parameters can be modeled using  some probability distribution, which is learned using DGMs. We showed that our proposed DGM-based channel estimation outperforms the conventional channel estimation techniques in practical ranges of SNR. Our technique also yields a near-optimal performance using only few pilot measurements, indicating a significant reduction in training and feedback overhead in FDD massive MIMO systems. \label{future_work}The work in this paper can be further extended to incorporate the already-existing DoA estimation techniques in estimating the channel parameters as well as considering more practical scenarios such as RF calibration errors.

\bibliographystyle{IEEEtran}
\bibliography{reference}

\begin{thebibliography}{10}
\providecommand{\url}[1]{#1}
\csname url@samestyle\endcsname
\providecommand{\newblock}{\relax}
\providecommand{\bibinfo}[2]{#2}
\providecommand{\BIBentrySTDinterwordspacing}{\spaceskip=0pt\relax}
\providecommand{\BIBentryALTinterwordstretchfactor}{4}
\providecommand{\BIBentryALTinterwordspacing}{\spaceskip=\fontdimen2\font plus
\BIBentryALTinterwordstretchfactor\fontdimen3\font minus
  \fontdimen4\font\relax}
\providecommand{\BIBforeignlanguage}[2]{{%
\expandafter\ifx\csname l@#1\endcsname\relax
\typeout{** WARNING: IEEEtran.bst: No hyphenation pattern has been}%
\typeout{** loaded for the language `#1'. Using the pattern for}%
\typeout{** the default language instead.}%
\else
\language=\csname l@#1\endcsname
\fi
#2}}
\providecommand{\BIBdecl}{\relax}
\BIBdecl

\bibitem{Marzetta_noncoop}
T.~L. Marzetta, ``Non-cooperative cellular wireless with unlimited numbers of
  base station antennas,'' \emph{IEEE Trans. Wireless Commun.}, vol.~9, no.~11,
  pp. 3590--3600, Nov. 2010.

\bibitem{Marzetta_scaleUp}
F.~Rusek, D.~Persson, B.~K. Lau, E.~G. Larsson, T.~L. Marzetta, O.~Edfors, and
  F.~Tufvesson, ``Scaling up {{{MIMO}}}: Opportunities and challenges with very
  large arrays,'' \emph{IEEE Signal Process. Mag.}, vol.~30, no.~1, pp. 40--60,
  Jan. 2013.

\bibitem{1404883}
{Jiann-Ching Guey} and L.~D. {Larsson}, ``Modeling and evaluation of mimo
  systems exploiting channel reciprocity in {TDD} mode,'' in \emph{IEEE Veh.
  Tech. Conf.}, vol.~6, Sep. 2004, pp. 4265--4269.

\bibitem{pilot_contam_Jose}
J.~Jose, A.~Ashikhmin, T.~L. Marzetta, and S.~Vishwanath, ``Pilot contamination
  and precoding in multi-cell {TDD} systems,'' \emph{IEEE Trans. Wireless
  Commun.}, vol.~10, no.~8, pp. 2640--2651, Aug.ust 2011.

\bibitem{Larsson6736761}
E.~G. Larsson, O.~Edfors, F.~Tufvesson, and T.~L. Marzetta, ``Massive
  {{{MIMO}}} for next generation wireless systems,'' \emph{IEEE Commun. Mag.},
  vol.~52, no.~2, pp. 186--195, February 2014.

\bibitem{905894}
B.~Hochwald and T.~Maretta, ``Adapting a downlink array from uplink
  measurements,'' \emph{IEEE Trans. Signal Process.}, vol.~49, no.~3, pp.
  642--653, 2001.

\bibitem{609909}
G.~Raleigh and V.~Jones, ``Adaptive antenna transmission for frequency duplex
  digital wireless communication,'' in \emph{Int. Conf. Commun. (ICC)}, vol.~2,
  1997, pp. 641--646 vol.2.

\bibitem{7400949}
R.~W. Heath, N.~Gonz\'{a}lez-Prelcic, S.~Rangan, W.~Roh, and A.~M. Sayeed, ``An
  overview of signal process. techniques for millimeter wave {MIMO} systems,''
  \emph{IEEE J. Sel. Topics Signal Process.}, vol.~10, no.~3, pp. 436--453,
  Apr. 2016.

\bibitem{753729}
B.~Fleury, M.~Tschudin, R.~Heddergott, D.~Dahlhaus, and K.~Ingeman~Pedersen,
  ``Channel parameter estimation in mobile radio environments using the {SAGE}
  algorithm,'' \emph{IEEE J. Sel. Areas Commun.}, vol.~17, no.~3, pp. 434--450,
  1999.

\bibitem{17564}
P.~{Stoica} and A.~{Nehorai}, ``{MUSIC}, maximum likelihood, and cramer-rao
  bound,'' \emph{IEEE Trans. Acoust., Speech, Signal Process.}, vol.~37, no.~5,
  pp. 720--741, 1989.

\bibitem{32276}
R.~{Roy} and T.~{Kailath}, ``{ESPRIT}-estimation of signal parameters via
  rotational invariance techniques,'' \emph{IEEE Trans. Acoust., Speech, Signal
  Process.}, vol.~37, no.~7, pp. 984--995, 1989.

\bibitem{8481590}
C.~Qian, X.~Fu, N.~D. Sidiropoulos, and Y.~Yang, ``Tensor-based channel
  estimation for dual-polarized massive {MIMO} systems,'' \emph{IEEE Trans.
  Signal Process.}, vol.~66, no.~24, pp. 6390--6403, 2018.

\bibitem{5454399}
W.~U. Bajwa, J.~Haupt, A.~M. Sayeed, and R.~Nowak, ``Compressed channel
  sensing: A new approach to estimating sparse multipath channels,''
  \emph{Proceedings of the IEEE}, vol.~98, no.~6, pp. 1058--1076, June 2010.

\bibitem{AlkhateebHBest}
A.~Alkhateeb, O.~E. Ayach, G.~Leus, and R.~W. Heath, ``Channel estimation and
  hybrid precoding for millimeter wave cellular systems,'' \emph{IEEE J. Sel.
  Topics Signal Process.}, vol.~8, no.~5, pp. 831--846, Oct. 2014.

\bibitem{5621984}
C.~R. {Berger}, Z.~{Wang}, J.~{Huang}, and S.~{Zhou}, ``Application of
  compressive sensing to sparse channel estimation,'' \emph{IEEE Commun. Mag.},
  vol.~48, no.~11, pp. 164--174, 2010.

\bibitem{7458188}
J.~{Lee}, G.~{Gil}, and Y.~H. {Lee}, ``Channel estimation via orthogonal
  matching pursuit for hybrid {MIMO} systems in millimeter wave
  communications,'' \emph{IEEE Trans. Commun.}, vol.~64, no.~6, pp. 2370--2386,
  2016.

\bibitem{6816089}
X.~Rao and V.~K.~N. Lau, ``Distributed compressive {CSIT} estimation and
  feedback for {FDD} multi-user massive {MIMO} systems,'' \emph{IEEE Trans.
  Signal Process.}, vol.~62, no.~12, pp. 3261--3271, June 2014.

\bibitem{8437163}
P.~N. {Alevizos}, X.~{Fu}, N.~D. {Sidiropoulos}, Y.~{Yang}, and A.~{Bletsas},
  ``Limited feedback channel estimation in massive {MIMO} with non-uniform
  directional dictionaries,'' \emph{IEEE Trans. Signal Process.}, vol.~66,
  no.~19, pp. 5127--5141, 2018.

\bibitem{8697125}
Y.~{Han}, T.~{Hsu}, C.~{Wen}, K.~{Wong}, and S.~{Jin}, ``Efficient downlink
  channel reconstruction for {FDD} multi-antenna systems,'' \emph{IEEE Trans.
  on Wireless Commun.}, vol.~18, no.~6, pp. 3161--3176, June 2019.

\bibitem{8648511}
Y.~{Han}, Q.~{Liu}, C.~{Wen}, S.~{Jin}, and K.~{Wong}, ``{FDD} massive {MIMO}
  based on efficient downlink channel reconstruction,'' \emph{IEEE Trans.
  Commun.}, vol.~67, no.~6, pp. 4020--4034, 2019.

\bibitem{7564736}
W.~{Huang}, Z.~{Lu}, C.~{Zhang}, Y.~{Huang}, S.~{Jin}, and L.~{Yang},
  ``Beam-blocked compressive channel estimation for {FDD} massive {MIMO}
  systems,'' in \emph{IEEE Wireless Commun. Networking Conf. (WCNC), Doha,},
  2016, pp. 1--6.

\bibitem{8298537}
J.~{Dai}, A.~{Liu}, and V.~K.~N. {Lau}, ``{FDD} massive {MIMO} channel
  estimation with arbitrary {2D}-array geometry,'' \emph{IEEE Trans. Signal
  Process.}, vol.~66, no.~10, pp. 2584--2599, 2018.

\bibitem{8038934}
X.~{Cheng}, J.~{Sun}, and S.~{Li}, ``Channel estimation for {FDD} multi-user
  massive {MIMO}: A variational {Bayesian} inference-based approach,''
  \emph{IEEE Trans. Wireless Commun.}, vol.~16, no.~11, pp. 7590--7602, 2017.

\bibitem{7390062}
C.~{Tseng}, J.~{Wu}, and T.~{Lee}, ``Enhanced compressive downlink {CSI}
  recovery for {FDD} massive {MIMO} systems using weighted block ${\ell
  _1}$-minimization,'' \emph{IEEE Trans. Commun.}, vol.~64, no.~3, pp.
  1055--1067, 2016.

\bibitem{7893736}
Y.~{Han}, J.~{Lee}, and D.~J. {Love}, ``Compressed sensing-aided downlink
  channel training for {FDD} massive {MIMO} systems,'' \emph{IEEE Trans.
  Commun.}, vol.~65, no.~7, pp. 2852--2862, 2017.

\bibitem{7564429}
A.~{Liu}, V.~K.~N. {Lau}, and W.~{Dai}, ``Exploiting burst-sparsity in massive
  {MIMO} with partial channel support information,'' \emph{IEEE Trans. Wireless
  Commun.}, vol.~15, no.~11, pp. 7820--7830, 2016.

\bibitem{7524027}
H.~{Xie}, F.~{Gao}, S.~{Zhang}, and S.~{Jin}, ``A unified transmission strategy
  for {TDD/FDD} massive {MIMO} systems with spatial basis expansion model,''
  \emph{IEEE Trans. Veh. Tech.}, vol.~66, no.~4, pp. 3170--3184, 2017.

\bibitem{8778685}
W.~{Peng}, W.~{Li}, W.~{Wang}, X.~{Wei}, and T.~{Jiang}, ``Downlink channel
  prediction for time-varying {FDD} massive {MIMO} systems,'' \emph{IEEE J.
  Sel. Topics Signal Process.}, vol.~13, no.~5, pp. 1090--1102, 2019.

\bibitem{9013411}
F.~{Rottenberg}, R.~{Wang}, J.~{Zhang}, and A.~F. {Molisch}, ``Channel
  extrapolation in {FDD} massive {MIMO}: Theoretical analysis and numerical
  validation,'' in \emph{IEEE Global Commun. Conf. (GLOBECOM), Waikoloa, HI,
  USA,}, 2019, pp. 1--7.

\bibitem{DinaR2F2}
D.~Vasisht, S.~Kumar, H.~Rahul, and D.~Katabi, ``Eliminating channel feedback
  in next-generation cellular networks,'' in \emph{Proc.~{ACM SIGCOMM} Conf.},
  Aug. 2016, pp. 398--411.

\bibitem{9322570}
Z.~Zhong, L.~Fan, and S.~Ge, ``{FDD} massive {MIMO} uplink and downlink channel
  reciprocity properties: Full or partial reciprocity?'' in \emph{IEEE Global
  Commun. Conf. (GLOBECOM), Taipei, Taiwan}, Dec. 2020.

\bibitem{9014542}
S.~{Kim}, J.~W. {Choi}, and B.~{Shim}, ``Downlink pilot precoding and
  compressed channel feedback for {FDD}-based cell-free systems,'' \emph{IEEE
  Trans. Wireless Commun.}, vol.~19, no.~6, pp. 3658--3672, 2020.

\bibitem{7727995}
H.~{Xie}, F.~{Gao}, and S.~{Jin}, ``An overview of low-rank channel estimation
  for massive {MIMO} systems,'' \emph{IEEE Access}, vol.~4, pp. 7313--7321,
  2016.

\bibitem{8052521}
H.~{Ye}, G.~Y. {Li}, and B.~{Juang}, ``Power of deep learning for channel
  estimation and signal detection in {OFDM} systems,'' \emph{IEEE Wireless
  Commun.Lett.}, vol.~7, no.~1, pp. 114--117, 2018.

\bibitem{8680715}
Z.~{Jia}, W.~{Cheng}, and H.~{Zhang}, ``A partial learning-based detection
  scheme for massive {MIMO},'' \emph{IEEE Wireless Commun. Lett.}, vol.~8,
  no.~4, pp. 1137--1140, 2019.

\bibitem{8672767}
Y.~{Yang}, F.~{Gao}, X.~{Ma}, and S.~{Zhang}, ``Deep learning-based channel
  estimation for doubly selective fading channels,'' \emph{IEEE Access},
  vol.~7, pp. 36\,579--36\,589, 2019.

\bibitem{9129762}
W.~{Xu}, F.~{Gao}, S.~{Jin}, and A.~{Alkhateeb}, ``{3D }scene-based beam
  selection for {mmWave} communications,'' \emph{IEEE Wireless Commun. Lett.},
  vol.~9, no.~11, pp. 1850--1854, 2020.

\bibitem{9048966}
X.~{Li} and A.~{Alkhateeb}, ``Deep learning for direct hybrid precoding in
  millimeter wave massive {MIMO} systems,'' in \emph{Proc. IEEE Asilomar Conf.
  Signal, Syst., Compute., Pacific Grove, CA, US,}, 2019, pp. 800--805.

\bibitem{9121328}
M.~{Alrabeiah} and A.~{Alkhateeb}, ``Deep learning for {mmWave} beam and
  blockage prediction using sub-6 {GHz} channels,'' \emph{IEEE Trans. Commun.},
  vol.~68, no.~9, pp. 5504--5518, 2020.

\bibitem{8647328}
P.~{Dong}, H.~{Zhang}, and G.~Y. {Li}, ``Machine learning prediction based
  {CSI} acquisition for {FDD} massive {MIMO} downlink,'' in \emph{Proc. IEEE
  Global Commun. Conf. (GLOBECOM), Abu Dhabi, UAE,}, 2018, pp. 1--6.

\bibitem{9048929}
M.~{Alrabeiah} and A.~{Alkhateeb}, ``Deep learning for {TDD} and {FDD} massive
  {MIMO}: Mapping channels in space and frequency,'' in \emph{IEEE Asilomar
  Conf. Signal, Syst., Compute., Pacific Grove, CA, US,}, 2019, pp. 1465--1470.

\bibitem{8795533}
Y.~{Yang}, F.~{Gao}, G.~Y. {Li}, and M.~{Jian}, ``Deep learning-based downlink
  channel prediction for {FDD} massive {MIMO} system,'' \emph{IEEE Commun.
  Lett.}, vol.~23, no.~11, pp. 1994--1998, 2019.

\bibitem{9057648}
H.~Choi and J.~Choi, ``Downlink extrapolation for {FDD} multiple antenna
  systems through neural network using extracted uplink path gains,''
  \emph{IEEE Access}, vol.~8, pp. 67\,100--67\,111, 2020.

\bibitem{6515173}
T.~S. {Rappaport}, S.~{Sun}, R.~{Mayzus}, H.~{Zhao}, Y.~{Azar}, K.~{Wang},
  G.~N. {Wong}, J.~K. {Schulz}, M.~{Samimi}, and F.~{Gutierrez}, ``Millimeter
  wave mobile communications for {5G} cellular: {It} will work!'' \emph{IEEE
  Access}, vol.~1, pp. 335--349, 2013.

\bibitem{6834753}
M.~R. {Akdeniz}, Y.~{Liu}, M.~K. {Samimi}, S.~{Sun}, S.~{Rangan}, T.~S.
  {Rappaport}, and E.~{Erkip}, ``Millimeter wave channel modeling and cellular
  capacity evaluation,'' \emph{IEEE J. Sel. Areas Commun.}, vol.~32, no.~6, pp.
  1164--1179, 2014.

\bibitem{8798669}
Y.~{Han}, Q.~{Liu}, C.~{Wen}, M.~{Matthaiou}, and X.~{Ma}, ``Tracking {FDD}
  massive {MIMO} downlink channels by exploiting delay and angular
  reciprocity,'' \emph{IEEE J. Sel. Topics Signal Process.}, vol.~13, no.~5,
  pp. 1062--1076, 2019.

\bibitem{9034179}
P.~{Zhao}, K.~{Ma}, Z.~{Wang}, and S.~{Chen}, ``Virtual angular-domain channel
  estimation for {FDD} based massive {MIMO} systems with partial orthogonal
  pilot design,'' \emph{IEEE Trans. Veh. Tech.}, vol.~69, no.~5, pp.
  5164--5178, 2020.

\bibitem{7136154}
S.~{Imtiaz}, G.~S. {Dahman}, F.~{Rusek}, and F.~{Tufvesson}, ``On the
  directional reciprocity of uplink and downlink channels in frequency division
  duplex systems,'' in \emph{IEEE Symp. Personal, Indoor, Mobile Radio
  Commun.(PIMRC), Washington, DC, US}, 2014, pp. 172--176.

\bibitem{arnold2019enabling}
\BIBentryALTinterwordspacing
M.~{Arnold}, S.~{D\"{o}rner}, S.~{Cammerer}, S.~{Yan}, J.~{Hoydis}, and
  S.~t.~{Brink}, ``Enabling {FDD} massive {MIMO} through deep learning-based
  channel prediction,'' 2019. [Online]. Available:
  \url{http://arxiv.org/abs/1901.03664}
\BIBentrySTDinterwordspacing

\bibitem{9154297}
A.~{Doshi}, E.~{Balevi}, and J.~G. {Andrews}, ``Compressed representation of
  high dimensional channels using deep generative networks,'' in \emph{Proc.
  Int. Workshop Signal Process. Adv. in Wireless Commun. (SPAWC), Atlanta, GA,
  USA,}, 2020, pp. 1--5.

\bibitem{DeepMIMO1}
A.~Alkhateeb, ``{DeepMIMO}: A generic deep learning dataset for millimeter wave
  and massive {MIMO} applications,'' in \emph{Proc. of Information Theory and
  Applications Workshop (ITA)}, San Diego, CA, Feb 2019, pp. 1--8.

\bibitem{Kelley_book}
\BIBentryALTinterwordspacing
C.~T. Kelley, \emph{Iterative Methods for Optimization}.\hskip 1em plus 0.5em
  minus 0.4em\relax Society for Industrial and Applied Mathematics, 1999.
  [Online]. Available:
  \url{https://epubs.siam.org/doi/abs/10.1137/1.9781611970920}
\BIBentrySTDinterwordspacing

\bibitem{R2F2}
D.~Vasisht, S.~Kumar, H.~Rahul, and D.~Katabi, ``Eliminating channel feedback
  in next-generation cellular networks,'' \emph{GetMobile: Mobile Computing and
  Communications}, vol.~21, pp. 26--30, 11 2017.

\bibitem{NIPS2014_5ca3e9b1}
I.~Goodfellow, J.~Pouget-Abadie, M.~Mirza, B.~Xu, D.~Warde-Farley, S.~Ozair,
  A.~Courville, and Y.~Bengio, ``Generative adversarial nets,'' in
  \emph{Advances in Neural Information Processing Systems}, vol.~27, 2014, pp.
  2672--2680.

\bibitem{10.5555/3157096.3157346}
T.~Salimans, I.~Goodfellow, W.~Zaremba, V.~Cheung, A.~Radford, and X.~Chen,
  ``Improved techniques for training (gan)s,'' in \emph{Proc. Int. Conf. Neural
  Info. Process. Systems (NIPS)}.\hskip 1em plus 0.5em minus 0.4em\relax Red
  Hook, NY, USA: Curran Associates Inc., 2016, p. 2234–2242.

\bibitem{DBLP:conf/iclr/CheLJBL17}
T.~{Che}, Y.~{Li}, A.~P. {Jacob}, Y.~{Bengio}, and W.~{Li}, ``Mode regularized
  generative adversarial networks,'' in \emph{5th Int. Conf. Learning
  Representations ({ICLR}) Toulon, France}, Apr. 2017.

\bibitem{Remcom}
Remcom, ``{Wireless InSite},'' \url{http://www.remcom.com/wireless-insite}.

\bibitem{8490886}
J.~{Mirzaei}, R.~S. {Adve}, and S.~{Shahbazpanahi}, ``Semi-blind time-domain
  channel estimation for frequency-selective multiuser massive {MIMO}
  systems,'' \emph{IEEE Trans. Commun.}, vol.~67, no.~2, pp. 1045--1058, 2019.

\bibitem{8647635}
J.~{Mirzaei}, R.~{Adve}, and S.~{ShahbazPanahi}, ``Semi-blind channel
  estimation for frequency-selective massive {MIMO} systems,'' in \emph{IEEE
  Global Commun. Conf. (GLOBECOM), Abu Dhabi, UAE,}, Dec. 2018.

\bibitem{book_Inequalities}
Z.~Cvetkovski, \emph{Inequalities. Theorems, techniques and selected problems},
  11 2012.

\bibitem{10.5555/3327144.3327299}
K.~Scaman and A.~Virmaux, ``Lipschitz regularity of deep neural networks:
  Analysis and efficient estimation,'' in \emph{Proc. Int. Conf. Neural Info.
  Process. Systems (NIPS)}.\hskip 1em plus 0.5em minus 0.4em\relax Red Hook,
  NY, USA: Curran Associates Inc., 2018, p. 3839–3848.

\bibitem{Gouk2021}
\BIBentryALTinterwordspacing
H.~Gouk, E.~Frank, B.~Pfahringer, and M.~J. Cree, ``Regularisation of neural
  networks by enforcing lipschitz continuity,'' \emph{Machine Learning}, vol.
  110, no.~2, pp. 393--416, Feb 2021. [Online]. Available:
  \url{https://doi.org/10.1007/s10994-020-05929-w}
\BIBentrySTDinterwordspacing

\end{thebibliography}

\newpage
\onecolumn
\begin{appendices}

\section{Preliminary definitions and lemmas  }\label{AppA}
\begin{definition}\label{lip_def}
 A function $f: X \rightarrow Y$ is said to be Lipschitz-continuous if it satisfies
 \begin{align}
D_Y \left(f(\bx^{(1)}), f(\bx^{(2)})\right)\leq L_f D_X \left(\bx^{(1)}, \bx^{(2)}\right),
\end{align}
for some real-valued $L_f\geq 0$ and distant metrics $D_X$ and $D_Y$. The value of $L_f$ is known as the Lipschitz constant, and the function can be referred to as being $L_f$-Lipschitz.
\end{definition}

\begin{lemma} \label{lip_composition}
If $f_1(\bx)$ is an $L_1$-Lipschitz  function, and $f_2(\bx)$ is an $L_2$-Lipschitz  function, then $f_1(f_2(\bx))$ is an $L_1L_2$-Lipschitz function.
\end{lemma}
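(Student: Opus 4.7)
The plan is to unwind the Lipschitz definition applied twice, once for each function, and then chain the resulting inequalities. Since the statement is purely about composition, no structure beyond Definition~\ref{lip_def} is needed.

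First I would fix two arbitrary points $\bx^{(1)}, \bx^{(2)}$ in the domain $X$ of $f_2$, and denote $\by^{(i)} \triangleq f_2(\bx^{(i)})$ for $i=1,2$, which lie in the codomain $Y$ of $f_2$. For the composition $f_1 \circ f_2$ to be well-defined, $Y$ must coincide with the domain of $f_1$; I would note this matching of spaces (and of the metric $D_Y$ playing both roles) at the outset so that the chain of inequalities is unambiguous.

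Next I would apply the Lipschitz property of $f_1$ to the pair $\by^{(1)}, \by^{(2)}$, obtaining
\begin{align}
D_Z\!\left(f_1(\by^{(1)}), f_1(\by^{(2)})\right) \leq L_1\, D_Y\!\left(\by^{(1)}, \by^{(2)}\right),
\end{align}
where $Z$ is the codomain of $f_1$. Then, by the Lipschitz property of $f_2$,
\begin{align}
D_Y\!\left(\by^{(1)}, \by^{(2)}\right) = D_Y\!\left(f_2(\bx^{(1)}), f_2(\bx^{(2)})\right) \leq L_2\, D_X\!\left(\bx^{(1)}, \bx^{(2)}\right).
\end{align}
Substituting the second bound into the first yields $D_Z(f_1(f_2(\bx^{(1)})), f_1(f_2(\bx^{(2)}))) \leq L_1 L_2\, D_X(\bx^{(1)}, \bx^{(2)})$, which is precisely the $L_1 L_2$-Lipschitz condition for $f_1 \circ f_2$ according to Definition~\ref{lip_def}. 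Since $\bx^{(1)}, \bx^{(2)}$ were arbitrary, the claim follows.

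There is no real obstacle here; the argument is essentially a one-line chain. The only place where care is needed is the bookkeeping of which metric lives on which space, so that the inner application of $f_1$'s Lipschitz property uses $D_Y$ on the images of $f_2$ (not $D_X$). Once that is clear, the multiplicative Lipschitz constant $L_1 L_2$ drops out immediately from concatenating the two inequalities.
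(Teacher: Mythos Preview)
Your proposal is correct and follows exactly the same approach as the paper: apply the Lipschitz bound for $f_1$ to the images $f_2(\bx^{(1)}), f_2(\bx^{(2)})$, then apply the Lipschitz bound for $f_2$, and concatenate. The paper simply writes this as the one-line chain $\left|f_1(f_2(\bx^{(i)}))- f_1(f_2(\bx^{(j)}))\right|\leq L_1 \left|f_2(\bx^{(i)})- f_2(\bx^{(j)})\right| \leq L_1L_2 \left|\bx^{(i)}- \bx^{(j)}\right|$, whereas you add the (welcome) bookkeeping about the metrics $D_X$, $D_Y$, $D_Z$.
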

\emph{Proof}: We can write
\begin{align}
\left|f_1(f_2(\bx^{(i)}))- f_1(f_2(\bx^{(j)}))\right|\leq L_1 \left|f_2(\bx^{(i)})- f_2(\bx^{(j)})\right| \leq L_1L_2 \left|\bx^{(i)}- \bx^{(j)}\right|. \nonumber
\end{align}

\begin{lemma} \label{lip_sum}
If $f_n(\bx)$ is be an $L_n$-Lipschitz function, then $f(x)=\sum_{n=1}^{N}f_n(\bx)$ is an $L_f$-Lipschitz function, where $L_f \triangleq \max \left\{L_1, L_2, \cdots, L_N\right\}$
\end{lemma}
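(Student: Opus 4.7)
The plan is to proceed by a direct application of the triangle inequality, which is the standard route for bounding the modulus of continuity of a sum of functions. First I would fix arbitrary points $\bx^{(i)}, \bx^{(j)}$ in the common domain and expand the difference of the sums as
\begin{align}
\left| f(\bx^{(i)}) - f(\bx^{(j)}) \right|
= \left| \sum_{n=1}^{N} \left( f_n(\bx^{(i)}) - f_n(\bx^{(j)}) \right) \right|. \nonumber
\end{align}
Applying the triangle inequality for the absolute value (which is immediate by induction on $N$, since the $N=2$ case is the defining property of a norm) moves the modulus inside the summation and yields the bound $\sum_{n=1}^{N} \left| f_n(\bx^{(i)}) - f_n(\bx^{(j)}) \right|$.

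Next, I would invoke the hypothesis that each $f_n$ is $L_n$-Lipschitz, as captured by Definition~\ref{lip_def}, to bound each summand by $L_n \, \|\bx^{(i)} - \bx^{(j)}\|$. Summing these individual bounds produces $\sum_{n=1}^{N} L_n \, \|\bx^{(i)} - \bx^{(j)}\|$, which is already a valid Lipschitz estimate with constant $\sum_{n=1}^{N} L_n$. To obtain the cleaner form stated in the lemma, I would then use the crude estimate $\sum_{n=1}^{N} L_n \leq N \max_{n} L_n$, so that the combined constant can be absorbed into $L_f$ as defined. Since $\bx^{(i)}, \bx^{(j)}$ were arbitrary, this inequality holds for all pairs and therefore verifies that $f$ meets the Lipschitz condition in Definition~\ref{lip_def}.

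The main obstacle here is essentially notational rather than mathematical: the proof itself is a one-line triangle-inequality argument once the individual Lipschitz bounds are invoked. The only subtle point to be careful about is the precise value of the constant $L_f$, since the tightest constant produced by the triangle inequality is $\sum_{n} L_n$, and the $\max$-based form in the lemma is obtained by absorbing the factor of $N$ (or by interpreting $L_f$ as a convenient upper bound used subsequently in the proof of Lemma~\ref{J_lipschitz}). Because the downstream use of this lemma only needs \emph{some} explicit Lipschitz constant to drive the convergence argument in Section~\ref{Sec:conv_analysis}, the weaker-but-simpler form suffices.
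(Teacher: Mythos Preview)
Your approach is essentially the same as the paper's: expand the difference of the sum, apply the triangle inequality, invoke each $L_n$-Lipschitz bound, and then pass to a single constant. The paper's proof is literally this chain of inequalities in one display.

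The one substantive difference is that you are more careful about the final constant than the paper is. The paper writes
\[
\sum_{n=1}^{N} L_n \,\bigl|\bx^{(i)}-\bx^{(j)}\bigr| \;\leq\; L_f \,\bigl|\bx^{(i)}-\bx^{(j)}\bigr|
\]
with $L_f = \max_n L_n$, which is of course false in general since $\sum_n L_n \geq \max_n L_n$. You correctly observe that the triangle-inequality route only yields $\sum_n L_n$ (or at best $N\max_n L_n$), and that the stated $L_f$ should be read as a convenient placeholder rather than the sharp constant. That reading is exactly right: the downstream use in Section~\ref{Sec:conv_analysis} and Appendix~\ref{lip} only needs \emph{some} finite Lipschitz constant, so the discrepancy is harmless for the convergence argument, but your proposal is more accurate than the paper's own proof on this point.
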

\emph{Proof}: We can write
\begin{align}
\left|f(\bx^{(i)})- f(\bx^{(j)})\right|&= \left|\sum_{n=1}^{N}f_n(\bx^{(i)})-\sum_{n=1}^{N}f_n(\bx^{(j)})\right|\leq  \sum_{n=1}^{N} \left|f_n(\bx^{(i)})-f_n(\bx^{(j)})\right|\nonumber\\
&\leq  \sum_{n=1}^{N} L_n \left|\bx^{(i)}-\bx^{(j)}\right| \leq L_f \left|\bx^{(i)}-\bx^{(j)}\right|. \nonumber
\end{align}

\begin{lemma} \label{QM_AM_inequality}
(Quadratic and Arithmetic mean inequality): Let $a_1, a_2, \cdots, a_N$ be positive real numbers, then
\begin{align}
\sqrt{\frac{a_1^2 + a_2^2 + \cdots + a_N^2}{n}}\geq \frac{a_1 + a_2 + \cdots + a_N}{n}.
\end{align}
The equality occurs when  $a_1 = a_2 = \cdots = a_N$.
\end{lemma}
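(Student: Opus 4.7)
The plan is to derive the quadratic-arithmetic mean inequality from the Cauchy--Schwarz inequality, since that route handles both the inequality itself and its equality case in one stroke. (An essentially equivalent derivation proceeds via the non-negativity of the empirical variance, or via $\sum_{i<j}(a_i-a_j)^2 \geq 0$; I would mention one of these as an alternative but not carry it out.)

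First I would apply Cauchy--Schwarz to the two $N$-dimensional vectors $\mathbf{a}\triangleq (a_1, a_2, \ldots, a_N)^T$ and $\mathbf{1}\triangleq (1, 1, \ldots, 1)^T$. This gives
\begin{align}
\left(\sum_{i=1}^{N} a_i\right)^{2} \;=\; (\mathbf{a}^T \mathbf{1})^{2} \;\leq\; \|\mathbf{a}\|^{2}\,\|\mathbf{1}\|^{2} \;=\; N\sum_{i=1}^{N}a_i^{2}.\nonumber
\end{align}
Dividing both sides by $N^{2}$ yields
\begin{align}
\left(\frac{a_1+a_2+\cdots+a_N}{N}\right)^{2} \;\leq\; \frac{a_1^{2}+a_2^{2}+\cdots+a_N^{2}}{N}.\nonumber
\end{align}
Both sides are non-negative (the right side obviously, the left side because the $a_i$ are positive), so taking square roots preserves the inequality and produces the stated QM--AM bound.

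For the equality case I would invoke the standard characterization of equality in Cauchy--Schwarz: $(\mathbf{a}^T\mathbf{1})^{2} = \|\mathbf{a}\|^{2}\|\mathbf{1}\|^{2}$ iff $\mathbf{a}$ and $\mathbf{1}$ are linearly dependent, i.e., $\mathbf{a} = c\,\mathbf{1}$ for some scalar $c$, which is exactly the condition $a_1 = a_2 = \cdots = a_N$. Conversely, if all $a_i$ equal some common value $a$, both sides of the claimed inequality reduce to $a$, establishing equality.

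I do not foresee a real obstacle here; the lemma is a classical textbook inequality and the Cauchy--Schwarz route is a few lines. The only minor points to be careful about are (i) justifying that the square-root step preserves the inequality (handled by non-negativity of both sides), and (ii) stating the equality condition precisely from the Cauchy--Schwarz equality clause rather than merely checking the ``if'' direction. If one wishes to keep the proof self-contained without citing Cauchy--Schwarz, an equally short alternative is to expand $\sum_{1\leq i<j\leq N}(a_i-a_j)^{2}\geq 0$ into $(N-1)\sum_i a_i^{2} - 2\sum_{i<j}a_ia_j \geq 0$ and combine with the identity $(\sum_i a_i)^{2} = \sum_i a_i^{2} + 2\sum_{i<j}a_ia_j$ to obtain $N\sum_i a_i^{2} \geq (\sum_i a_i)^{2}$, with equality iff every pairwise difference vanishes.
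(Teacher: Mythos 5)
Your proof is correct. Note, however, that the paper does not actually prove this lemma at all: its entire ``proof'' is a citation to a standard inequalities reference, so there is nothing to compare against beyond the statement itself. Your Cauchy--Schwarz argument is the standard one, it is complete (including the equality characterization via linear dependence of $\mathbf{a}$ and $\mathbf{1}$, which the paper's statement only asserts in the ``if'' direction), and the alternative via $\sum_{i<j}(a_i-a_j)^2\geq 0$ that you sketch is equally valid. One cosmetic point: the paper's statement writes $n$ in the denominators while indexing the variables up to $N$; your proof correctly treats these as the same quantity, which is clearly the intended reading.
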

\emph{Proof}: See \cite{book_Inequalities}.
\begin{lemma} \label{norm_subvec_inequality}
Given $\bx=\left[\bx_1^T \;\;\bx_2^T\right]^T$ and $\by=\left[\by_1^T\;\; \by_2^T\right]^T$, we have the following inequality
\begin{align}
\| \bx - \by \|\leq \| \bx_1 - \by_1 \| + \| \bx_2 - \by_2 \|, \nonumber
\end{align}
where the equality holds when $\bx=\by$.
\end{lemma}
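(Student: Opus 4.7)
The plan is to reduce the vector inequality to a simple scalar inequality by exploiting the block structure of $\bx$ and $\by$. The first step is to observe that the squared Euclidean norm decomposes additively across the blocks: $\|\bx - \by\|^2 = \|\bx_1 - \by_1\|^2 + \|\bx_2 - \by_2\|^2$. This follows directly from writing $\bx - \by = [(\bx_1 - \by_1)^T\;\; (\bx_2 - \by_2)^T]^T$ and expanding the norm-2 as the sum of squared entries of its argument. Setting $a \triangleq \|\bx_1 - \by_1\| \geq 0$ and $b \triangleq \|\bx_2 - \by_2\| \geq 0$, the claim becomes the scalar statement $\sqrt{a^2 + b^2} \leq a + b$.

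The second step is to establish this scalar inequality. Squaring both non-negative sides gives $(a+b)^2 = a^2 + 2ab + b^2 \geq a^2 + b^2$, where the inequality is immediate from $2ab \geq 0$. Taking square roots, which preserves order on the non-negative reals, yields $\sqrt{a^2+b^2} \leq a+b$. Substituting back the definitions of $a$ and $b$ returns $\|\bx - \by\| \leq \|\bx_1 - \by_1\| + \|\bx_2 - \by_2\|$, which is the desired bound.

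Finally, I would address the equality condition. Equality in $(a+b)^2 \geq a^2+b^2$ is equivalent to $2ab = 0$, i.e., $a = 0$ or $b = 0$. Thus equality in the lemma holds whenever $\bx_1 = \by_1$ or $\bx_2 = \by_2$; in particular, the sufficient condition $\bx = \by$ stated in the lemma follows trivially as the case $a = b = 0$. I do not expect any real obstacle here, since the argument is elementary; the only mild subtlety worth a one-line remark is that the equality condition $\bx = \by$ given in the statement is sufficient but not necessary, the full necessary-and-sufficient condition being that at least one of the two subvector differences vanishes.
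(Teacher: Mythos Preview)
Your proof is correct and follows essentially the same approach as the paper's own proof: both use the block decomposition $\|\bx-\by\|^2 = \|\bx_1-\by_1\|^2 + \|\bx_2-\by_2\|^2$ and then the elementary scalar inequality $\sqrt{a^2+b^2}\leq a+b$ for $a,b\geq 0$. Your additional remark that the equality condition $\bx=\by$ stated in the lemma is only sufficient (the necessary and sufficient condition being that at least one subvector difference vanishes) is a correct observation that the paper omits.
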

\begin{proof}
We can write
\begin{align}
\| \bx - \by \|&= \sqrt{ {\| \bx_1 - \by_1 \|}^2 + {\| \bx_2 - \by_2 \|}^2} \nonumber\\
&\leq \sqrt{ \left(\| \bx_1 - \by_1 \|+ \| \bx_2 - \by_2 \|\right)^2}\nonumber\\
 &= \| \bx_1 - \by_1 \| + \| \bx_2 - \by_2 \|.\nonumber
\end{align}
\end{proof}

\section{Proof of Lipschitz continuity of gradient of $J_{\rm up}\left(G_{{\cal W}_g}(\bz), {\boldsymbol{\phi}^{\rm up}}\right)$} \label{lip}
\begin{proof}
We prove for $|\mathcal{K}_{\rm up}|=1$. The Lipschitz continuity for $|\mathcal{K}_{\rm up}|>1$ can be straightforwardly extended. Let us define ${\tilde{\bz}}\triangleq \left(\bz, \boldsymbol{\phi}\right)$ and $P(\tilde{\bz})\triangleq \triangledown_{\tilde{\bz}}J_{\rm up}\left(G_{{\cal W}_g}(\bz), {\boldsymbol{\phi}}\right)$. We aim to show that there exists $\lambda_p$ such that
\begin{align} \label{P_lip_intro}
\| P(\tilde{\bz}^{(1)}) - P(\tilde{\bz}^{(2)}) \| \leq \lambda_p \| \tilde{\bz}^{(1)} - \tilde{\bz}^{(2)}\|.
\end{align}
 Using the triangle inequality that
\begin{align} \label{fact_sum_upperbound}
\| f(a_1, a_2) - f(b_1, b_2)\| &= \| f(a_1, a_2) - f(a_1, b_2)+  f(a_1, b_2) - f(b_1, b_2)\| \nonumber\\
&\leq \| f(a_1, a_2) - f(a_1, b_2)\|+ \| f(a_1, b_2) - f(b_1, b_2)\|,
\end{align}
we can write
\begin{align} \label{P_lip_intro_split}
\| P({\bz}^{(1)}, {\boldsymbol{\phi}^{(1)}}) - P({\bz}^{(2)}, {\boldsymbol{\phi}^{(2)}}) \| \leq \underbrace{\| P({\bz}^{(1)}, {\boldsymbol{\phi}^{(1)}}) - P({\bz}^{(2)}, {\boldsymbol{\phi}^{(1)}}) \|}_{I_{\bz}}+ \underbrace{\| P({\bz}^{(2)}, {\boldsymbol{\phi}^{(1)}}) - P({\bz}^{(2)}, {\boldsymbol{\phi}^{(2)}}) \|}_{I_{\boldsymbol{\phi}}}.
\end{align}
In the next subsection, we show that $I_{\bz}$ and $I_{\boldsymbol{\phi}}$ are bounded from above.

 \subsection{Bounds on $I_{\bz}$}
 For fixed $\boldsymbol{\phi}$, we denote $f(G_{{\cal W}_g}(\bz))\triangleq \triangledown_{\tilde{\bz}}J_{\rm up}\left(G_{{\cal W}_g}(\bz), {\boldsymbol{\phi}}\right)$ a vector-valued composition function. To prove the Lipschitz-continuity of $f(G_{{\cal W}_g}(\cdot))$, we show that $f(\cdot)$  and $G_{{\cal W}_g}(\cdot)$ are both Lipschitz-continuous, and therefore, according to Lemma \ref{lip_composition}, their composition is also a Lipschitz-continuous function.

Given $\bx=G_{{\cal W}_g}(\bz)$ and defining ${\tilde{\bx}}\triangleq \left(\bx, \boldsymbol{\phi}\right)$, for fixed  $\boldsymbol{\phi}$, we first show the Lipschitz-continuity of $f(\bx)\triangleq \triangledown_{\tilde{\bx}}J_{\rm up}\left(\bx, {\boldsymbol{\phi}}\right)$. We can write $f(\bx) = \sum_{m=1}^{M} f_m(\bx)$, where $f_m(\bx),$ for $m=1,1, \cdots, M$, is given as
\begin{align} \label{f_m_def}
f_m(\bx) \triangleq -2\left(y^{\rm up}_m - h_m(\tilde{\bx})s_k\right)\triangledown_{\tilde{\bx}}h_m(\tilde{\bx}),
\end{align}
where $y^{\rm up}_m$ is the received signal at the $m$th antenna in the uplink, $h_m(\tilde{\bx})\triangleq \sum_{l=1}^{L} \alpha_l e^{j(\phi_l +\beta\tau_l + \gamma_m \sin{\theta_l})}$, $\beta \triangleq \frac{2\pi k}{K} B$, $\gamma_m \triangleq \frac{2\pi}{\lambda}md$, for $m=0,1, \cdots, M-1$. Also,
\begin{align} \label{grad_f_m_def}
&\triangledown_{\tilde{\bx}}h_m(\tilde{\bx}) = \nonumber\\
&\left[\underbrace{\frac{\partial h_m(\tilde{\bx})}{\partial \alpha_1}, \cdots, \frac{\partial h_m(\tilde{\bx})}{\partial \alpha_L}}_{=\triangledown_{\balpha}h_m(\tilde{\bx})}, \underbrace{\frac{\partial h_m(\tilde{\bx})}{\partial \tau_1}, \cdots, \frac{\partial h_m(\tilde{\bx})}{\partial \tau_L}}_{=\triangledown_{\boldsymbol{\tau}}h_m(\tilde{\bx})}, \underbrace{\frac{\partial h_m(\tilde{\bx})}{\partial \theta_1}, \cdots, \frac{\partial h_m(\tilde{\bx})}{\partial \theta_L}}_{=\triangledown_{\btheta}h_m(\tilde{\bx})}, \underbrace{\frac{\partial h_m(\tilde{\bx})}{\partial \phi_1}, \cdots, \frac{\partial h_m(\tilde{\bx})}{\partial \phi_L}}_{=\triangledown_{\boldsymbol{\phi}}h_m(\tilde{\bx})}   \right]^T,
\end{align}
where, for $l=1,2,\cdots, L$, and defining $\omega_l \triangleq \phi_l +\beta\tau_l + \gamma_m \sin{\theta_l}$, the individual partial derivatives are given by
\begin{align} \label{grad_f_m_individual_def}
\frac{\partial h_m(\tilde{\bx})}{\partial \alpha_l} &=  e^{j\omega_l}\\
\frac{\partial h_m(\tilde{\bx})}{\partial \tau_l} &= j\beta \alpha_l e^{j\omega_l}\\
\frac{\partial h_m(\tilde{\bx})}{\partial \theta_l} &= j\gamma_m \alpha_l \cos{\theta_l} e^{j\omega_l}\\
\frac{\partial h_m(\tilde{\bx})}{\partial \phi_l} &= j\alpha_l e^{j\omega_l}.
\end{align}

To prove the the Lipschitz-continuity of $f_m(\bx)$, according to Definition \ref{lip_def}, and using the $2$-norm as the metric, we show that there exists a finite $\lambda_m\geq 0$, such that
\begin{align} \label{fm_lip_def}
\| f_m(\bx^{(1)}) - f_m(\bx^{(2)}) \|  \leq  \lambda_m \|\bx^{(1)} - \bx^{(2)}\|,
\end{align}
where $\bx^{(i)} \triangleq \left(\balpha^{(i)}, \boldsymbol{\tau}^{(i)}, \btheta^{(i)}\right)$ for $i=1,2$. Using \eqref{fact_sum_upperbound}, we express the left hand side in \eqref{fm_lip_def} as
\begin{align} \label{fm_lip_def_triangle}
\| f_m(\bx^{(1)}) - f_m(\bx^{(2)}) \| & \leq  \underbrace{\| f_m(\balpha^{(1)}, \boldsymbol{\tau}^{(1)}, \btheta^{(1)}) - f_m(\balpha^{(2)}, \boldsymbol{\tau}^{(1)}, \btheta^{(1)})\|}_{\triangleq T_{\alpha}}\nonumber\\
& + \underbrace{\| f_m(\balpha^{(2)}, \boldsymbol{\tau}^{(1)}, \btheta^{(1)}) - f_m(\balpha^{(2)}, \boldsymbol{\tau}^{(2)}, \btheta^{(1)})\|}_{\triangleq T_{\tau}}\nonumber\\
& +\underbrace{ \| f_m(\balpha^{(2)}, \boldsymbol{\tau}^{(2)}, \btheta^{(1)}) - f_m(\balpha^{(2)}, \boldsymbol{\tau}^{(2)}, \btheta^{(2)})\|}_{\triangleq T_{\theta}}.
\end{align}
Next, we show that, each term on the right hand side of \eqref{fm_lip_def_triangle} is bounded.
\subsubsection{Bounds on $T_{\alpha}$}
Here, we assume that $\boldsymbol{\tau}^{(1)}$ and $ \btheta^{(1)}$ are fixed. Using the Triangle inequality of \eqref{fact_sum_upperbound}, we can write
\begin{align} \label{T_alpha_upperbound}
T_{\alpha} & \leq \| f_m(\alpha_1^{(1)}, \bar{\balpha}_1,  \boldsymbol{\tau}^{(1)}, \btheta^{(1)}) - f_m(\alpha_1^{(2)}, \bar{\balpha}_1,\boldsymbol{\tau}^{(1)}, \btheta^{(1)})\| \nonumber\\
& + \| f_m(\alpha_2^{(1)}, \bar{\balpha}_2,  \boldsymbol{\tau}^{(1)}, \btheta^{(1)}) - f_m(\alpha_2^{(2)}, \bar{\balpha}_2,\boldsymbol{\tau}^{(1)}, \btheta^{(1)})\| \nonumber\\
& \qquad  \qquad \qquad \vdots
\nonumber\\
& + \| f_m(\alpha_L^{(1)}, \bar{\balpha}_L,  \boldsymbol{\tau}^{(1)}, \btheta^{(1)}) - f_m(\alpha_L^{(2)}, \bar{\balpha}_L,\boldsymbol{\tau}^{(1)}, \btheta^{(1)})\|,
\end{align}
where $\bar{\balpha}_l\triangleq \left(\alpha_1^{(2)}, \cdots, \alpha_{l-1}^{(2)}, \alpha_{l+1}^{(1)}, \cdots, \alpha_L^{(1)}\right)$. Let us define $T_{\alpha_l} \triangleq \| f_m(\alpha_l^{(1)}, \bar{\balpha}_l,  \boldsymbol{\tau}^{(1)}, \btheta^{(1)}) - f_m(\alpha_l^{(2)}, \bar{\balpha}_l,\boldsymbol{\tau}^{(1)}, \btheta^{(1)})\|$. We now aim to show that there exists a bounded $M_{\alpha_l}\geq 0 $, such that $T_{\alpha_l}\leq M_{\alpha_l} \| \alpha_l^{(1)} - \alpha_l^{(2)} \|$. To do so, we note that
\begin{alignat}{2} \label{T_alpha_l_upperbound}
T_{\alpha_l} = 2\| &\left(y^{\rm up}_m - h_m(\alpha_l^{(1)}, \bar{\balpha}_l,  \boldsymbol{\tau}^{(1)}, \btheta^{(1)})s_k\right)\triangledown h_m(\alpha_l^{(1)}, \bar{\balpha}_l,  \boldsymbol{\tau}^{(1)}, \btheta^{(1)}) \nonumber\\
- &\left(y^{\rm up}_m - h_m(\alpha_l^{(2)}, \bar{\balpha}_l,  \boldsymbol{\tau}^{(1)}, \btheta^{(1)})s_k\right)\triangledown h_m(\alpha_l^{(2)}, \bar{\balpha}_l,  \boldsymbol{\tau}^{(1)}, \btheta^{(1)})\|\nonumber\\
\overset{(a)}{\leq} 2\| &\left(y^{\rm up}_m - h_m(\alpha_l^{(1)}, \bar{\balpha}_l,  \boldsymbol{\tau}^{(1)}, \btheta^{(1)})s_k\right)\bar{\bv}_{\alpha_l} - \left(y^{\rm up}_m - h_m(\alpha_l^{(2)}, \bar{\balpha}_l,  \boldsymbol{\tau}^{(1)}, \btheta^{(1)})s_k\right)\bar{\bv}_{\alpha_l}\| \nonumber\\
+2\| &\left(y^{\rm up}_m - h_m(\alpha_l^{(1)}, \bar{\balpha}_l,  \boldsymbol{\tau}^{(1)}, \btheta^{(1)})s_k\right)\bv_{\alpha_l}^{(1)} - \left(y^{\rm up}_m - h_m(\alpha_l^{(2)}, \bar{\balpha}_l,  \boldsymbol{\tau}^{(1)}, \btheta^{(1)})s_k\right)\bv_{\alpha_l}^{(2)}\| \nonumber\\
\overset{(b)}{=} 2\| & \bar{\bv}_{\alpha_l}s_k \| \left| h_m(\alpha_l^{(1)}, \bar{\balpha}_l,  \boldsymbol{\tau}^{(1)}, \btheta^{(1)})- h_m(\alpha_l^{(2)}, \bar{\balpha}_l,  \boldsymbol{\tau}^{(1)}, \btheta^{(1)})\right|\nonumber\\
+ 2\| &\left(y^{\rm up}_m - h_m(\alpha_l^{(1)}, \bar{\balpha}_l,  \boldsymbol{\tau}^{(1)}, \btheta^{(1)})s_k\right){\alpha_l}^{(1)} \check{{\bv}}_{\alpha_l} - \left(y^{\rm up}_m - h_m(\alpha_l^{(2)}, \bar{\balpha}_l,  \boldsymbol{\tau}^{(1)}, \btheta^{(1)})s_k\right){\alpha_l}^{(2)} \check{{\bv}}_{\alpha_l}\| \nonumber\\
\overset{(c)}{\leq} 2\| & \bar{\bv}_{\alpha_l}s_k \| \left| h_m(\alpha_l^{(1)}, \bar{\balpha}_l,  \boldsymbol{\tau}^{(1)}, \btheta^{(1)})- h_m(\alpha_l^{(2)}, \bar{\balpha}_l,  \boldsymbol{\tau}^{(1)}, \btheta^{(1)})\right|\nonumber\\
+ 2 \| &y^{\rm up}_m \check{{\bv}}_{\alpha_l}\| \left|{\alpha_l}^{(1)} - {\alpha_l}^{(2)}\right| +2 \| \check{{\bv}}_{\alpha_l}s_k \| \left| h_m(\alpha_l^{(1)}, \bar{\balpha}_l,  \boldsymbol{\tau}^{(1)}, \btheta^{(1)}){\alpha_l}^{(1)} - h_m(\alpha_l^{(2)}, \bar{\balpha}_l,  \boldsymbol{\tau}^{(1)}, \btheta^{(1)}){\alpha_l}^{(2)} \right|\nonumber\\
\overset{(d)}{=} 2\| & \bar{\bv}_{\alpha_l}s_k \| \left| {\alpha_l}^{(1)} - {\alpha_l}^{(2)}\right|\nonumber\\
+ 2 \| &y^{\rm up}_m \check{{\bv}}_{\alpha_l}\| \left|{\alpha_l}^{(1)} - {\alpha_l}^{(2)}\right| +2 \| \check{{\bv}}_{\alpha_l}s_k \| \left| {\alpha_l}^{(1)} + {\alpha_l}^{(2)} \right| \left| {\alpha_l}^{(1)} - {\alpha_l}^{(2)} \right|\nonumber\\
= 2 ( \| &\bar{\bv}_{\alpha_l}s_k \|+ \| y^{\rm up}_m \check{{\bv}}_{\alpha_l}\| + \| \check{{\bv}}_{\alpha_l}s_k \| \left| {\alpha_l}^{(1)} + {\alpha_l}^{(2)} \right|)\left| {\alpha_l}^{(1)} - {\alpha_l}^{(2)} \right| \nonumber\\
\overset{(e)}{\triangleq}M_{\alpha_l} &\left| \alpha_l^{(1)} - \alpha_l^{(2)} \right|,
\end{alignat}
where the inequality in $(a)$ follows from Lemma \ref{norm_subvec_inequality}, ${\bv}_{\alpha_l}^{(i)}$, $i=1,2$, is a vector that captures those entries of $\triangledown h_m(\alpha_l^{(i)}, \bar{\balpha}_l,  \boldsymbol{\tau}^{(1)}, \btheta^{(1)})$ which are function of $\alpha_l^{(i)}$, and $\bar{\bv}_{\alpha_l}$ is a vector that captures those entries of $\triangledown h_m(\alpha_l^{(1)}, \bar{\balpha}_l,  \boldsymbol{\tau}^{(1)}, \btheta^{(1)})$ which are independent of $\alpha_l$. The second term in $(b)$ follows from ${\bv}_{\alpha_l}^{(i)} =  {\alpha_l}^{(i)} \check{{\bv}}_{\alpha_l}\triangleq {\alpha_l}^{(i)} e^{j\omega_l} \left[j,   j\beta, j\gamma_m \cos \theta_l\right]^T $. The inequality in $(c)$ follows from the fact that $\|\ba-\bb\|\leq \|\ba\|+\|\bb\|$ for $\ba$ and $\bb$ to be arbitrary vectors. In $(d)$, we use the facts that $\left| h_m(\alpha_l^{(1)}, \bar{\balpha}_l,  \boldsymbol{\tau}^{(1)}, \btheta^{(1)})- h_m(\alpha_l^{(2)}, \bar{\balpha}_l,  \boldsymbol{\tau}^{(1)}, \btheta^{(1)})\right| = \left|{\alpha_l}^{(1)} - {\alpha_l}^{(2)}\right|$ as well as  $\left| h_m(\alpha_l^{(1)}, \bar{\balpha}_l,  \boldsymbol{\tau}^{(1)}, \btheta^{(1)}){\alpha_l}^{(1)} - h_m(\alpha_l^{(2)}, \bar{\balpha}_l,  \boldsymbol{\tau}^{(1)}, \btheta^{(1)}){\alpha_l}^{(2)} \right| = \left|\left({\alpha_l}^{(1)} + {\alpha_l}^{(2)}\right) \left({\alpha_l}^{(1)} - {\alpha_l}^{(2)}\right)\right|$. Note that $M_{\alpha_l}$ in $(e)$ is bounded. This is mainly due to the fact that $\left\{\alpha_l\right\}_{l=1}^{L}$ are being generated from $G(\bz)$ which are bounded.
%
%
Therefore, $M_{\alpha_l}\geq 0$  is bounded. Now, from \eqref{T_alpha_upperbound},we obtain that,
\begin{align} \label{T_alpha_upperbound1}
T_{\alpha} \leq & M_{\alpha_1} \left| \alpha_1^{(1)} - \alpha_1^{(2)} \right|+ M_{\alpha_2} \left| \alpha_2^{(1)} - \alpha_2^{(2)} \right|+ \cdots+ M_{\alpha_L} \left| \alpha_L^{(1)} - \alpha_L^{(2)} \right|\nonumber\\
\overset{(a)}{\leq} & M_{\alpha} \left( \left| \alpha_1^{(1)} - \alpha_1^{(2)} \right|+ \left| \alpha_2^{(1)} - \alpha_2^{(2)} \right|+ \cdots+ \left| \alpha_L^{(1)} - \alpha_L^{(2)} \right|\right)\nonumber\\
\overset{(b)}{\leq} & M_{\alpha}\sqrt{L}\sqrt{\left( \alpha_1^{(1)} - \alpha_1^{(2)} \right)^2+ \left( \alpha_2^{(1)} - \alpha_2^{(2)} \right)^2+ \cdots+ \left( \alpha_L^{(1)} - \alpha_L^{(2)} \right)^2}\nonumber\\
=& \lambda_{\alpha}\| \balpha^{(1)} - \balpha^{(2)}\|,
\end{align}
where in $(a)$, $M_{\alpha}\triangleq \max\left\{M_{\alpha_1}, M_{\alpha_2}, \cdots, M_{\alpha_L}\right\}$, $(b)$ follows from the inequality between quadratic and arithmetic mean given in Lemma \ref{QM_AM_inequality}, and $\lambda_{\alpha}\triangleq M_{\alpha}\sqrt{L}$.

\subsubsection{Bounds on $T_{\tau}$}
Similar to what we did in previous subsection, we assume that $\balpha^{(2)}$ and $ \btheta^{(1)}$ are fixed. Using the triangle inequality of \eqref{fact_sum_upperbound}, we can write
\begin{align} \label{T_alpha_upperbound}
T_{\tau} & \leq \| f_m(\balpha^{(2)}, {\tau}_1^{(1)},  \bar{\boldsymbol{\tau}}_1, \btheta^{(1)}) - f_m(\balpha^{(2)}, {\tau}_1^{(2)},  \bar{\boldsymbol{\tau}}_1, \btheta^{(1)})\| \nonumber\\
& + \| f_m(\balpha^{(2)}, {\tau}_2^{(1)},  \bar{\boldsymbol{\tau}}_2, \btheta^{(1)}) - f_m(\balpha^{(2)}, {\tau}_2^{(2)},  \bar{\boldsymbol{\tau}}_2, \btheta^{(1)})\| \nonumber\\
& \qquad  \qquad \qquad \vdots
\nonumber\\
& + \| f_m(\balpha^{(2)}, {\tau}_L^{(1)},  \bar{\boldsymbol{\tau}}_L, \btheta^{(1)}) - f_m(\balpha^{(2)}, {\tau}_L^{(2)},  \bar{\boldsymbol{\tau}}_L, \btheta^{(1)})\|,
\end{align}
where $\bar{\boldsymbol{\tau}}_l\triangleq \left(\tau_1^{(2)}, \cdots, \tau_{l-1}^{(2)}, \tau_{l+1}^{(1)}, \cdots, \tau_L^{(1)}\right)$. Let us define $T_{\tau_l} \triangleq \| f_m(\balpha^{(2)}, {\tau}_l^{(1)},  \bar{\boldsymbol{\tau}}_l, \btheta^{(1)}) - f_m(\balpha^{(2)}, {\tau}_l^{(2)},  \bar{\boldsymbol{\tau}}_l, \btheta^{(1)})\| $. We now aim to show that there exist $M_{\tau_l}\geq 0 $ and bounded, such that $T_{\tau_l}\leq M_{\tau_l} \left| \tau_l^{(1)} - \tau_l^{(2)} \right|$. To do so, we can write
\begin{alignat}{2} \label{T_tau_l_upperbound}
T_{\tau_l} = 2\| &\left(y^{\rm up}_m - h_m(\balpha^{(2)}, {\tau}_l^{(1)},  \bar{\boldsymbol{\tau}}_l, \btheta^{(1)})s_k\right)\triangledown h_m(\balpha^{(2)}, {\tau}_l^{(1)},  \bar{\boldsymbol{\tau}}_l, \btheta^{(1)}) \nonumber\\
- &\left(y^{\rm up}_m - h_m(\balpha^{(2)}, {\tau}_l^{(2)},  \bar{\boldsymbol{\tau}}_l, \btheta^{(1)})s_k\right)\triangledown h_m(\balpha^{(2)}, {\tau}_l^{(2)},  \bar{\boldsymbol{\tau}}_l, \btheta^{(1)})\|\nonumber\\
\overset{(a)}{\leq} 2\| &\left(y^{\rm up}_m - h_m(\balpha^{(2)}, {\tau}_l^{(1)},  \bar{\boldsymbol{\tau}}_l, \btheta^{(1)})s_k\right)\bar{\bv}_{\tau_l} - \left(y^{\rm up}_m - h_m(\balpha^{(2)}, {\tau}_l^{(2)},  \bar{\boldsymbol{\tau}}_l, \btheta^{(1)})s_k\right)\bar{\bv}_{\tau_l}\| \nonumber\\
+2\| &\left(y^{\rm up}_m - h_m(\balpha^{(2)}, {\tau}_l^{(1)},  \bar{\boldsymbol{\tau}}_l, \btheta^{(1)})s_k\right)\bv_{\tau_l}^{(1)} - \left(y^{\rm up}_m - h_m(\balpha^{(2)}, {\tau}_l^{(2)},  \bar{\boldsymbol{\tau}}_l, \btheta^{(1)})s_k\right)\bv_{\tau_l}^{(2)}\| \nonumber\\
\overset{(b)}{=} 2\| & \bar{\bv}_{\tau_l}s_k \| \left| h_m(\balpha^{(2)}, {\tau}_l^{(1)},  \bar{\boldsymbol{\tau}}_l, \btheta^{(1)})- h_m(\balpha^{(2)}, {\tau}_l^{(2)},  \bar{\boldsymbol{\tau}}_l, \btheta^{(1)})\right|\nonumber\\
+ 2\| &\left(y^{\rm up}_m - h_m(\balpha^{(2)}, {\tau}_l^{(1)},  \bar{\boldsymbol{\tau}}_l, \btheta^{(1)})s_k\right)e^{j\beta \tau_l^{(1)}}\check{{\bv}}_{\tau_l} - \left(y^{\rm up}_m - h_m(\balpha^{(2)}, {\tau}_l^{(2)},  \bar{\boldsymbol{\tau}}_l, \btheta^{(1)})s_k\right)e^{j\beta \tau_l^{(2)}}\check{{\bv}}_{\tau_l}\| \nonumber\\
\overset{(c)}{\leq} 2\| & \bar{\bv}_{\tau_l}s_k \| \left| h_m(\balpha^{(2)}, {\tau}_l^{(1)},  \bar{\boldsymbol{\tau}}_l, \btheta^{(1)})- h_m(\balpha^{(2)}, {\tau}_l^{(2)},  \bar{\boldsymbol{\tau}}_l, \btheta^{(1)})\right|\nonumber\\
+ 2 \| &y^{\rm up}_m \check{{\bv}}_{\tau_l}\| | e^{j\beta \tau_l^{(1)}}- e^{j\beta \tau_l^{(2)}}| +2 \| \check{{\bv}}_{\tau_l}s_k \| \left| h_m(\balpha^{(2)}, {\tau}_l^{(1)},  \bar{\boldsymbol{\tau}}_l, \btheta^{(1)})e^{j\beta \tau_l^{(1)}} - h_m(\balpha^{(2)}, {\tau}_l^{(2)},  \bar{\boldsymbol{\tau}}_l, \btheta^{(1)})e^{j\beta \tau_l^{(2)}} \right|\nonumber\\
\overset{(d)}{\leq} 2\| & \bar{\bv}_{\tau_l}s_k \||{\alpha_l}^{(2)}| | e^{j\beta \tau_l^{(1)}}- e^{j\beta \tau_l^{(2)}}|\nonumber\\
+ 2 \| &y^{\rm up}_m \check{{\bv}}_{\tau_l}\| | e^{j\beta \tau_l^{(1)}}- e^{j\beta \tau_l^{(2)}}| +2 \| \check{{\bv}}_{\tau_l}s_k \|  \left(\left|\alpha_l^{(2)}\right| \left|e^{j2\beta \tau_l^{(1)}}- e^{j2\beta \tau_l^{(2)}}\right|
+\left|\tilde{\alpha}_l^{(2)}\right| \left|e^{j\beta \tau_l^{(1)}}- e^{j\beta \tau_l^{(2)}}\right|\right)\nonumber\\
= 2 \Big( \| &\bar{\bv}_{\tau_l}s_k \||{\alpha_l}^{(2)}| +\| y^{\rm up}_m \check{{\bv}}_{\tau_l}\| + \| \check{{\bv}}_{\tau_l}s_k \| \left|\tilde{\alpha}_l^{(2)}\right|\Big) \left|e^{j\beta \tau_l^{(1)}}- e^{j\beta \tau_l^{(2)}}\right|\nonumber\\
+ 2 \| &\check{{\bv}}_{\tau_l}s_k \| \left|{\alpha}_l^{(2)}\right| \left|e^{j2\beta \tau_l^{(1)}}- e^{j2\beta \tau_l^{(2)}}\right|\nonumber\\
 \overset{(e)}{\leq} 2 \left|\beta\right| &\Big( \| \bar{\bv}_{\tau_l}s_k \||{\alpha_l}^{(2)}| +\| y^{\rm up}_m \check{{\bv}}_{\tau_l}\| + \| \check{{\bv}}_{\tau_l}s_k \| \left|\tilde{\alpha}_l^{(2)}\right|\Big) \left| \tau_l^{(1)}- \tau_l^{(2)}\right|\nonumber\\
+ 4 \left|\beta\right| &\| \check{{\bv}}_{\tau_l}s_k \| \left|{\alpha}_l^{(2)}\right| \left| \tau_l^{(1)}- \tau_l^{(2)}\right|\nonumber\\
 \triangleq M_{\tau_l} &\left|\tau_l^{(1)}-\tau_l^{(2)}\right|,
\end{alignat}
where the inequality in $(a)$ follows from Lemma \ref{norm_subvec_inequality}, ${\bv}_{\tau_l}^{(i)}$, $i=1,2$, is a vector that captures those entries of $\triangledown h_m(\balpha^{(2)}, {\tau}_l^{(i)},  \bar{\boldsymbol{\tau}}_l, \btheta^{(1)})$ which are functions of $\tau_l^{(i)}$, and $\bar{\bv}_{\tau_l}$ is a vector that captures those entries of $\triangledown h_m(\balpha^{(2)}, {\tau}_l^{(i)},  \bar{\boldsymbol{\tau}}_l, \btheta^{(1)})$ which are independent of $\tau_l$. The second term in $(b)$ follows from ${\bv}_{\tau_l}^{(i)} =  e^{j\beta \tau_l^{(i)}}\check{{\bv}}_{\tau_l}\triangleq e^{j\beta \tau_l^{(i)}} e^{j(\phi_l+\gamma_m \sin \theta_l^{(1)})} \left[   1, j\beta\alpha^{(2)}, j\gamma_m \cos \theta_l^{(1)}, j\alpha^{(2)}\right]^T $. The inequality in $(c)$ follows from the fact that $\|\ba-\bb\|\leq \|\ba\|+\|\bb\|$ for $\ba$ and $\bb$ to be arbitrary vectors. In $(d)$, we use the following inequalities: \begin{align} \label{fact1_tau}
\left| h_m(\balpha^{(2)}, {\tau}_l^{(1)},  \bar{\boldsymbol{\tau}}_l, \btheta^{(1)})- h_m(\balpha^{(2)}, {\tau}_l^{(2)},  \bar{\boldsymbol{\tau}}_l, \btheta^{(1)})\right| &= \left|{\alpha_l}^{(2)}e^{j(\phi_l +\gamma_m \sin \theta_l^{(1)})}\left(e^{j\beta \tau_l^{(1)}}- e^{j\beta \tau_l^{(2)}}\right)\right| \nonumber\\
&\leq |{\alpha_l}^{(2)}| | e^{j\beta \tau_l^{(1)}}- e^{j\beta \tau_l^{(2)}}|,
\end{align}
as well as
\begin{align} \label{fact2_tau}
\left| h_m(\balpha^{(2)}, {\tau}_l^{(1)},  \bar{\boldsymbol{\tau}}_l, \btheta^{(1)})e^{j\beta \tau_l^{(1)}} - h_m(\balpha^{(2)}, {\tau}_l^{(2)},  \bar{\boldsymbol{\tau}}_l, \btheta^{(1)})e^{j\beta \tau_l^{(2)}} \right| &\leq  \left|\alpha_l^{(2)}\right| \left|e^{j2\beta \tau_l^{(1)}}- e^{j2\beta \tau_l^{(2)}}\right|\nonumber\\
&+\left|\tilde{\alpha}_l^{(2)}\right| \left|e^{j\beta \tau_l^{(1)}}- e^{j\beta \tau_l^{(2)}}\right|,
\end{align}
where $\tilde{\alpha}_l^{(2)} \triangleq \sum_{i\neq l}\alpha_i^{(2)}e^{j\left(\phi_i + \gamma_m \sin \theta_i^{(1)}\right)}$. The inequality $(e)$, assuming $\tau_l^{(1)}\leq \tau_l^{(2)}$, follows from the following inequality:
\begin{align} \label{fact3_tau}
\left| e^{j\beta \tau_l^{(1)}}- e^{j\beta \tau_l^{(2)}}\right| = \left|\int_{\tau_l^{(1)}}^{\tau_l^{(2)}} j\beta e^{j\beta t}dt\right|\leq \int_{\tau_l^{(1)}}^{\tau_l^{(2)}} \left|j\beta e^{j\beta t}\right|dt= \left|\beta\right|\left|\tau_l^{(1)}-\tau_l^{(2)}\right|.
\end{align}
As stated before, since $\left\{\alpha_l\right\}_{l=1}^L$ are generated by $G(\bz)$ and the range of $G(\bz)$ is bounded, we conclude that $M_{\tau_l} \geq 0$ and it is bounded. Now, defining, $M_{\tau}\triangleq \max\left\{M_{\tau_1}, M_{\tau_2}, \cdots, M_{\tau_L}\right\}$, we obtain the upperbound for $T_{\tau}$  as
\begin{align} \label{T_tau_upperbound1}
T_{\tau} \leq & M_{\tau_1} \left| \tau_1^{(1)} - \tau_1^{(2)} \right|+ M_{\tau_2} \left| \tau_2^{(1)} - \tau_2^{(2)} \right|+ \cdots+ M_{\tau_L} \left| \tau_L^{(1)} - \tau_L^{(2)} \right|\nonumber\\
{\leq} & M_{\tau} \left( \left| \tau_1^{(1)} - \tau_1^{(2)} \right|+ \left| \tau_2^{(1)} - \tau_2^{(2)} \right|+ \cdots+ \left| \tau_L^{(1)} - \tau_L^{(2)} \right|\right)\nonumber\\
\overset{(a)}{\leq} & M_{\tau}\sqrt{L}\sqrt{\left( \tau_1^{(1)} - \tau_1^{(2)} \right)^2+ \left( \tau_2^{(1)} - \tau_2^{(2)} \right)^2+ \cdots+ \left( \tau_L^{(1)} - \tau_L^{(2)} \right)^2}\nonumber\\
=& \lambda_{\tau}\| \boldsymbol{\tau}^{(1)} - \boldsymbol{\tau}^{(2)}\|,
\end{align}
where $(a)$ follows from the inequality provided in Lemma \ref{QM_AM_inequality}, and $\lambda_{\tau}\triangleq M_{\tau}\sqrt{L}$.
\subsubsection{Bounds on $T_{\theta}$}
Assuming $\balpha^{(2)}$ and $\boldsymbol{\tau}^{(2)}$ are fixed, and using the triangle inequality of \eqref{fact_sum_upperbound}, we can write
\begin{align} \label{T_theta_upperbound}
T_{\theta} & \leq \| f_m(\balpha^{(2)}, \boldsymbol{\tau}^{(2)}, \theta_1^{(1)}, \bar{\btheta}_1) - f_m(\balpha^{(2)}, \boldsymbol{\tau}^{(2)}, \theta_1^{(2)}, \bar{\btheta}_1)\| \nonumber\\
& + \| f_m(\balpha^{(2)}, \boldsymbol{\tau}^{(2)}, \theta_2^{(1)}, \bar{\btheta}_2) - f_m(\balpha^{(2)}, \boldsymbol{\tau}^{(2)}, \theta_2^{(2)}, \bar{\btheta}_2)\| \nonumber\\
& \qquad  \qquad \qquad \vdots
\nonumber\\
& + \| f_m(\balpha^{(2)}, \boldsymbol{\tau}^{(2)}, \theta_L^{(1)}, \bar{\btheta}_L) - f_m(\balpha^{(2)}, \boldsymbol{\tau}^{(2)}, \theta_L^{(2)}, \bar{\btheta}_L)\|,
\end{align}
where we define $\bar{\btheta}_l\triangleq \left(\theta_1^{(2)}, \cdots, \theta_{l-1}^{(2)}, \theta_{l+1}^{(1)}, \cdots, \theta_L^{(1)}\right)$.  Let us define $T_{\theta_l} \triangleq \| f_m(\balpha^{(2)}, \boldsymbol{\tau}^{(2)}, \theta_l^{(1)}, \bar{\btheta}_l) - f_m(\balpha^{(2)}, \boldsymbol{\tau}^{(2)}, \theta_l^{(2)}, \bar{\btheta}_l)\| $. We show that there exists a bounded $M_{\theta_l}\geq 0$  such that $T_{\theta_l}\leq M_{\theta_l} \left| \theta_l^{(1)} - \theta_l^{(2)} \right|$. To do so, let us write
\begin{align} \label{T_theta_l_upperbound}
T_{\theta_l} = 2\| &\left(y^{\rm up}_m - h_m(\balpha^{(2)}, \boldsymbol{\tau}^{(2)}, \theta_l^{(1)}, \bar{\btheta}_l)s_k\right)\triangledown h_m(\balpha^{(2)}, \boldsymbol{\tau}^{(2)}, \theta_l^{(1)}, \bar{\btheta}_l) \nonumber\\
- &\left(y^{\rm up}_m - h_m(\balpha^{(2)}, \boldsymbol{\tau}^{(2)}, \theta_l^{(2)}, \bar{\btheta}_l)s_k\right)\triangledown h_m(\balpha^{(2)}, \boldsymbol{\tau}^{(2)}, \theta_l^{(2)}, \bar{\btheta}_l)\|\nonumber\\
\overset{(a)}{\leq} 2\| &\left(y^{\rm up}_m - h_m(\balpha^{(2)}, \boldsymbol{\tau}^{(2)}, \theta_l^{(1)}, \bar{\btheta}_l)s_k\right)\bar{\bv}_{\theta_l} - \left(y^{\rm up}_m - h_m(\balpha^{(2)}, \boldsymbol{\tau}^{(2)}, \theta_l^{(2)}, \bar{\btheta}_l)s_k\right)\bar{\bv}_{\theta_l}\| \nonumber\\
+2\| &\left(y^{\rm up}_m - h_m(\balpha^{(2)}, \boldsymbol{\tau}^{(2)}, \theta_l^{(1)}, \bar{\btheta}_l)s_k\right)\bv_{\theta_l}^{(1)} - \left(y^{\rm up}_m - h_m(\balpha^{(2)}, \boldsymbol{\tau}^{(2)}, \theta_l^{(2)}, \bar{\btheta}_l)s_k\right)\bv_{\theta_l}^{(2)}\| \nonumber\\
\overset{(b)}{\leq} 2\| & \bar{\bv}_{\theta_l}s_k \| \left| h_m(\balpha^{(2)}, \boldsymbol{\tau}^{(2)}, \theta_l^{(1)}, \bar{\btheta}_l)- h_m(\balpha^{(2)}, \boldsymbol{\tau}^{(2)}, \theta_l^{(2)}, \bar{\btheta}_l)\right|\nonumber\\
+ 2\| &y^{\rm up}_m \left( \bv_{\theta_l}^{(1)}- \bv_{\theta_l}^{(2)}\right)\| + 2|s_k|\|  h_m(\balpha^{(2)}, \boldsymbol{\tau}^{(2)}, \theta_l^{(1)}, \bar{\btheta}_l) \bv_{\theta_l}^{(1)}  -  h_m(\balpha^{(2)}, \boldsymbol{\tau}^{(2)}, \theta_l^{(2)}, \bar{\btheta}_l)\bv_{\theta_l}^{(2)}\| \nonumber\\
\overset{(c)}{\leq} 2\| & \bar{\bv}_{\theta_l}s_k \| \left|\alpha_l^{(2)}\right|\left|e^{j\gamma_m \sin \theta_l^{(1)}}- e^{j\gamma_m \sin \theta_l^{(2)}}\right| + 2| y^{\rm up}_m | \|\bv_{\theta_l}^{(1)}- \bv_{\theta_l}^{(2)}\| \nonumber\\
+2 \| &\check{{\bv}}_{\theta_l}s_k \| \Big( \left|\breve{\alpha}_l^{(2)}\right| \| \bv_{\theta_l}^{(1)} - \bv_{\theta_l}^{(2)} \| +  \| \check{\bv}_{\theta_l}  \| \left| e^{j\gamma_m \sin \theta_l^{(1)}}- e^{j\gamma_m \sin \theta_l^{(2)}}\right|\nonumber\\
&+  \left|\gamma_m {\alpha_l^{(2)}}^2\right| \left| \cos \theta_l^{(1)} e^{j\gamma_m \sin \theta_l^{(1)}}-  \cos \theta_l^{(2)} e^{j\gamma_m \sin \theta_l^{(2)}}\right|  \Big)\nonumber\\
{=} 2 \Big(\|  &\bar{\bv}_{\theta_l}s_k \| \left|\alpha_l^{(2)}\right| + \| \check{{\bv}}_{\theta_l}s_k \|  \| \check{\bv}_{\theta_l}  \|\Big)\left|e^{j\gamma_m \sin \theta_l^{(1)}}- e^{j\gamma_m \sin \theta_l^{(2)}}\right| \nonumber\\
+ 2&\left(| y^{\rm up}_m | +\| \check{{\bv}}_{\theta_l}s_k \|\left|\breve{\alpha}_l^{(2)}\right|\right)\|\bv_{\theta_l}^{(1)}- \bv_{\theta_l}^{(2)}\| \nonumber\\
+2 &\left(\| \check{{\bv}}_{\theta_l}s_k \|
 \left|\gamma_m {\alpha_l^{(2)}}^2\right|\right) \left| \cos \theta_l^{(1)} e^{j\gamma_m \sin \theta_l^{(1)}}-  \cos \theta_l^{(2)} e^{j\gamma_m \sin \theta_l^{(2)}}\right|  \nonumber\\
\overset{(d)}{\leq} 2 \Big(\|  &\bar{\bv}_{\theta_l}s_k \| \left|\alpha_l^{(2)}\right| + \| \check{{\bv}}_{\theta_l}s_k \|  \| \check{\bv}_{\theta_l}  \| + \left(| y^{\rm up}_m | +\| \check{{\bv}}_{\theta_l}s_k \|\left|\breve{\alpha}_l^{(2)}\right|\right) \left(1+\left|\beta \alpha_l^{(2)} \right|\right)\Big)\left|e^{j\gamma_m \sin \theta_l^{(1)}}- e^{j\gamma_m \sin \theta_l^{(2)}}\right| \nonumber\\
+2 &\left(\| \check{{\bv}}_{\theta_l}s_k \|
 \left|\gamma_m {\alpha_l^{(2)}}^2\right|+ \left(| y^{\rm up}_m | +\| \check{{\bv}}_{\theta_l}s_k \|\left|\breve{\alpha}_l^{(2)}\right|\right) \left|\gamma_m \alpha_l^{(2)} \right|\right) \left| \cos \theta_l^{(1)} e^{j\gamma_m \sin \theta_l^{(1)}}-  \cos \theta_l^{(2)} e^{j\gamma_m \sin \theta_l^{(2)}}\right|  \nonumber\\
\overset{(e)}{\leq} 2 \Big(\|  &\bar{\bv}_{\theta_l}s_k \| \left|\alpha_l^{(2)}\right| + \| \check{{\bv}}_{\theta_l}s_k \|  \| \check{\bv}_{\theta_l}  \| + \left(| y^{\rm up}_m | +\| \check{{\bv}}_{\theta_l}s_k \|\left|\breve{\alpha}_l^{(2)}\right|\right) \left(1+\left|\beta \alpha_l^{(2)} \right|\right)\Big)\left|\gamma_m \right|\left|\theta_l^{(1)}- \theta_l^{(2)}\right| \nonumber\\
+2 &\left(\| \check{{\bv}}_{\theta_l}s_k \|
 \left|\gamma_m {\alpha_l^{(2)}}^2\right|+ \left(| y^{\rm up}_m | +\| \check{{\bv}}_{\theta_l}s_k \|\left|\breve{\alpha}_l^{(2)}\right|\right) \left|\gamma_m \alpha_l^{(2)} \right|\right) \rho\left|\theta_l^{(1)}- \theta_l^{(2)}\right| \nonumber\\
\triangleq M_{\theta_l} &\left|\theta_l^{(1)}- \theta_l^{(2)}\right|,
\end{align}
 where the inequality in $(a)$ follows from Lemma \ref{norm_subvec_inequality}, ${\bv}_{\theta_l}^{(i)}$, $i=1,2$, is a vector that captures those entries of $\triangledown h_m(\balpha^{(2)}, \boldsymbol{\tau}^{(2)}, \theta_l^{(i)}, \bar{\btheta}_l)$ which are function of $\theta_l^{(i)}$, and $\bar{\bv}_{\theta_l}$ is a vector that captures those entries of $\triangledown h_m(\balpha^{(2)}, \boldsymbol{\tau}^{(2)}, \theta_l^{(1)}, \bar{\btheta}_l)$ which are independent of $\theta_l$. In the second term in $(b)$, we define ${\bv}_{\theta_l}^{(i)} \triangleq e^{j(\phi_l+\beta \tau_l^{(2)}+\gamma_m \sin \theta_l^{(i)})} \left[   1\; \; j\beta\alpha_l^{(2)}\;\; j\alpha_l^{(2)}\gamma_m \cos \theta_l^{(i)}\;\; j\alpha_l^{(2)}\right]^T $, and the inequality follows from the fact that $\|\ba-\bb\|\leq \|\ba\|+\|\bb\|$ for $\ba$ and $\bb$ to be arbitrary vectors. In $(c)$, we use the following inequalities:\begin{align} \label{fact1_theta}
\left| h_m(\balpha^{(2)}, \boldsymbol{\tau}^{(2)}, \theta_l^{(1)}, \bar{\btheta}_l)- h_m(\balpha^{(2)}, \boldsymbol{\tau}^{(2)}, \theta_l^{(2)}, \bar{\btheta}_l)\right| &= \left|{\alpha_l}^{(2)}e^{j(\phi_l +\beta \tau_l^{(2)})}\left(e^{j\gamma_m \sin \theta_l^{(1)}}- e^{j\gamma_m \sin \theta_l^{(2)}}\right)\right| \nonumber\\
&\leq \left|{\alpha_l}^{(2)}\right| \left| e^{j\gamma_m \sin \theta_l^{(1)}}- e^{j\gamma_m \sin \theta_l^{(2)}}\right|,
\end{align}
as well as
\begin{align} \label{fact2_theta}
&\|  h_m(\balpha^{(2)}, \boldsymbol{\tau}^{(2)}, \theta_l^{(1)}, \bar{\btheta}_l) \bv_{\theta_l}^{(1)}  -  h_m(\balpha^{(2)}, \boldsymbol{\tau}^{(2)}, \theta_l^{(2)}, \bar{\btheta}_l)\bv_{\theta_l}^{(2)}\| \nonumber\\
&\leq \left|\breve{\alpha}_l^{(2)}\right| \| \bv_{\theta_l}^{(1)} - \bv_{\theta_l}^{(2)} \| +  \| \check{\bv}_{\theta_l}  \| \left| e^{j\gamma_m \sin \theta_l^{(1)}}- e^{j\gamma_m \sin \theta_l^{(2)}}\right|\nonumber\\
&+  \left|\gamma_m {\alpha_l^{(2)}}^2\right| \left| \cos \theta_l^{(1)} e^{j\gamma_m \sin \theta_l^{(1)}}-  \cos \theta_l^{(2)} e^{j\gamma_m \sin \theta_l^{(2)}}\right|,
\end{align}
where $\breve{\alpha}_l^{(2)} \triangleq \sum_{i \neq l}\alpha_i^{(2)}e^{j(\phi_i +\beta \tau_i^{(2)} +\gamma_m\sin \theta_i^{(t)})}$, $t=2$ for $i<l$ and $t=1$, for $i>l$, and $\check{\bv}_{\theta_l} \triangleq \left[ \alpha_l^{(2)}e^{j(2\phi_l + 2\beta \tau_l^{(2)})} ,  j\beta {\alpha_l^{(2)}}^2e^{j(2\phi_l + 2\beta \tau_l^{(2)})}\right]^T$. In $(d)$, following from Lemma \ref{norm_subvec_inequality}, we use the the following inequality:
\begin{align} \label{fact3_theta}
\|\bv_{\theta_l}^{(1)}- \bv_{\theta_l}^{(2)}\| &\leq \left|e^{j\gamma_m \sin \theta_l^{(1)}}- e^{j\gamma_m \sin \theta_l^{(2)}}\right| + \left|\beta \alpha_l^{(2)} \right|  \left|e^{j\gamma_m \sin \theta_l^{(1)}}- e^{j\gamma_m \sin \theta_l^{(2)}}\right|\nonumber\\
&+ \left|\gamma_m \alpha_l^{(2)} \right|  \left|\cos \theta_l^{(1)} e^{j\gamma_m \sin \theta_l^{(1)}}- \cos \theta_l^{(2)} e^{j\gamma_m \sin \theta_l^{(2)}}\right|\nonumber\\
&=\left(1+\left|\beta \alpha_l^{(2)} \right|\right) \left|e^{j\gamma_m \sin \theta_l^{(1)}}- e^{j\gamma_m \sin \theta_l^{(2)}}\right|\nonumber\\
&+ \left|\gamma_m \alpha_l^{(2)} \right|  \left|\cos \theta_l^{(1)} e^{j\gamma_m \sin \theta_l^{(1)}}- \cos \theta_l^{(2)} e^{j\gamma_m \sin \theta_l^{(2)}}\right|.
\end{align}
The inequality $(e)$ follows from
\begin{align} \label{fact4_theta}
\left|e^{j\gamma_m \sin \theta_l^{(1)}}- e^{j\gamma_m \sin \theta_l^{(2)}}\right| &= \left|\int_{\theta_l^{(1)}}^{\theta_l^{(2)}} j\gamma_m \cos \theta e^{j\gamma_m \sin \theta}d\theta\right|\leq \int_{\theta_l^{(1)}}^{\theta_l^{(2)}} \left|j\gamma_m \cos \theta e^{j\gamma_m \sin \theta}\right|d\theta\nonumber\\
&= \left|\gamma_m \right|\int_{\theta_l^{(1)}}^{\theta_l^{(2)}} \left|\cos \theta \right|d\theta \leq \left|\gamma_m \right|\int_{\theta_l^{(1)}}^{\theta_l^{(2)}} 1d\theta =\left|\gamma_m \right|\left|\theta_l^{(1)}- \theta_l^{(2)}\right|,
\end{align}
and
\begin{align} \label{fact5_theta}
\left|\cos \theta_l^{(1)} e^{j\gamma_m \sin \theta_l^{(1)}}- \cos \theta_l^{(2)} e^{j\gamma_m \sin \theta_l^{(2)}}\right| &= \left|\int_{\theta_l^{(1)}}^{\theta_l^{(2)}} \left(j\gamma_m \cos^2 \theta - \sin \theta \right)e^{j\gamma_m \sin \theta}d\theta\right|\nonumber\\
&\leq \int_{\theta_l^{(1)}}^{\theta_l^{(2)}} \left|\left(j\gamma_m \cos^2 \theta - \sin \theta \right)e^{j\gamma_m \sin \theta} \right|d\theta\nonumber\\
&= \int_{\theta_l^{(1)}}^{\theta_l^{(2)}} \left|j\gamma_m \cos^2 \theta - \sin \theta \right|d\theta\nonumber\\
&\leq \int_{\theta_l^{(1)}}^{\theta_l^{(2)}} \rho d\theta = \rho \left| \theta_l^{(1)} - \theta_l^{(2)}\right|,
\end{align}
where $\rho \triangleq \max_{\theta} \left|j\gamma_m \cos^2 \theta - \sin \theta \right|$.
Note that since $\left\{\alpha_l\right\}_{l=1}^L$ are bounded, $M_{\theta_l} \geq 0$  is bounded. Now, defining $M_{\theta}\triangleq \max\left\{M_{\theta_1}, M_{\theta_2}, \cdots, M_{\theta_L}\right\}$, we obtain the upper bound for $T_{\theta}$ as
\begin{align} \label{T_tau_upperbound1}
T_{\theta} \leq & M_{\theta_1} \left| \theta_1^{(1)} - \theta_1^{(2)} \right|+ M_{\theta_2} \left| \theta_2^{(1)} - \theta_2^{(2)} \right|+ \cdots+ M_{\theta_L} \left| \theta_L^{(1)} - \theta_L^{(2)} \right|\nonumber\\
{\leq} & M_{\theta} \left( \left| \theta_1^{(1)} - \theta_1^{(2)} \right|+ \left| \tau_2^{(1)} - \tau_2^{(2)} \right|+ \cdots+ \left| \theta_L^{(1)} - \theta_L^{(2)} \right|\right)\nonumber\\
\overset{(a)}{\leq} & M_{\theta}\sqrt{L}\sqrt{\left( \theta_1^{(1)} - \theta_1^{(2)} \right)^2+ \left( \theta_2^{(1)} - \theta_2^{(2)} \right)^2+ \cdots+ \left( \theta_L^{(1)} - \theta_L^{(2)} \right)^2}\nonumber\\
=& \lambda_{\theta}\| \btheta^{(1)} - \btheta^{(2)}\|,
\end{align}
where $(a)$ follows from the inequality provided in Lemma \ref{QM_AM_inequality}, and $\lambda_{\theta}\triangleq M_{\theta}\sqrt{L}$.

Returning back to \eqref{fm_lip_def_triangle}, we can write
\begin{align} \label{fm_lip_def_triangle1}
\| f_m(\bx^{(1)}) - f_m(\bx^{(2)}) \| & \leq  \lambda_{\alpha}\| \balpha^{(1)} - \balpha^{(2)}\| +\lambda_{\tau}\| \boldsymbol{\tau}^{(1)} - \boldsymbol{\tau}^{(2)}\| + \lambda_{\theta}\| \btheta^{(1)} - \btheta^{(2)}\|\nonumber\\
& \leq  \lambda_{\alpha}\| \bx^{(1)} - \bx^{(2)}\| +\lambda_{\tau}\| \bx^{(1)} - \bx^{(2)}\| + \lambda_{\theta}\| \bx^{(1)} - \bx^{(2)}\| \nonumber\\
&=  \left(\lambda_{\alpha} + \lambda_{\tau}+ \lambda_{\theta} \right)\| \bx^{(1)} - \bx^{(2)}\| \triangleq  \lambda_m \| \bx^{(1)} - \bx^{(2)}\|.
\end{align}
 This implies that $f_m(\bx)$ is a Lipschitz-continuous function with respect to $\bx$. Now, according to Lemma \ref{lip_sum}, and defining $\lambda_f \triangleq \max \left\{ \lambda_1, \lambda_2, \cdots, \lambda_M\right\}$,  we conclude that $f(\bx) = \sum_{m=1}^{M} f_m(\bx)$ is a Lipschitz-continuous function, i.e.,
\begin{align} \label{f_lip}
\| f(\bx^{(1)}) - f(\bx^{(2)}) \| \leq \lambda_f \| \bx^{(1)} - \bx^{(2)}\|.
\end{align}

Note that the generator function $G_{{\cal W}_g}(\cdot)$ is approximated by a multi-layer neural network. It is shown that $G_{{\cal W}_g}(\bz)$ is a Lipschitz function \cite{10.5555/3327144.3327299, Gouk2021}. Let $\lambda_g$ denote the corresponding the Lipschitz constant. Exact estimation of $\lambda_g$ is beyond the scope of this paper. We now use Lemma \ref{lip_composition}, and conclude that $f(G_{{\cal W}_g}(\bz))$ is a $\lambda_f \lambda_g$-Lipschitz function. Therefore, we can write
\begin{align} \label{I_z_bounded_rewrite}
I_{\bz} \leq \lambda_f \lambda_g \| \bz^{(1)} - \bz^{(2)}\|.
\end{align}

\subsection{Bounds on $I_{\boldsymbol{\phi}}$}
Assuming ${\bz}^{(2)}$ is fixed, let $q({\boldsymbol{\phi}})\triangleq P({\bz}^{(2)}, {\boldsymbol{\phi}})$, we aim to show that there exists $\lambda_q$ such that
\begin{align} \label{I_phi_bounded}
I_{\boldsymbol{\phi}}=\| q({\boldsymbol{\phi}}^{(1)}) - q({\boldsymbol{\phi}}^{(2)})\| \leq  \lambda_q \| {\boldsymbol{\phi}}^{(1)} - {\boldsymbol{\phi}}^{(2)}\|.
\end{align}

Note that, for fixed $\bz^{(2)}$, or equivalently $\bx^{(2)}= G_{{\cal W}_g}\left(\bz^{(2)}\right)$, we can express $q({\boldsymbol{\phi}})= \triangledown_{\tilde{\bx}}J_{\rm up}\left(\bx^{(2)}, {\boldsymbol{\phi}}\right)$. We further express $q({\boldsymbol{\phi}}) = \sum_{m=1}^{M} q_m({\boldsymbol{\phi}})$, where $q_m({\boldsymbol{\phi}})$, for $\bx^{(2)}$, is given as
\begin{align} \label{q_m_def}
q_m({\boldsymbol{\phi}})\triangleq -2\left(y^{\rm up}_m - h_m(\bx^{(2)}, {\boldsymbol{\phi}})s_k\right)\triangledown_{\tilde{\bx}}h_m(\bx^{(2)}, {\boldsymbol{\phi}}).
\end{align}
We now show the Lipschitz-continuity of $q_m({\boldsymbol{\phi}})$. Using the triangle inequality of \eqref{fact_sum_upperbound}, we can write
\begin{align} \label{T_phi_upperbound}
\| q_m({\boldsymbol{\phi}}^{(1)}) - q_m({\boldsymbol{\phi}}^{(2)})\| & \leq \| q_m({\phi}_1^{(1)}, {\boldsymbol{\bar{\phi}}}_1) - q_m({\phi}_1^{(2)}, {\boldsymbol{\bar{\phi}}}_1)\| \nonumber\\
& + \| q_m({\phi}_2^{(1)}, {\boldsymbol{\bar{\phi}}}_2) - q_m({\phi}_2^{(2)}, {\boldsymbol{\bar{\phi}}}_2)\| \nonumber\\
& \qquad  \qquad \qquad \vdots
\nonumber\\
& + \| q_m({\phi}_L^{(1)}, {\boldsymbol{\bar{\phi}}}_L) - q_m({\phi}_L^{(2)}, {\boldsymbol{\bar{\phi}}}_L)\|,
\end{align}
where ${\boldsymbol{\bar{\phi}}}_l\triangleq \left({\phi}_1^{(2)}, \cdots, {\phi}_{l-1}^{(2)}, {\phi}_{l+1}^{(1)}, \cdots, {\phi}_L^{(1)}\right)$. Let $T_{\phi_l} \triangleq  \| q_m({\phi}_l^{(1)}, {\boldsymbol{\bar{\phi}}}_l) - q_m({\phi}_l^{(2)}, {\boldsymbol{\bar{\phi}}}_l)\| $. We show that there exists a bounded $M_{\phi_l}\geq 0 $ such that $T_{\phi_l}\leq M_{\phi_l} \left| \phi_l^{(1)} - \phi_l^{(2)} \right|$. To do so, let us write
\begin{alignat}{2} \label{T_phi_l_upperbound}
T_{\phi_l} = 2\| &\left(y^{\rm up}_m - h_m(\bx^{(2)}, {\phi}_l^{(1)}, {\boldsymbol{\bar{\phi}}}_l)s_k\right)\triangledown h_m(\bx^{(2)}, {\phi}_l^{(1)}, {\boldsymbol{\bar{\phi}}}_l) \nonumber\\
- &\left(y^{\rm up}_m - h_m(\bx^{(2)}, {\phi}_l^{(2)}, {\boldsymbol{\bar{\phi}}}_l)s_k\right)\triangledown h_m(\bx^{(2)}, {\phi}_l^{(2)}, {\boldsymbol{\bar{\phi}}}_l)\|\nonumber\\
\overset{(a)}{\leq} 2\| &\left(y^{\rm up}_m - h_m(\bx^{(2)}, {\phi}_l^{(1)}, {\boldsymbol{\bar{\phi}}}_l)s_k\right)\bar{\bv}_{\phi_l} - \left(y^{\rm up}_m - h_m(\bx^{(2)}, {\phi}_l^{(2)}, {\boldsymbol{\bar{\phi}}}_l)s_k\right)\bar{\bv}_{\phi_l}\| \nonumber\\
+2\| &\left(y^{\rm up}_m - h_m(\bx^{(2)}, {\phi}_l^{(1)}, {\boldsymbol{\bar{\phi}}}_l)s_k\right)\bv_{\phi_l}^{(1)} - \left(y^{\rm up}_m - h_m(\bx^{(2)}, {\phi}_l^{(2)}, {\boldsymbol{\bar{\phi}}}_l)s_k\right)\bv_{\phi_l}^{(2)}\| \nonumber\\
\overset{(b)}{=} 2\| & \bar{\bv}_{\phi_l}s_k \| \left| h_m(\bx^{(2)}, {\phi}_l^{(1)}, {\boldsymbol{\bar{\phi}}}_l)- h_m(\bx^{(2)}, {\phi}_l^{(2)}, {\boldsymbol{\bar{\phi}}}_l)\right|\nonumber\\
+ 2\| &\left(y^{\rm up}_m - h_m(\bx^{(2)}, {\phi}_l^{(1)}, {\boldsymbol{\bar{\phi}}}_l)s_k\right)e^{j \phi_l^{(1)}}\check{{\bv}}_{\phi_l} - \left(y^{\rm up}_m - h_m(\bx^{(2)}, {\phi}_l^{(2)}, {\boldsymbol{\bar{\phi}}}_l)s_k\right)e^{j \phi_l^{(2)}}\check{{\bv}}_{\phi_l}\| \nonumber\\
\overset{(c)}{\leq} 2\| & \bar{\bv}_{\phi_l}s_k \| \left| h_m(\bx^{(2)}, {\phi}_l^{(1)}, {\boldsymbol{\bar{\phi}}}_l)- h_m(\bx^{(2)}, {\phi}_l^{(2)}, {\boldsymbol{\bar{\phi}}}_l)\right|\nonumber\\
+ 2 \| &y^{\rm up}_m \check{{\bv}}_{\phi_l}\| | e^{j \phi_l^{(1)}}- e^{j\phi_l^{(2)}}| +2 \| \check{{\bv}}_{\phi_l}s_k \| \left| h_m(\bx^{(2)}, {\phi}_l^{(1)}, {\boldsymbol{\bar{\phi}}}_l)e^{j \phi_l^{(1)}} - h_m(\bx^{(2)}, {\phi}_l^{(2)}, {\boldsymbol{\bar{\phi}}}_l)e^{j\phi_l^{(2)}} \right|\nonumber\\
\overset{(d)}{\leq} 2\| & \bar{\bv}_{\phi_l}s_k \||{\alpha_l}^{(2)}| | e^{j\phi_l^{(1)}}- e^{j\phi_l^{(2)}}|\nonumber\\
+ 2 \| &y^{\rm up}_m \check{{\bv}}_{\phi_l}\| | e^{j\phi_l^{(1)}}- e^{j \phi_l^{(2)}}| +2 \| \check{{\bv}}_{\phi_l}s_k \|  \left(\left|\breve{\alpha}_l^{(2)}\right|\left| e^{j\phi_l^{(1)}}- e^{j\phi_l^{(2)}}\right| +  \left|\check{\alpha}_l^{(2)}\right|\left| e^{j2\phi_l^{(1)}}- e^{j2\phi_l^{(2)}}\right|\right)\nonumber\\
= 2 \Big( \| &\bar{\bv}_{\phi_l}s_k \||{\alpha_l}^{(2)}| +\| y^{\rm up}_m \check{{\bv}}_{\phi_l}\| + \| \check{{\bv}}_{\phi_l}s_k \| \left|\breve{\alpha}_l^{(2)}\right|\Big) \left|e^{j\phi_l^{(1)}}- e^{j\phi_l^{(2)}}\right|\nonumber\\
+ 2 \| &\check{{\bv}}_{\phi_l}s_k \| \left|{\breve{\alpha}}_l^{(2)}\right| \left|e^{j2\phi_l^{(1)}}- e^{j2\phi_l^{(2)}}\right|\nonumber\\
 \overset{(e)}{\leq} 2 \Big( \| &\bar{\bv}_{\phi_l}s_k \||{\alpha_l}^{(2)}| +\| y^{\rm up}_m \check{{\bv}}_{\phi_l}\| + \| \check{{\bv}}_{\phi_l}s_k \| \left|\breve{\alpha}_l^{(2)}\right|\Big) \left|{\phi_l^{(1)}}- {\phi_l^{(2)}}\right| + 4 \| \check{{\bv}}_{\phi_l}s_k \| \left|{\breve{\alpha}}_l^{(2)}\right| \left|{\phi_l^{(1)}}- {\phi_l^{(2)}}\right|\nonumber\\
 \triangleq M_{\phi_l} &\left|{\phi_l^{(1)}}- {\phi_l^{(2)}}\right|,
\end{alignat}
where the inequality in $(a)$ follows from Lemma \ref{norm_subvec_inequality}, ${\bv}_{\phi_l}^{(i)}$, $i=1,2$, is a vector that captures those entries of $\triangledown h_m(\bx^{(2)}, {\phi}_l^{(i)}, {\boldsymbol{\bar{\phi}}}_l)$ which are function of $\phi_l^{(i)}$, and $\bar{\bv}_{\phi_l}$ is a vector that captures those entries of $\triangledown h_m(\bx^{(2)}, {\phi}_l^{(i)}, {\boldsymbol{\bar{\phi}}}_l)$ which are independent of $\phi_l$. The second term in $(b)$ follows from ${\bv}_{\phi_l}^{(i)} =  e^{j\phi_l^{(i)}}\check{{\bv}}_{\phi_l}\triangleq e^{j\phi_l^{(i)}} e^{j(\beta\tau_l^{(2)}+\gamma_m \sin \theta_l^{(2)})} \left[   1\;\; j\beta\alpha^{(2)}\;\; j\gamma_m \cos \theta_l^{(2)}\;\; j\alpha^{(2)}\right]^T $. The inequality in $(c)$ follows from the fact that $\|\ba-\bb\|\leq \|\ba\|+\|\bb\|$ for $\ba$ and $\bb$ to be arbitrary vectors. In $(d)$, we use the following inequalities:\begin{align} \label{fact1_phi}
\left| h_m(\bx^{(2)}, {\phi}_l^{(1)}, {\boldsymbol{\bar{\phi}}}_l)- h_m(\bx^{(2)}, {\phi}_l^{(2)}, {\boldsymbol{\bar{\phi}}}_l)\right| &= \left|{\alpha_l}^{(2)}e^{j(\beta \tau_l^{(2)}+\gamma_m \sin \theta_l^{(2)})}\left(e^{j\phi_l^{(1)}}- e^{j\phi_l^{(2)}}\right)\right| \nonumber\\
&\leq \left|{\alpha_l}^{(2)}\right| \left| e^{j\phi_l^{(1)}}- e^{j\phi_l^{(2)}}\right|,
\end{align}
\begin{align} \label{fact2_phi}
&\left| h_m(\bx^{(2)}, {\phi}_l^{(1)}, {\boldsymbol{\bar{\phi}}}_l)e^{j \phi_l^{(1)}} - h_m(\bx^{(2)}, {\phi}_l^{(2)}, {\boldsymbol{\bar{\phi}}}_l)e^{j\phi_l^{(2)}} \right| \leq \left|\breve{\alpha}_l^{(2)}\right|\left| e^{j\phi_l^{(1)}}- e^{j\phi_l^{(2)}}\right| +  \left|\check{\alpha}_l^{(2)}\right|\left| e^{j2\phi_l^{(1)}}- e^{j2\phi_l^{(2)}}\right|,
\end{align}
where $\breve{\alpha}_l^{(2)} \triangleq \sum_{i \neq l}\alpha_i^{(2)}e^{j(\phi_i^{(t)} +\beta \tau_i^{(2)} +\gamma_m\sin \theta_i^{(2)})}$, $t=2$ for $i<l$ and $t=1$, for $i>l$, and $\check{\alpha}_l^{(2)} \triangleq  \alpha_l^{(2)}e^{j( \beta \tau_l^{(2)} +\gamma_m\sin \theta_l^{(2)})} $. The inequality $(e)$ follows from \eqref{fact3_tau}.

Since $\left\{\alpha_l\right\}_{l=1}^L$ are bounded, we conclude that $M_{\phi_l} \geq 0$ and it is bounded. Now, defining, $M_{\phi}\triangleq \max\left\{M_{\phi_1}, M_{\phi_2}, \cdots, M_{\phi_L}\right\}$, we can write
\begin{align} \label{T_tau_upperbound1}
\| q_m({\boldsymbol{\phi}}^{(1)}) - q_m({\boldsymbol{\phi}}^{(2)})\| \leq & M_{\phi_1} \left| \phi_1^{(1)} - \phi_1^{(2)} \right|+ M_{\phi_2} \left| \phi_2^{(1)} - \phi_2^{(2)} \right|+ \cdots+ M_{\phi_L} \left| \phi_L^{(1)} - \phi_L^{(2)} \right|\nonumber\\
{\leq} & M_{\phi} \left( \left| \phi_1^{(1)} - \phi_1^{(2)} \right|+ \left| \phi_2^{(1)} - \phi_2^{(2)} \right|+ \cdots+ \left| \theta_L^{(1)} - \theta_L^{(2)} \right|\right)\nonumber\\
\overset{(a)}{\leq} & M_{\phi}\sqrt{L}\sqrt{\left( \phi_1^{(1)} - \phi_1^{(2)} \right)^2+ \left( \phi_2^{(1)} - \phi_2^{(2)} \right)^2+ \cdots+ \left( \phi_L^{(1)} - \phi_L^{(2)} \right)^2}\nonumber\\
=& \kappa_m\| {\boldsymbol{\phi}}^{(1)} - {\boldsymbol{\phi}}^{(2)}\|,
\end{align}
where $(a)$ follows from the inequality provided in Lemma \ref{QM_AM_inequality}, and $ \kappa_m \triangleq M_{\phi}\sqrt{L}$. Now, according to Lemma \ref{lip_sum}, and defining $\lambda_q \triangleq \max \left\{  \kappa_1,  \kappa_2 \cdots,  \kappa_M\right\}$,  we conclude that $q({\boldsymbol{\phi}}) = \sum_{m=1}^{M} q_m({\boldsymbol{\phi}})$ is a Lipschitz-continuous function, i.e.,
\begin{align} \label{I_phi_bounded_rewrite}
I_{\boldsymbol{\phi}}=\| q({\boldsymbol{\phi}}^{(1)}) - q({\boldsymbol{\phi}}^{(2)})\| \leq  \lambda_q \| {\boldsymbol{\phi}}^{(1)} - {\boldsymbol{\phi}}^{(2)}\|.
\end{align}

Now, using \eqref{P_lip_intro_split}, along with \eqref{I_z_bounded_rewrite} and \eqref{I_phi_bounded_rewrite}, we obtain
\begin{align} \label{P_lip_intro_split_rewrite}
\| P({\bz}^{(1)}, {\boldsymbol{\phi}^{(1)}}) - P({\bz}^{(2)}, {\boldsymbol{\phi}^{(2)}}) \| &\leq I_{\bz} +I_{\boldsymbol{\phi}}\leq    \lambda_f \lambda_g \| \bz^{(1)} - \bz^{(2)}\|+ \lambda_q \| {\boldsymbol{\phi}}^{(1)} - {\boldsymbol{\phi}}^{(2)}\|\nonumber\\
&\leq  \tilde{\lambda}_p \left( \| \bz^{(1)} - \bz^{(2)}\|+ \| {\boldsymbol{\phi}}^{(1)} - {\boldsymbol{\phi}}^{(2)}\|\right) \nonumber\\
&\leq  \tilde{\lambda}_p \left( \|\tilde{ \bz}^{(1)} - \tilde{\bz}^{(2)}\|+ \| \tilde{ \bz}^{(1)} - \tilde{ \bz}^{(2)}\|\right)\nonumber\\
&=  {\lambda}_p \left( \|\tilde{ \bz}^{(1)} - \tilde{\bz}^{(2)}\|\right),
\end{align}
where $\tilde{\lambda}_p \triangleq \max \left\{  \lambda_f \lambda_g,   \lambda_q \right\}$, and ${\lambda}_p \triangleq 2 \tilde{\lambda}_p$. This now completes the proof.
\end{proof}
\end{appendices}

\begin{multicols}{2}
\begin{IEEEbiography}[{\includegraphics[width=1in,height=1.25in,clip, keepaspectratio]{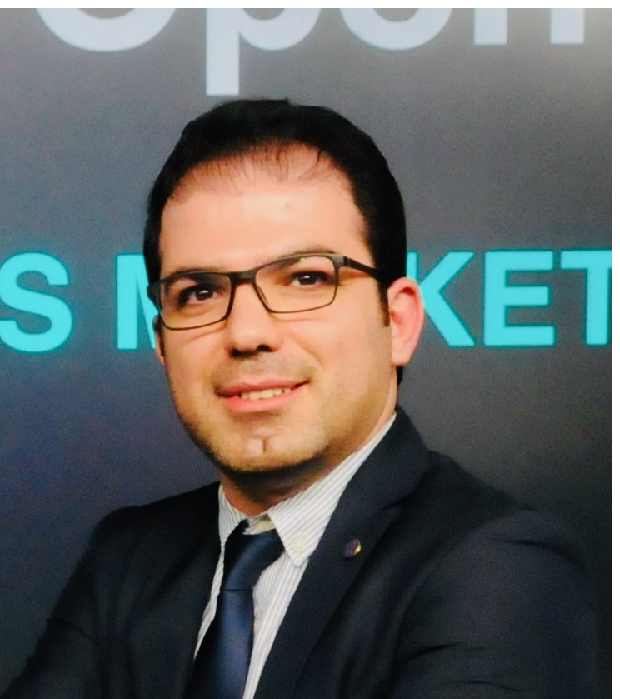}}]{Javad Mirzaei}
(M'18) was born in Tehran, Iran. He received the B.Sc. degree from Iran University of Science and Technology (IUST), Tehran, Iran, in 2010, the M.Sc. degree from Ontario Tech University, ON, Canada, in 2013, and the PhD degree from the University of Toronto, ON, Canada, in 2021, all in electrical and computer engineering. He is now a Post-Doctoral Fellow with the University of Toronto. From Feb. 2014 to Sept. 2015, he was with the Cable Shoppe Inc., Toronto, ON, Canada, where he was working in the area of broadband communication systems and IP-based 4G systems. His main research interests include wireless communication, signal processing, and machine learning.
\end{IEEEbiography}

\begin{IEEEbiography}[{\includegraphics[width=1in,height=1.25in,clip,keepaspectratio]{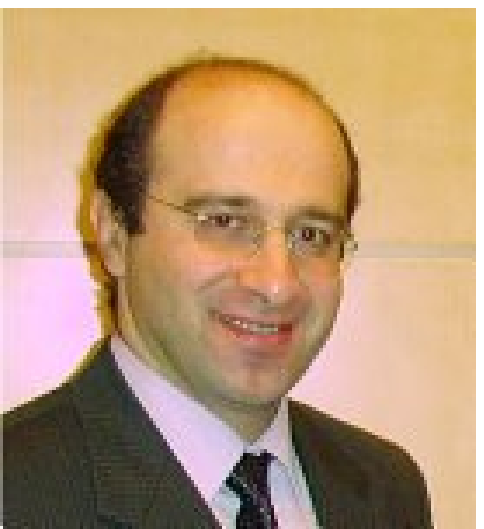}}]
{Shahram Shahbazpanahi} (M'02, SM'10) was born in Sanandaj, Kurdistan, Iran. He received the B.Sc., M.Sc., and Ph.D. degrees in electrical engineering from Sharif University of Technology, Tehran, Iran, in 1992, 1994, and 2001, respectively. From September 1994 to September 1996, he was an instructor   with the Department of Electrical Engineering, Razi University, Kermanshah, Iran. From July 2001 to March 2003, he was a Postdoctoral Fellow with the Department of Electrical and Computer Engineering, McMaster University, Hamilton, ON, Canada. From April 2003 to September 2004, he was a Visiting Researcher with the Department of Communication Systems, University of Duisburg-Essen, Duisburg, Germany. From September 2004 to April 2005, he was a Lecturer and Adjunct Professor with the Department of Electrical and Computer Engineering, McMaster University. In July 2005, he joined the Faculty of Engineering and Applied Science, University of Ontario Institute of Technology, Oshawa, ON, Canada, where he currently holds a  Professor position. His research interests include statistical and array signal processing; space-time adaptive processing; detection and estimation; multi-antenna, multi-user, and cooperative communications; spread spectrum techniques; DSP programming; and hardware/real-time software design for telecommunication systems. Dr. Shahbazpanahi has served as an Associate Editor for the IEEE TRANSACTIONS ON SIGNAL PROCESSING and the IEEE SIGNAL PROCESSING LETTERS. He has also served as a Senior Area Editor for the IEEE SIGNAL PROCESSING LETTERS. He was an elected  member of the Sensor Array and Multichannel (SAM) Technical Committee of the IEEE Signal Processing Society. He has received several awards, including the Early Researcher Award from Ontario's Ministry of Research and Innovation, the NSERC Discovery Grant (three awards), the Research Excellence Award from the Faculty of Engineering and Applied Science, the University of Ontario Institute of Technology, and the Research Excellence Award, Early Stage, from the University of Ontario Institute of Technology.
\end{IEEEbiography}

\begin{IEEEbiography}[{\includegraphics[width=1in,height=1.25in,clip,keepaspectratio]{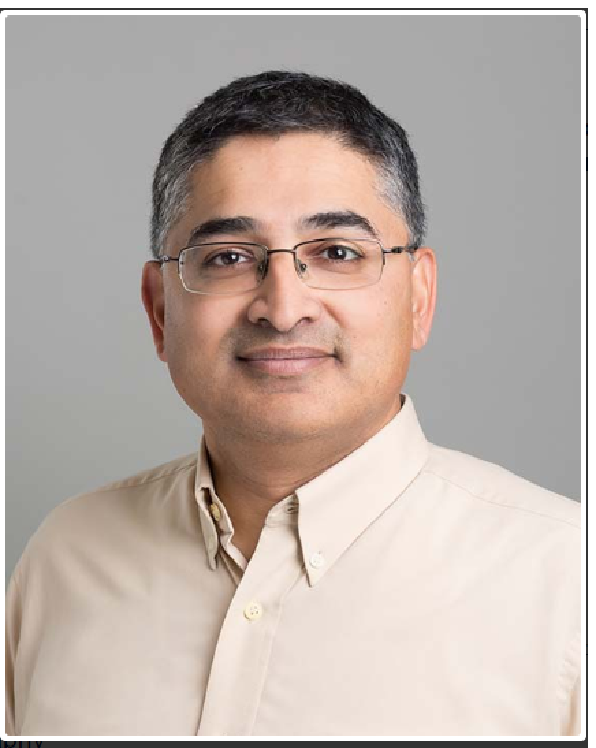}}]{Raviraj S. Adve} (S'88, M'97, SM'06, F'17) was born in Bombay, India. He received his B. Tech. in Electrical Engineering from IIT, Bombay, in 1990 and his Ph.D. from Syracuse University in 1996, His thesis received the Syracuse University Outstanding Dissertation Award. Between 1997 and August 2000, he worked for Research Associates for Defense Conversion Inc. on contract with the Air Force Research Laboratory at Rome, NY. He joined the faculty at the University of Toronto in August 2000 where he is currently a Professor. Dr. Adve’s research interests include analysis and design techniques for cooperative and heterogeneous networks, energy harvesting networks and in signal processing techniques for radar and sonar systems. He received the 2009 Fred Nathanson Young Radar Engineer of the Year award. Dr. Adve is a Fellow of the IEEE.
\end{IEEEbiography}
\begin{IEEEbiography}[{\includegraphics[width=1in,height=1.25in,clip,keepaspectratio]{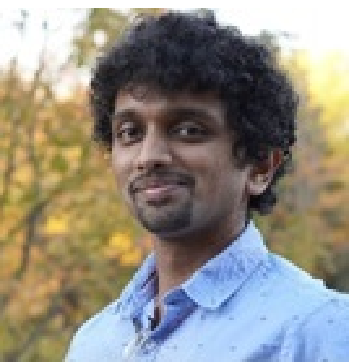}}]{Navaneetha Krishna Madan Gopal} was born in Bangalore, India. He received his Bachelor of Engineering (B.E.) in Electronics and Communication Engineering from M S Ramaiah Institute of Tecchnology, Bangalore, India in 2018 and his Master of Engineering (M. Eng.) from the University of Toronto, Canada in 2020. During his Masters’ degree, he worked as a Research Assistant under Prof. Raviraj S. Adve, developing an algorithm to achieve FDD massive MIMO with minimal feedback. He now works at TELUS Communications in Toronto as a part of the spectrum team and represents the company in 3GPP standards body meetings.
\end{IEEEbiography}

\end{multicols}

\end{document}